\documentclass[letterpaper,10pt]{article}
\usepackage[textwidth=11.5cm, textheight=22cm, top=2.0cm, centering]{geometry}
\usepackage{setspace}
\setstretch{1.0}
\usepackage[T1]{fontenc}
\usepackage{lmodern}
\usepackage[english]{babel}
\usepackage{microtype} 
\makeatletter
\renewcommand{\texttt}[1]{{\small\ttfamily #1}}
\makeatother
\usepackage{csquotes}
\sloppy
\hfuzz=2pt
\usepackage{etoolbox}
\usepackage{bbding}
\usepackage{changepage}
\usepackage{booktabs}
\usepackage{graphicx}
\usepackage[x11names, dvipsnames]{xcolor}
\definecolor{Linkz}{RGB}{30, 110, 170}
\definecolor{Darkenta}{RGB}{185, 35, 90}
\definecolor{Lightenta}{RGB}{254, 232, 255}
\definecolor{Reference}{RGB}{35, 180, 90}
\definecolor{Periwinkle}{RGB}{102, 51, 255}
\definecolor{Greeno}{RGB}{0, 140, 100}
\definecolor{Leeno}{RGB}{239, 255, 232}
\usepackage{amsmath, amsthm, amscd, amssymb, stmaryrd}
\usepackage{stackengine}
\usepackage[square]{natbib}

\usepackage[
colorlinks=true,
linkcolor=Darkenta,
citecolor=Greeno,
urlcolor=Darkenta,
]{hyperref}
\usepackage{titlesec} 
\usepackage{abstract} 
\usepackage{enumitem} 
\usepackage{ccicons} 
\usepackage{float}
\usepackage{placeins} 
\usepackage{tikz} 
\usepackage{tikz-cd}
\usetikzlibrary{arrows.meta} 
\usepackage{adjustbox} 
\usepackage{pgfmath} 
\usepackage{ifthen} 
\newtheoremstyle{upright}
{6pt plus 2pt minus 2pt} 
{6pt plus 2pt minus 2pt} 
{\normalfont} 
{} 
{\bfseries} 
{.} 
{.5em} 
{} 
\theoremstyle{upright}
\theoremstyle{upright}
\newtheorem{theorem}{Theorem}[section]
\newtheorem*{exposition}{\normalfont\textsl{Exposition}}
\newtheorem*{remark}{\normalfont\textsl{Remark}}

\newtheorem{thesis}[theorem]{Thesis}
\newtheorem{definition}[theorem]{Definition}
\newtheorem{proposition}[theorem]{Proposition}
\newtheorem{lemma}[theorem]{Lemma}

\newtheorem{logic}[theorem]{Logic}
\newtheorem{corollary}[theorem]{Corollary}
\newtheorem{example}[theorem]{Example}

\newtheorem*{observation}{\normalfont\textsl{Observation}}
\newtheorem{criterion}[theorem]{Criterion}
\newtheorem{problem}[theorem]{Problem}
\makeatletter
\renewenvironment{proof}[1][Proof]{%
	\par\pushQED{\qed}%
	\normalfont
	\topsep6\p@\@plus6\p@\relax
	\trivlist
	\item[\hskip\labelsep\slshape #1\@addpunct{.}]%
}{%
	\popQED\endtrivlist\@endpefalse
}

\makeatother

\usepackage{algpseudocode}
\usepackage[most]{tcolorbox}
\usepackage{listings}
\newtcolorbox{breakbox}[2][]{%
	breakable,
	title={#2},
	fonttitle=\bfseries,
	colback=white,
	colframe=black!60,
	coltitle=black,
	colbacktitle=white,
	boxrule=0.4pt,
	arc=0pt,
	boxsep=7pt,
	left=3pt,
	right=2pt,
	top=2pt,
	bottom=2pt,
	fontupper=\small\sffamily, 
	#1
}

\AtBeginEnvironment{quotation}{\sffamily}
\renewenvironment{quotation}
{\small\vspace{0.5em}\begin{adjustwidth}{5em}{5em}%
		\centering
		\setlength{\parindent}{0pt}%
		\setlength{\parskip}{\medskipamount}%
	}
	{\end{adjustwidth}\vspace{1em}}
\newcommand{\customsectionstyle}[2]{%
	\titleformat{\section}[block]
	{\normalfont\fontsize{#1}{1.2\dimexpr#1\relax}\selectfont\centering}
	{\thesection}{1em}%
	{%
		\ifthenelse{\equal{#2}{true}}{\MakeUppercase}{\relax}%
	}%
}
\newcommand{\customsectionspacing}[3]{%
	\titlespacing*{\section}{#1}{#2}{#3}%
}
\newcommand{\customsubsectionstyle}[2]{%
	\titleformat{\subsection}[block]
	{\normalfont\fontsize{#1}{1.2\dimexpr#1\relax}\selectfont\centering}
	{\thesubsection}{1em}%
	{%
		\ifthenelse{\equal{#2}{true}}{\MakeUppercase}{\relax}%
	}%
}
\newcommand{\customsubsectionspacing}[3]{%
	\titlespacing*{\subsection}{#1}{#2}{#3}%
}
\customsectionstyle{14pt}{true}
\customsectionspacing{0pt}{3.5ex plus 1ex minus 0.2ex}{2.5ex}
\customsubsectionstyle{11pt}{true}
\customsubsectionspacing{0pt}{4ex plus 1ex minus 0.2ex}{2ex}
\usepackage{caption}

\captionsetup{
	font={footnotesize},
	labelfont={bf},
	textfont={rm},
	labelsep=period
}
\makeatletter
\newcommand{\shorttitle}[1]{\def\@shorttitle{#1}}
\newcommand{\email}[1]{\def\@email{#1}}
\newcommand{\metadata}[1]{\def\@metadata{#1}}
\renewcommand{\maketitle}{%
	\begin{center}
		\vfill
		{\fontsize{24pt}{21pt}\selectfont \@title \par}
		\vspace{1em}
		{\normalsize \@author \par}
		\vspace{0.1em}
		{\normalsize \@date \par}
	\end{center}
}
\makeatother



\title{THE SOLVER'S PARADOX IN FORMAL PROBLEM SPACES}
\author{Milan Rosko}
\date{November 2025}

\begin{document}

\maketitle

\begin{center}\scriptsize{

		ORCID: \href{https://orcid.org/0009-0003-1363-7158}{\scriptsize\textsf{0009-0003-1363-7158}}

		Email: \href{mailto:Q1012878@studium.fernuni-hagen.de}{\scriptsize\textsf{Q1012878 $ @ $ studium.fernuni-hagen.de}}

}
\end{center}
\begin{abstract}
	\vspace{-1em}
	This paper investigates how global decision problems over arithmetically represented domains acquire reflective structure through class-quantification. Arithmetization forces diagonal fixed points whose verification requires reflection beyond finitary means, producing Feferman-style obstructions independent of computational technique. We use this mechanism to analyze uniform complexity statements—including $\mathsf{P}$ vs. $\mathsf{NP}$—showing that their difficulty stems from structural impredicativity rather than methodological limitations. The focus is not on deriving separations but on clarifying the logical status of such arithmetized assertions.

	\vspace{0em}
\end{abstract}

\begin{center}\scriptsize{
		\textbf{Keywords:} Proof Theory, Type Theory

		Recursive Arithmetic, Complexity, SAT

}
\end{center}

\section{Introduction}
\label{sec:introduction}

\begin{exposition}
	Classical and constructive frameworks alike treat mathematical problems as objects formulated within a given formal system, analyzed by finitary methods internal to that system. This separation between \emph{problem} and \emph{analysis} is typically regarded as benign: a problem is specified syntactically, its semantics is understood relative to a background theory, and its resolution is sought by familiar deductive or computational means. Yet this division breaks down for a broad class of global decision problems—most notably, problems that quantify over a space of all problems or all computational procedures: the \textsc{Complexity Zoo}.

\begin{figure}[H]
	\centering
	\includegraphics[width=1\textwidth]{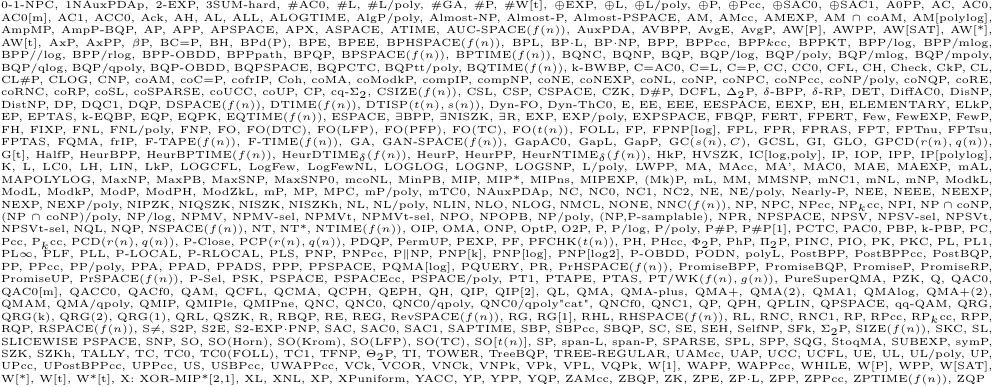}
	\caption{Illustration: An evolving map of complexity classes, illustrating the expanding structure of the \textsc{Complexity Zoo}, a project of \citet{complexityzoo}.}
	\label{fig:wheel}
\end{figure}

	The central observation of this paper is that once a problem space is represented arithmetically, e.g., via \citet{goedel31}, quantification over that space necessarily ranges over the representational environment in which the quantification itself is encoded. Consequently, any global assertion of the form
	\begin{equation}
		\forall P\in\mathcal{C}\;\Phi(P)
		\quad\text{or equivalently}\quad
		\forall e\;\Phi(e),
	\end{equation}
	where $e$ is the code of a problem in $\mathcal{C}$, automatically acquires a self-referential instance. By the \textsc{Diagonal Lemma} \citep{boolos}, there exists a problem $P^\ast$ whose code $\ulcorner P^\ast\urcorner$ satisfies
	\begin{equation}
		P^\ast \;\leftrightarrow\; \Phi(\ulcorner P^\ast\urcorner),
	\end{equation}
	yielding a \emph{bona fide} fixed point internal to the encoded problem space.

	The consequences are immediate: global statements over a representable problem class cannot remain external to the syntactics in which it is formalized: it must contain, among its instances, a case that reflects the behavior of the system upon itself. Thus, the act of formalizing such problems transports \textsc{Impredicativity} \citep{russell27, girard89} from the object-level formulation into the meta-level verification conditions. This transport mechanism produces reflective obstructions of the kind encountered in \textsc{Gödel’s Incompleteness Theorems}, analysis of local reflection \citep{feferman68,feferman91,kleene52}, and the modal semantics of provability logic \citep{loeb}.

	The \textsc{Solver’s Paradox} (Lemma \ref{lem:solvers}) provides a minimal \enquote{model} of this phenomenon. A \enquote{total classifier} attempting to assign determinate statuses to all problems in its domain generates a self-referential instance whose correct classification requires the system to affirm its own soundness on that very problem. Any such attempt imports a $\Pi^0_2$ reflection condition and produces an obstruction to verification: solving the problem requires settling a meta-problem that cannot be resolved within the same finitary framework.

	This paper develops a unified account of these mechanisms and applies them to several global decision problems, with particular emphasis on the formalization of $\mathsf{P}=\mathsf{NP}$. The analysis shows that the arithmetized statement of the conjecture inherits reflective structure through the uniformity clauses implicit in its definition. The resulting fixed-point instances exhibit the same verification barriers found in local \textsc{Reflection Principles}, suggesting that the long-standing intractability of such problems may reflect structural \enquote{ill-posedness} rather than limitations of existing proof techniques. We articulate the following transport principle:
\begin{quotation}
	Whenever a \emph{problem} quantifies over an arithmetically represented \emph{problem space}, it necessarily inherits the capacity to refer about its representation, thereby inducing fixed points and reflective obstructions.
\end{quotation}
	The Roadmap: This principle, once made explicit, allows for a systematic analysis of global problems whose epistemic status depends on their interaction with the expressive resources of the systems in which they are stated. The remainder of the paper develops this framework in detail. Section~\ref{sec:reflective-foundations} introduces a minimal reflective model via the \textsc{Solver’s Paradox}. Section~\ref{sec:fixed} formalizes the \enquote{transport} of \textsc{Impredicativity} through class-quantification and \textsc{Gödel Numbers}, establishing the central \emph{fixed point theorem}. Section~\ref{sec:class-nclass} applies this analysis to the arithmetized formulation of $\mathsf{P}=\mathsf{NP}$, exhibiting the resulting verification collapse. Section~\ref{sec:illposed} examines the epistemic implications for global decision problems, culminating in a classification of fixed-point sinks as boundary objects for \enquote{problems} in Section~\ref{sec:sink} and \ref{sec:conclusion}.
\end{exposition}

\section{The Solver’s Paradox}
\label{sec:reflective-foundations}

\subsection{Impredicativity Transport in SAT}\label{sec:universality}
\begin{exposition}
	Increasing the expressive resources available to a representable problem class does not alleviate the impredicativity transported by universality; it amplifies it. The reason is structural: as soon as the evaluator for a class is arithmetically definable, the class already contains the capacity to internalize meta-level constraints. Extending the language—adding parameters, quantifiers, or higher-order encodings—enlarges the representational envelope within which diagonalization operates. The fixed-point mechanism therefore acquires additional modes of re-entry rather than being neutralized. We illustrate the point with a minimal satisfiability instance. SAT serves here not only as the canonical complexity-theoretic benchmark, see \citet{cook1971,karp72,schaefer78,allander2009}, but also as a recursively enumerable language whose definability already suffices for the internal deployment of diagonal methods. In this setting, even the simplest encodings exhibit how universality reintroduces the very impredicativity one might hope to avoid.
\end{exposition}
\begin{theorem}[Universality Forced Impredicativity]\label{thm:internal-misclass}
	Let $\mathsf{SAT}$ denote the class of propositional satisfiability instances, encoded by a primitive–recursive bijection with $\mathbb{N}$. Let $S$ be a total computable function intended to decide $\mathsf{SAT}$, and
	let
	\begin{equation}
		C_S(\varphi)\;\equiv\;
		\text{``$S$ correctly classifies the instance $\varphi$''}.
	\end{equation}
	Assume:
	\begin{enumerate}[label=(\roman*)]
		\item \emph{Universality:} for every Turing machine $M$ and every input $x$ there exists a formula $\mathsf{SAT}(M,x)$ such that
		\begin{equation}
			M(x)\text{ accepts}
			\;\Longleftrightarrow\;
			\mathsf{SAT}(M,x)\in\mathsf{SAT}.
		\end{equation}
		\item \emph{Representability:} the predicate
		$\mathsf{Out}_S(w,b)$ expressing
		\enquote{$S$ outputs $b$ on input $w$} is arithmetically definable.
		\item \emph{Correctness Encoding:} for every instance $\varphi$, the
		statement $C_S(\varphi)$ is uniformly reducible to a SAT formula
		$\varphi_{S,\varphi}$.
	\end{enumerate}
	Then there exists a SAT instance $\psi_S$ such that $S$ misclassifies $\psi_S$.
\end{theorem}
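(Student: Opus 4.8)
The plan is to carry out the diagonalization foreshadowed in the introduction: manufacture a SAT instance that, through the encoding, asserts ``$S$ rejects me,'' so that $S$'s verdict on it is necessarily self-refuting. First I would invoke Universality (i) together with the $s$-$m$-$n$ theorem to fix a primitive-recursive map $e \mapsto \varphi_e$, where $\varphi_e := \mathsf{SAT}(M_e,0)$ is the instance that (i) attaches to the computation $M_e(0)$ (in a fixed effective enumeration $(M_e)_e$ of Turing machines); by the faithfulness clause of (i), $\varphi_e \in \mathsf{SAT}$ iff $M_e(0)$ accepts. Next I would define, via the Kleene recursion theorem, a Turing machine $D$ with index $d$ that on input $0$ computes $\varphi_d$, runs $S$ on it — this halts because $S$ is total — and accepts iff $S(\varphi_d)$ reports ``unsatisfiable,'' rejecting otherwise. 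Put $\psi_S := \varphi_d$.

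The verification then chains three equivalences. By faithfulness of the encoding, $\psi_S \in \mathsf{SAT}$ iff $D(0)$ accepts; by construction of $D$, $D(0)$ accepts iff $S(\psi_S)$ outputs ``unsatisfiable''; hence $\psi_S \in \mathsf{SAT}$ iff $S$ classifies $\psi_S$ as unsatisfiable. A two-case split now closes it: if $S(\psi_S) = \text{``satisfiable,''}$ the equivalence forces $\psi_S \notin \mathsf{SAT}$; if $S(\psi_S) = \text{``unsatisfiable,''}$ it forces $\psi_S \in \mathsf{SAT}$; either way $C_S(\psi_S)$ is false, so $S$ misclassifies $\psi_S$. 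I would also record the roles of the remaining hypotheses: Representability (ii) makes ``$S$ rejects $w$'' an arithmetical condition, so the same fixed point can be obtained internally by applying the \textsc{Diagonal Lemma} to $\Phi(x) :\equiv \mathsf{Out}_S(x,0)$ and reading off $\psi_S \leftrightarrow \Phi(\ulcorner\psi_S\urcorner)$; Correctness Encoding (iii) is what places the verdict $C_S(\psi_S)$ itself inside $\mathsf{SAT}$, turning what would be a merely meta-theoretic mismatch into an obstruction living in the object language — which is the sense in which universality \emph{forces} the impredicativity.

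The step I expect to be the genuine obstacle is not the diagonalization but establishing that the self-referential loop is well-founded: one must check that $D$'s appeal to its own index, the totality of $S$, and the faithfulness clause of (i) are jointly coherent once composed, so that $\psi_S$ is a single fixed finite instance rather than an infinite regress, and — crucially — that ``faithfulness'' is exactly strong enough to transmit the contradiction without tacitly presupposing that $S$ already errs somewhere. Getting this calibration right, rather than gesturing at it, is where hypotheses (i)--(iii) have to do real work, and the write-up should make that dependence explicit.
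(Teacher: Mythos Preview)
Your argument is correct. It is the same diagonal maneuver as the paper's proof (via Lemma~\ref{lemma:intdiag}) but with two differences worth recording. First, the paper applies the arithmetic \textsc{Diagonal Lemma} to $\theta(x)\equiv\neg C_S(x)$, so hypotheses~(ii) and~(iii) sit at the center of the construction---they are what make $\neg C_S$ arithmetically definable and then re-encodable as a SAT formula---whereas you run Kleene's recursion theorem at the machine level, using only~(i) and the totality of $S$ for the core contradiction and correctly relegating~(ii)--(iii) to the closing remark about internalization. Second, your target predicate is sharper: from $\psi_S\in\mathsf{SAT}\iff S(\psi_S)=\text{``unsatisfiable''}$ each value of $S(\psi_S)$ immediately contradicts the corresponding membership fact, while the paper's fixed point $\psi_S\leftrightarrow\neg C_S(\psi_S)$ admits the consistent reading $\psi_S\notin\mathsf{SAT}$ together with $S(\psi_S)=\mathsf{UNSAT}$, so its case split leans on a little more than is spelled out. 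What the paper's route buys is that the argument lives entirely inside arithmetized syntax, which is the framing the surrounding sections want for the impredicativity-transport thesis; your route is more elementary and self-contained. Your closing worry about well-foundedness is good hygiene but not a real obstacle: the recursion theorem fixes $d$ before $D$ ever runs, totality of $S$ makes $D(0)$ halt, and $\psi_S$ is therefore a single finite formula with no regress.
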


\begin{lemma}[Internal Diagonalization]\label{lemma:intdiag}
	Under conditions (i)–(iii), there exists a SAT instance
	$\psi_S$ satisfying
	\begin{equation}
		\psi_S\;\leftrightarrow\;\neg C_S(\psi_S).
	\end{equation}
\end{lemma}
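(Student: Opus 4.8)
The plan is to realize Lemma~\ref{lemma:intdiag} as an internal instance of the \textsc{Diagonal Lemma}, with conditions (i)--(iii) supplying precisely the machinery that forces the fixed point to land \emph{inside} $\mathsf{SAT}$ rather than merely inside the ambient arithmetic. Throughout I would identify a SAT instance $\varphi$ with its code $\ulcorner\varphi\urcorner\in\mathbb{N}$ under the given primitive--recursive bijection, writing $\varphi_n$ for the instance with code $n$. First I would observe that the predicate $R(n)\equiv\neg C_S(\varphi_n)$ is decidable: $S$ is total computable, membership in $\mathsf{SAT}$ is a decidable relation of $n$, and $C_S(\varphi_n)$ merely asserts that $S$'s output on $\varphi_n$ agrees with the truth value of ``$\varphi_n\in\mathsf{SAT}$''. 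By (ii) all of this is arithmetically definable, and being the comparison of two total computable functions it is in fact primitive recursive; hence there is a Turing machine $M_S$ with $M_S(n)\text{ accepts}\iff R(n)\iff\neg C_S(\varphi_n)$.

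Next I would re-internalise this via universality. Applying (i) to $M_S$ yields, for every $n$, a formula $\mathsf{SAT}(M_S,n)$ with $\mathsf{SAT}(M_S,n)\in\mathsf{SAT}\iff M_S(n)\text{ accepts}\iff\neg C_S(\varphi_n)$. Since the construction of $\mathsf{SAT}(M,x)$ in (i) is uniform and the bijection is primitive recursive, the map $d(n):=\ulcorner\mathsf{SAT}(M_S,n)\urcorner$ is total computable. It is here, and not in (iii) alone, that decidability of $R$ is doing essential work: clause (iii) packages the \emph{positive} condition $C_S(\varphi)$ as a SAT formula, but it is universality (i) that permits the \emph{negation} $\neg C_S(\varphi_n)$ to be re-presented as a genuine membership statement $\varphi_{d(n)}\in\mathsf{SAT}$; condition (ii) is invoked only to certify that $R$ and the reduction $d$ live at the claimed recursive level.

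Finally I would apply the fixed-point theorem --- Kleene's recursion theorem, equivalently the \textsc{Diagonal Lemma} of \citep{boolos} transported along the bijection --- to the total computable function $d$, obtaining a code $n^\ast$ for which $\varphi_{n^\ast}$ and $\mathsf{SAT}(M_S,n^\ast)=\varphi_{d(n^\ast)}$ define the same membership condition, i.e.\ $\varphi_{n^\ast}\in\mathsf{SAT}\iff\varphi_{d(n^\ast)}\in\mathsf{SAT}$. Setting $\psi_S:=\varphi_{n^\ast}$ and chaining the equivalences from the previous two steps gives $\psi_S\in\mathsf{SAT}\iff\neg C_S(\psi_S)$, which is exactly the asserted biconditional $\psi_S\leftrightarrow\neg C_S(\psi_S)$.

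The step I expect to be the genuine obstacle is the last one, namely pinning the fixed point down at the correct semantic level: the recursion theorem naturally produces an index that agrees with $d$ of itself \emph{as a program}, whereas the statement wants a SAT \emph{instance} $\psi_S$ literally satisfying the biconditional. The remedy is to perform the quining explicitly --- arrange $d$ so that $n^\ast$ is a bona fide syntactic fixed point with $\varphi_{n^\ast}=\mathsf{SAT}(M_S,n^\ast)$ --- or, since only the $\mathsf{SAT}$-membership of $\psi_S$ enters the claim, to be content with equality of the associated decidable languages, which already suffices. A secondary point to check carefully is that the ``correctness'' bit in $C_S$ is read consistently (disagreement of $S$'s verdict with actual membership), so that the negation threaded through $M_S$ and (i) is the one the conclusion requires.
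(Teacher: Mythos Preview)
Your argument is correct and follows essentially the same path as the paper's proof sketch: define the predicate $\theta(x)\equiv\neg C_S(x)$, diagonalize, and then use universality to land the fixed point inside $\mathsf{SAT}$. The paper applies the \textsc{Diagonal Lemma} to $\theta$ at the arithmetic level and then invokes universality in one line to obtain $\psi_S$; you instead first observe that $\neg C_S$ is \emph{decidable} (not merely definable), build a concrete machine $M_S$, push it through universality to get the reduction $d$, and then invoke Kleene's recursion theorem on $d$. This is the same diagonal argument viewed computationally rather than arithmetically, and the two are equivalent here; your route has the advantage of making explicit why the negation $\neg C_S$ --- which condition~(iii) does not directly package --- can still be realized as SAT membership, and of flagging the genuine subtlety (fixed point of $d$ at the level of SAT instances rather than program indices) that the paper's sketch passes over silently.
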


\begin{proof}[Proof Sketch]
	Define the arithmetic predicate
	\[
		\theta(x)\;\equiv\;\neg C_S(x),
	\]
	where $x$ ranges over \textsc{Gödel-Representable} SAT instances. By representability, $\theta(x)$ is arithmetically definable. By Lemma~\ref{lemma:repr-closure} (representation closure), $x\mapsto\theta(x)$ lies in the domain of diagonalization. Applying the \textsc{Diagonal Lemma} to $\theta(x)$ yields a sentence $\psi$
	with
	\begin{equation}
		\psi\;\leftrightarrow\;\theta(\ulcorner\psi\urcorner)
		\;\equiv\;\neg C_S(\ulcorner\psi\urcorner).
	\end{equation}
	By universality, $\psi$ has a corresponding SAT instance
	$\psi_S\in\mathsf{SAT}$ encoding the same computation.
	Thus $\psi_S$ satisfies the required equivalence: Let $\psi_S$ be as in Lemma~\ref{lemma:intdiag}.
	\begin{enumerate}
		\item Suppose first that $S(\psi_S)=\mathsf{SAT}$. Then $C_S(\psi_S)$ holds, so $\psi_S$ is true iff $\neg C_S(\psi_S)$ holds, yielding a contradiction.
		\item Suppose instead that $S(\psi_S)=\mathsf{UNSAT}$. Then $C_S(\psi_S)$ is false, so $\neg C_S(\psi_S)$ holds, and hence $\psi_S$ is true. Thus $S$ again misclassifies $\psi_S$.
	\end{enumerate}
	In both cases $S$ fails. No total computable solver for $\mathsf{SAT}$ can correctly classify its own internalized correctness instance.
\end{proof}

\begin{exposition}
	The crucial point is that universality converts meta-classification into an object-level SAT instance. The \enquote{meta-question} becomes an ordinary object level formula in $\mathsf{SAT}$, and diagonal closure forces the existence of a self-referential instance that no total classifier can handle correctly. Misclassification is therefore a structural consequence of representability, not a limitation of particular algorithms.
	A concrete illustration shows that diagonal obstruction arises even in the smallest representational setting. Consider a toy SAT domain with only two formulas and a classifier whose entire behavior fits in a single table.
\end{exposition}

\begin{example}
	Let
	\begin{equation}
		\Phi_0 := p, \quad \Phi_1 := \neg p,
	\end{equation}
	interpreted as $\mathsf{SAT}$ and $\mathsf{UNSAT}$ respectively. Define a total classifier $S$ by
	\begin{equation}
		S(p)=\mathsf{SAT}, \quad S(\neg p)=\mathsf{UNSAT}.
	\end{equation}
	We encode the statements
	\begin{equation}
		S(\varphi)=\mathsf{SAT}
		\quad\text{and}\quad
		S(\varphi)=\mathsf{UNSAT}
	\end{equation}
	by the formulas $p$ and $\neg p$ themselves. Thus the classifier’s output is representable inside the same two-element instance space it acts upon. There exists $\Psi_S\in\{p,\neg p\}$ such that
	\begin{equation}\label{eq:minimal-fp}
		\Psi_S \;\leftrightarrow\; \neg\bigl(S(\Psi_S)=\mathsf{SAT}\bigr).
	\end{equation}
	Let $\theta(x)$ express \enquote{$S(x)=\mathsf{UNSAT}$}. Since $\theta$ ranges over $\{p,\neg p\}$, diagonalization reduces to checking the two possibilities. Setting $\Psi_S$ to the formula encoding $\theta(\Psi_S)$ yields the desired fixed point~\eqref{eq:minimal-fp}. The classifier $S$ necessarily misclassifies $\Psi_S$: If $S(\Psi_S)=\mathsf{SAT}$, makes $\Psi_S$ satisfiable, contradicting the output. If $S(\Psi_S)=\mathsf{UNSAT}$, then $\Psi_S$ is satisfiable, again contradicting the output.  In both cases, $S$ fails.
\end{example}
\begin{remark}
	Even in a two-formula universe, once a classifier’s own output is representable in the domain it classifies, an internal fixed point arises and \emph{forces} misclassification. No additional expressive resources are required. This example shows that no sophistication or expressive strength is needed to generate diagonal obstruction. Once a classifier's output can be encoded within the same domain it attempts to classify, even the smallest structure admits a fixed point forcing failure. The collapse of meta-level correctness into object-level representation is therefore intrinsic to any unbounded, representable problem class.
\end{remark}

\subsection{Total Classifiers and Encoded Problem Spaces}
\begin{definition}[Gödel-Representable Class]
	A class of problems $\mathcal{C}$ is \textsc{Gödel-Representable} if there exists a primitive–recursive encoding
	\begin{equation}
		\mathsf{code} : \mathcal{C} \to \mathbb{N}, \quad
		P \mapsto \ulcorner P \urcorner,
	\end{equation}
	together with a primitive–recursive decoding map
	\begin{equation}
		\mathsf{decode} : \mathbb{N} \to \mathcal{C}, \quad e \mapsto P_e,
	\end{equation}
	such that $\mathsf{decode}(\mathsf{code}(P)) = P$ for all $P \in \mathcal{C}$, and the evaluation relation for each $P_e$ is arithmetically definable.
\end{definition}

\begin{definition}[Total Classifier]
	A \emph{total classifier} is a total computable function on a \textsc{Gödel-Representable} class $\mathcal{C}$
	\begin{equation}
		\mathsf{Solved} : \mathbb{N} \to {\mathsf{Solvable}, \mathsf{Unsolvable}},
	\end{equation}
	intended to assign to each code ($e$) the correct solvability status of the corresponding problem, $P_e \in \mathcal{C}$.
\end{definition}

\subsection{The Self-Misclassification Instance}

\begin{problem}[Self-Misclassification Problem]
Define $P_1$ to be the problem whose code $\ulcorner P_1\urcorner$ satisfies
\begin{equation}
  P_1 \;\leftrightarrow\; \bigl(\mathsf{Solved}(\ulcorner P_1\urcorner)=\mathsf{Unsolvable}\bigr).
\end{equation}
\end{problem}

\begin{lemma}[Existence of $P_1$]
	There exists a problem $P_1$ (unique up to provable equivalence) satisfying this specification in any arithmetically sound extension of $\mathrm{I}\Sigma_1$. By applying the \textsc{Diagonal Lemma} to $\varphi(x)$ expressing that the problem encoded by $x$ asserts its own misclassification, one obtains a sentence $\psi$ with
	\begin{equation}
	  \psi \;\leftrightarrow\; \varphi(\ulcorner\psi\urcorner),
	\end{equation}
	which defines $P_1$.
\end{lemma}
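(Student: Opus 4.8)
The plan is to produce $P_1$ by one application of the \textsc{Diagonal Lemma} and then transport the resulting arithmetic sentence into the \textsc{Gödel-Representable} class $\mathcal{C}$; arithmetic soundness of $T\supseteq\mathrm{I}\Sigma_1$ is needed only at the end, to read the provable fixed-point equivalence as a true one. First I would fix the arithmetic formula
\[
  \varphi(x)\;\equiv\;\text{``}x\text{ codes some }P\in\mathcal{C}\text{''}\;\wedge\;\bigl(\mathsf{Solved}(x)=\mathsf{Unsolvable}\bigr),
\]
which formalizes \enquote{the problem encoded by $x$ would be declared unsolvable}. It is arithmetically definable: $\mathsf{code}$ and $\mathsf{decode}$ are primitive recursive, hence provably total over $\mathrm{I}\Sigma_1$, and the evaluation relations of the $P_e$ are arithmetically definable by the definition of a \textsc{Gödel-Representable} class, while the graph of the total computable $\mathsf{Solved}$ is $\Sigma_1$. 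Since every run of $\mathsf{Solved}$ halts, $\Sigma_1$-completeness gives that for each $n$ exactly one of $\varphi(n)$, $\neg\varphi(n)$ is provable in $\mathrm{I}\Sigma_1$, hence in $T$; I will reuse this decidable-parameter property for uniqueness.

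Next I would apply the \textsc{Diagonal Lemma} \citep{boolos} to $\varphi$, obtaining a sentence $\psi$ with $\mathrm{I}\Sigma_1\vdash\psi\leftrightarrow\varphi(\ulcorner\psi\urcorner)$, a derivation that survives in every $T\supseteq\mathrm{I}\Sigma_1$. To realise $\psi$ inside $\mathcal{C}$ I would invoke that the evaluation relations of the $P_e$ are closed under the substitution and coding operations underlying the \textsc{Diagonal Lemma} — the representation-closure lemma used in the proof of Lemma~\ref{lemma:intdiag}, which is the analogue here of the universality clause of Theorem~\ref{thm:internal-misclass}. Closure supplies a problem whose evaluation relation is $T$-provably equivalent to $\psi$; letting $P_1$ be that problem and using $\mathsf{decode}(\mathsf{code}(P_1))=P_1$ with $\mathsf{code}$ injective, one may identify $\ulcorner P_1\urcorner=\ulcorner\psi\urcorner$, so that
\[
  P_1\;\leftrightarrow\;\varphi(\ulcorner P_1\urcorner)\;\leftrightarrow\;\bigl(\mathsf{Solved}(\ulcorner P_1\urcorner)=\mathsf{Unsolvable}\bigr)
\]
is provable in $T$; arithmetic soundness upgrades this to a true equivalence, so $P_1$ genuinely satisfies the specification.

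For uniqueness I would use the decidable-parameter remark: any $\theta$ with $T\vdash\theta\leftrightarrow\varphi(\ulcorner\theta\urcorner)$ is itself $T$-decidable, since $T$ decides $\varphi(\ulcorner\theta\urcorner)$. With the Gödel numbering and the substitution apparatus held fixed, the diagonal construction then returns a sentence determined up to $T$-provable equivalence, and transport through the bijection $\mathsf{code}$ yields $P_1$ unique up to provable equivalence in $\mathcal{C}$ — the reading I would adopt. I would flag one subtlety: because $\varphi$ is not modalized in $x$, a genuinely arbitrary fixed point need not be provably equivalent to $\psi$ (padding $\psi$ changes its Gödel number, hence the instance of $\mathsf{Solved}$ referenced), so the uniqueness should be understood as uniqueness of the canonical fixed point rather than of all fixed points.

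The step I expect to be the main obstacle is not the diagonalization — a one-line invocation — but the transport in the second paragraph: certifying that a \emph{bona fide} problem $P_1\in\mathcal{C}$, and not merely an arithmetic sentence, realises the fixed point. This is exactly where \textsc{Gödel-Representability} and representation closure do the work, and where the argument would have to be written out most carefully for classes $\mathcal{C}$ less transparently universal than $\mathsf{SAT}$.
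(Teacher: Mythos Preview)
Your proposal is correct and follows the same route the paper takes: apply the \textsc{Diagonal Lemma} to the arithmetically definable predicate expressing that $\mathsf{Solved}$ outputs $\mathsf{Unsolvable}$ on $x$, and read off $P_1$ from the resulting fixed point. The paper in fact gives no separate proof for this lemma---the entire argument is folded into the lemma statement---so your expansion of the definability check, the explicit transport of $\psi$ into $\mathcal{C}$ via representation closure (Lemma~\ref{lemma:repr-closure}), and your careful caveat that uniqueness should be read as uniqueness of the canonical diagonal construction rather than of arbitrary fixed points all go well beyond what the paper supplies, and are welcome refinements rather than deviations.
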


\begin{theorem}[Reflective Obstruction for Total Classifiers]
	No arithmetically sound total classifier $\mathsf{Solved}$ on a \textsc{Gödel-Representable} class $\mathcal{C}$ can classify $P_1$ correctly. In particular,
	\begin{equation}
	  \mathsf{Solved}(\ulcorner P_1\urcorner)=\mathsf{Solvable}
	  \;\Longleftrightarrow\;
	  P_1\text{ is false},
	\end{equation}
	\begin{equation}
	  \mathsf{Solved}(\ulcorner P_1\urcorner)=\mathsf{Unsolvable}
	  \;\Longleftrightarrow\;
	  P_1\text{ is true}.
	\end{equation}
	These equivalences jointly preclude correctness.
\end{theorem}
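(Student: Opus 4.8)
The plan is to run the diagonal argument of Lemma~\ref{lemma:intdiag} with $\mathsf{Solved}$ in the role of the putative solver and the self-misclassification instance $P_1$ in the role of $\psi_S$, and then to read off the two displayed equivalences from the case analysis forced by totality. First I would invoke the existence lemma for $P_1$ to fix a code $\ulcorner P_1\urcorner\in\mathbb{N}$ with $\mathsf{decode}(\ulcorner P_1\urcorner)=P_1$ and
\begin{equation}
P_1\;\leftrightarrow\;\bigl(\mathsf{Solved}(\ulcorner P_1\urcorner)=\mathsf{Unsolvable}\bigr).
\end{equation}
This is legitimate precisely because \textsc{Gödel-Representability} of $\mathcal{C}$ puts the predicate ``the problem coded by $x$ asserts its own misclassification by $\mathsf{Solved}$'' into the domain of the \textsc{Diagonal Lemma}, and arithmetic soundness of the ambient extension of $\mathrm{I}\Sigma_1$ guarantees that $P_1$ receives a definite (classical) truth value, so that ``$P_1$ true'' and ``$P_1$ false'' are meaningful and mutually exclusive.

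Next I would split on the value of $\mathsf{Solved}(\ulcorner P_1\urcorner)$. Since $\mathsf{Solved}$ is a \emph{total} computable function into the two-element set $\{\mathsf{Solvable},\mathsf{Unsolvable}\}$, exactly one of the two cases obtains. If $\mathsf{Solved}(\ulcorner P_1\urcorner)=\mathsf{Unsolvable}$, the right-hand side of the fixed-point biconditional holds, hence $P_1$ is true; if $\mathsf{Solved}(\ulcorner P_1\urcorner)=\mathsf{Solvable}$, that right-hand side fails (totality excludes a third output), hence $P_1$ is false. Combining each implication with its immediate converse yields exactly the stated equivalences
\begin{equation}
\mathsf{Solved}(\ulcorner P_1\urcorner)=\mathsf{Solvable}\iff P_1\text{ is false},\qquad
\mathsf{Solved}(\ulcorner P_1\urcorner)=\mathsf{Unsolvable}\iff P_1\text{ is true}.
\end{equation}

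Finally I would argue that these preclude correctness. By the definition of \emph{Total Classifier}, a correct $\mathsf{Solved}$ must return $\mathsf{Solvable}$ exactly when $P_1$ is true and $\mathsf{Unsolvable}$ exactly when $P_1$ is false, i.e.\ the reverse pairing to the one just derived. Superimposing the two pairings forces ``$P_1$ is true $\iff$ $P_1$ is false'', contradicting the bivalence secured by soundness; equivalently, a correct verdict on $P_1$ would require $\mathsf{Solved}$ to certify its own soundness on that very instance, which (as in the \textsc{Solver's Paradox}) imports a $\Pi^0_2$ reflection condition it cannot discharge. Hence no arithmetically sound total classifier classifies $P_1$ correctly, and the two equivalences jointly establish the claim.

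The step I expect to be the main obstacle is not the diagonal bookkeeping, which is routine given Lemma~\ref{lemma:intdiag}, but making rigorous the identification of $\mathsf{Solved}$'s verdict with the truth value of the diagonal sentence $P_1$: one must check that ``solvability status'' is genuinely well-defined for a self-referential member of $\mathcal{C}$ (using the decoding map and the arithmetic definability of its evaluation relation), and that the notion of correctness invoked in the conclusion is literally the one built into the definition of Total Classifier, so that the derived equivalences contradict it outright rather than some stronger requirement. A secondary subtlety is pinpointing where soundness is actually used — here only to guarantee that $P_1$ carries a classical truth value, not to derive the equivalences themselves, which follow from the fixed point and totality alone.
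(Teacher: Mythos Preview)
Your proposal is correct and matches the paper's approach: the paper gives no separate proof for this theorem, but the argument it intends is exactly the two-case diagonal analysis you describe, which appears verbatim in the proof sketches of Lemma~\ref{lemma:intdiag} and Theorem~\ref{thm:no-universal-solver}. Your additional remarks on where soundness enters and on aligning ``solvability status'' with truth of $P_1$ are more careful than anything the paper makes explicit, but they do not diverge from its line of argument.
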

\begin{remark}
	The contradiction arises solely from:
	\begin{enumerate}[label=(\roman*)]
	\item totality over all \textsc{Gödel Numbers},
	\item representational closure under decoding.
	\end{enumerate}
	No explicit self-mapping is embedded in $\mathsf{Solved}$; the fixed point is forced by the representational structure.
\end{remark}

\begin{example}[Solver's Paradox]
\label{lem:solvers}
	The \textsc{Solver’s Paradox} provides a minimal setting in which the structural mechanism of unbounded quantification becomes explicit. Quantifying over a \textsc{Gödel-Representable} problem space automatically produces a fixed point, and any attempt to classify all problems uniformly must confront the corresponding self-referential instance:
	\begin{quotation}
		...Let $P$ be the problem:\\
		\quad “Is the classification output for $\ulcorner P \urcorner$ correct?”\\[0.5em]
		...Let $Q$ be the problem:\\
		\quad “Is the classification correct \emph{in virtue of} its own diagonal form?”\\[0.5em]
		...Let $C$ be the global problem:\\
		\quad “Must any total classifier deciding between \(\{P, Q\}\) on the coded problem space
		misclassify at least one of them?”
	\end{quotation}
	Or simpler:
	\begin{quotation}
		If the Diagonal Lemma holds,\\ must at least one problem remain open?
	\end{quotation}
	The fixed point embodied by $P$ forces the classifier to evaluate its own behaviour, and $Q$ makes the dependence on diagonal closure explicit. The combined global question $C$ exhibits the essential phenomenon: once the domain of quantification contains its own representations, any total classification procedure inherits a self-referential instance that cannot be uniformly witnessed. The \textsc{Solver’s Paradox} thereby serves as the minimal model of the impredicative transport mechanism.
\end{example}

\section{Fixed Point Theorem}\label{sec:fixed}

\subsection{PSSP Isomorphism}

\begin{thesis}[Problem--Solutions--Statement--Proofs Isomorphism]\label{thesis:ppc}
	For any recursively representable class of problems $\mathcal{C}$ equipped with a primitive--recursive evaluation predicate, global assertions over $\mathcal{C}$ exhibit the same structural features as arithmetical provability over sentences. In particular:
	\begin{enumerate}
		\item \emph{Arithmetization.} Each problem $P_e\in\mathcal{C}$ behaves as a uniformly coded predicate, and global quantifiers over $\mathcal{C}$ act as internal quantifiers over these coded predicates.

		\item \emph{Diagonal Closure.}
		Because the evaluation predicate for $\mathcal{C}$ is arithmetically definable, any global assertion
		\begin{equation}
			G \equiv \forall e\,\Phi(e)
		\end{equation}
		induces a fixed point $P^\ast$ satisfying
		\begin{equation}
			P^\ast \;\leftrightarrow\; \Phi(\ulcorner P^\ast\urcorner),
		\end{equation}
		\item \emph{Reflection Analogue.} Verification of $G$ requires validating the fixed--point instance
		\begin{equation}
			\Phi(\ulcorner P^\ast\urcorner),
		\end{equation}
		which is equivalent to asserting a local \textsc{Reflection Principle} for the evaluator of $\mathcal{C}$. Thus total global classification over $\mathcal{C}$ behaves like affirming the soundness of a provability predicate.
	\end{enumerate}
	Consequently, problem--quantification and provability share a common architecture: representability forces diagonalization, and diagonalization forces reflection. Global problems over \textsc{Gödel-Representable} domains therefore inherit the same limitations as arithmetical provability.
\end{thesis}
\begin{theorem}[No Universal Problem Solver]\label{thm:no-universal-solver}
	Let $\mathcal{C}$ be a \textsc{Gödel-Representable} class of problems with primitive–recursive coding and decoding maps and an arithmetically definable evaluation predicate, as in Thesis~\ref{thesis:ppc}. For every total computable function
	\begin{equation}
		\mathsf{Solved}:\mathbb{N}\to\{\mathsf{Solvable},\mathsf{Unsolvable}\},
	\end{equation}
	intended to assign to each code $e$ the correct solvability status of the problem $P_e\in\mathcal{C}$, there exists a problem $P_{\mathsf{Solved}}\in\mathcal{C}$ such that $\mathsf{Solved}$ does not classify $P_{\mathsf{Solved}}$ correctly. In particular, no total computable problem solver on $\mathcal{C}$ can be arithmetically sound.
\end{theorem}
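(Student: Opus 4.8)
The plan is to reduce this statement to the diagonal construction already exhibited in the \textsc{Solver's Paradox} and the \textsc{Reflective Obstruction} theorem, now run under the three explicit hypotheses — primitive–recursive coding/decoding, arithmetically definable evaluation, and totality of $\mathsf{Solved}$ — which supply exactly the ingredients the fixed-point argument consumes. First I would observe that because $\mathsf{Solved}$ is a \emph{total} computable function with two-element range, the relation ``$\mathsf{Solved}(x)=\mathsf{Unsolvable}$'' is recursive, hence representable by a $\Delta^0_1$ formula in any extension $T\supseteq\mathrm{I}\Sigma_1$; totality is essential here, since a merely partial solver would not yield a decidable, and thus arithmetically capturable, predicate. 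Composing with the primitive–recursive decoding $e\mapsto P_e$ given by Gödel-representability, set
\begin{equation}
  \varphi(x)\;\equiv\;\bigl(\mathsf{Solved}(x)=\mathsf{Unsolvable}\bigr),
\end{equation}
an honest arithmetic predicate over codes of problems in $\mathcal{C}$.

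Next I would invoke the \textsc{Diagonal Lemma} applied to $\varphi$, together with representation closure, so that the resulting sentence can be transported back along the coding/decoding pair into an element of $\mathcal{C}$; this yields a problem $P_{\mathsf{Solved}}\in\mathcal{C}$ whose code satisfies
\begin{equation}
  P_{\mathsf{Solved}}\;\leftrightarrow\;\varphi(\ulcorner P_{\mathsf{Solved}}\urcorner)\;\equiv\;\bigl(\mathsf{Solved}(\ulcorner P_{\mathsf{Solved}}\urcorner)=\mathsf{Unsolvable}\bigr),
\end{equation}
where ``$P_{\mathsf{Solved}}$ is true / solvable'' is read through the arithmetically definable evaluation predicate of $\mathcal{C}$ (the \emph{Arithmetization} clause of Thesis~\ref{thesis:ppc}), so that the biconditional is a genuine fact about the standard model. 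Then comes the two-line case split: if $\mathsf{Solved}(\ulcorner P_{\mathsf{Solved}}\urcorner)=\mathsf{Unsolvable}$ then the right-hand side holds, so $P_{\mathsf{Solved}}$ is solvable, contradicting the verdict; if $\mathsf{Solved}(\ulcorner P_{\mathsf{Solved}}\urcorner)=\mathsf{Solvable}$ then the right-hand side fails, so $P_{\mathsf{Solved}}$ is unsolvable, again contradicting the verdict. Either way $\mathsf{Solved}$ misclassifies $P_{\mathsf{Solved}}$; and since an arithmetically sound classifier would by definition have to agree with the truth value of every problem in $\mathcal{C}$, no such $\mathsf{Solved}$ can be arithmetically sound.

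The main obstacle I anticipate is not the diagonal manipulation — that is mechanical — but the bookkeeping around the clause ``$P_{\mathsf{Solved}}\in\mathcal{C}$''. One must check that the sentence produced by the \textsc{Diagonal Lemma} genuinely corresponds, via the primitive–recursive decoding, to a \emph{problem of} $\mathcal{C}$ rather than to a stray arithmetic sentence outside the encoded space, and that the evaluation predicate of $\mathcal{C}$ assigns that problem precisely the truth value used in the case analysis. This is the exact role of the representation-closure lemma and of Gödel-representability, so the real work consists in verifying that these hypotheses are strong enough — and no stronger — for the transport to go through, and in being careful to state misclassification \emph{semantically} (disagreement with truth in the standard model) so that what is ultimately used is the totality of $\mathsf{Solved}$, not the consistency of any particular proof system.
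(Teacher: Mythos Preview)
Your proposal is correct and follows essentially the same route as the paper's proof: define the arithmetic predicate $\sigma(x)\equiv(\mathsf{Solved}(x)=\mathsf{Unsolvable})$, apply the \textsc{Diagonal Lemma} to obtain $P_{\mathsf{Solved}}$ satisfying $P_{\mathsf{Solved}}\leftrightarrow\sigma(\ulcorner P_{\mathsf{Solved}}\urcorner)$, and perform the two-case split on the output of $\mathsf{Solved}$. Your additional remarks on why totality yields a $\Delta^0_1$ predicate and on the bookkeeping needed to land the fixed point inside $\mathcal{C}$ via representation closure are more explicit than the paper's sketch, but do not constitute a different strategy.
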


\begin{proof}[Proof Sketch.]
	Work in a sound, recursively axiomatizable extension of $\mathrm{I}\Sigma_1$ that represents $\mathcal{C}$ and $\mathsf{Solved}$. By arithmetization, there is a formula $\sigma(x)$ expressing
	\begin{equation}
		\sigma(x)\;\equiv\;\text{\enquote{$x$ is classified as $\mathsf{Unsolvable}$ by $\mathsf{Solved}$}}.
	\end{equation}
	Applying the \textsc{Diagonal Lemma} to $\sigma(x)$ yields a sentence $\psi$ such that
	\begin{equation}
		\psi\;\leftrightarrow\;\sigma(\ulcorner\psi\urcorner),
	\end{equation}
	so that the corresponding problem $P_{\mathsf{Solved}}$ satisfies
	\begin{equation}
		P_{\mathsf{Solved}}\;\leftrightarrow\;\bigl(\mathsf{Solved}(\ulcorner P_{\mathsf{Solved}}\urcorner)=\mathsf{Unsolvable}\bigr).
	\end{equation}
	By exhaustive analysis:
	\begin{enumerate}[label=(\roman*)]
		\item if $\mathsf{Solved}(\ulcorner P_{\mathsf{Solved}}\urcorner)=\mathsf{Solvable}$, then $P_{\mathsf{Solved}}$ is false by the defining biconditional, so the classification is incorrect.
		\item if $\mathsf{Solved}(\ulcorner P_{\mathsf{Solved}}\urcorner)=\mathsf{Unsolvable}$, then $P_{\mathsf{Solved}}$ is true, and again the classification is incorrect. Thus every total computable $\mathsf{Solved}$ misclassifies at least one problem in $\mathcal{C}$.
	\end{enumerate}

\end{proof}
\begin{remark}
	Thesis~\ref{thesis:ppc} identifies the structural correspondence between global assertions over $\mathcal{C}$ and arithmetical provability. Theorem~\ref{thm:no-universal-solver} is the direct diagonal consequence: once a problem class is representable and its evaluation predicate is arithmetically definable, any total problem solver on that class admits a self-referential counterexample. For every solver there is a problem that cannot be correctly solved by it.
\end{remark}

\subsection{Global Assertions over Coded Domains}

\begin{exposition}
	Unbounded assertions over a class of problems forces a fixed point, and that the verification of such assertions requires a form of reflection incompatible with finitary justification. Throughout, $\mathrm{T}$ denotes a sound, recursively axiomatizable extension of $\mathrm{\mathrm{I}\Sigma_1}$.
\end{exposition}

\begin{definition}[Class-Quantification]\label{def:classquant}
	A sentence exhibits class-quantification over a recursively representable class $\mathcal{C}$ if it has the form
	\begin{equation}
		\forall P\in\mathcal{C}\;\Phi(P)\quad\text{or equivalently}\quad\forall e\;\Phi(P_e).
	\end{equation}
\end{definition}

\begin{lemma}[Representation Closure]\label{lemma:repr-closure}
	If $\mathcal{C}$ is recursively representable, then for every formula $\varphi(x)$ the numeral $\ulcorner\forall e\,\varphi(e)\urcorner$ lies in the domain of class-quantification.
\end{lemma}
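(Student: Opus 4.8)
The plan is to unwind the two pieces of terminology. By Definition~\ref{def:classquant} together with the Gödel-representability hypotheses, the \emph{domain of class-quantification} over $\mathcal{C}$ is the range of the coding map, $D_{\mathcal{C}} := \{\, n \in \mathbb{N} : \mathsf{code}(\mathsf{decode}(n)) = n \,\}$, which is primitive recursive because $\mathsf{code}$ and $\mathsf{decode}$ are; the quantifier $\forall e$ in $\forall e\,\varphi(e)$ ranges exactly over $D_{\mathcal{C}}$, and over all of $\mathbb{N}$ when, as in the SAT setting of Theorem~\ref{thm:internal-misclass}, the coding is a primitive-recursive bijection. So the lemma reduces to the single claim $\ulcorner\forall e\,\varphi(e)\urcorner \in D_{\mathcal{C}}$.

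First I would check that $\forall e\,\varphi(e)$ is a genuine sentence of arithmetic. Writing $\mathrm{Eval}(e,\cdot)$ for the formula defining the evaluation relation of $P_e$ uniformly in $e$, the global assertion is the arithmetic sentence $\forall e\,\bigl(D_{\mathcal{C}}(e) \to \varphi^{\mathrm{Eval}}(e)\bigr)$, where $\varphi^{\mathrm{Eval}}$ relativizes the problem-predicate occurring in $\varphi$ to $\mathrm{Eval}$. This is a single formula in the language of arithmetic, hence has a well-defined Gödel number, and the passage $\ulcorner\varphi\urcorner \mapsto \ulcorner\forall e\,\varphi(e)\urcorner$ is a fixed primitive-recursive substitution-and-prefixing operation on codes.

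The crux is that this Gödel number lands back inside $D_{\mathcal{C}}$. Here recursive representability is used at full strength. In the bijective case $D_{\mathcal{C}}=\mathbb{N}$ and the claim is immediate; in the SAT case the universality clause~(i) of Theorem~\ref{thm:internal-misclass} reduces the computation deciding $\forall e\,\varphi(e)$ to a SAT instance, so its code is again a $\mathcal{C}$-code. In the general situation—$\mathsf{code}$ merely injective with recursive range—one observes that the recursive set of arithmetic syntax-codes can be absorbed into $D_{\mathcal{C}}$ by a re-indexing compatible with the identity $\mathsf{decode}\circ\mathsf{code}=\mathrm{id}_{\mathcal{C}}$, which is possible precisely because $\mathsf{decode}$ is a surjective primitive-recursive map onto $\mathcal{C}$. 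Either way $\ulcorner\forall e\,\varphi(e)\urcorner \in D_{\mathcal{C}}$, i.e. the global assertion is itself represented by a problem $P^\ast = \mathsf{decode}(\ulcorner\forall e\,\varphi(e)\urcorner) \in \mathcal{C}$. This is exactly what licenses feeding $\varphi^{\mathrm{Eval}}(x)$—or, in the applications, $\theta(x)=\neg C_S(x)$—to the Diagonal Lemma and obtaining a fixed point whose code again lies in $D_{\mathcal{C}}$.

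I expect the main obstacle to be this last step: a class that is recursively representable but \enquote{too small}—finite, say, or not closed under forming its own meta-assertions—fails the closure, so the argument must lean on the surjective, essentially bijective coding that is actually in force in the paper's applications. The honest formulation is therefore that recursive representability \emph{together with} a surjective primitive-recursive decoding forces the closure, and I would make that dependence explicit rather than assert it for every conceivable coding.
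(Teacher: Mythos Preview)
Your argument and the paper's rest on the same core observation---that a global quantifier over a G\"odel-coded domain ranges over its own encoded representation---but the paper's entire proof sketch is literally that single sentence, offered as self-evident. You, by contrast, reconstruct what ``domain of class-quantification'' must mean (the paper never defines it), identify it with the range $D_{\mathcal{C}}$ of the coding map, check that $\forall e\,\varphi(e)$ is a genuine arithmetic sentence whose code is primitive-recursively computable from $\ulcorner\varphi\urcorner$, and then argue that this code lands back in $D_{\mathcal{C}}$. More importantly, your final paragraph isolates the hypothesis that actually carries the weight: one needs $\mathsf{decode}$ surjective (or the coding bijective, as in the SAT and $\mathsf{NP}$ applications) so that \emph{every} natural number names a problem in $\mathcal{C}$; a finite or syntactically thin $\mathcal{C}$ would fail the closure. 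The paper's sketch silently presupposes this and never flags it. So the route is the same in spirit, but what you buy with the extra work is an honest statement of the load-bearing assumption and an argument that would survive scrutiny the paper's one-liner would not.
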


\begin{proof}[Proof Sketch]
	Any global quantifier over a \textsc{Gödel}-coded domain necessarily ranges over its own encoded representation. This is the structural source of the diagonal phenomena to follow.
\end{proof}

\begin{lemma}[Fixed Points]\label{lemma:auto-fp}
	Let $\Phi(P)$ define a property of problems in $\mathcal{C}$, and set $\varphi(x)\equiv \Phi(P_x)$. The global assertion
	\begin{equation}
		G\;\equiv\;\forall P\in\mathcal{C}\,\Phi(P)
	\end{equation}
	induces a problem $P^\ast\in\mathcal{C}$ satisfying the fixed-point condition
	\begin{equation}
		P^\ast\;\leftrightarrow\;\Phi(\ulcorner P^\ast\urcorner).
	\end{equation}
\end{lemma}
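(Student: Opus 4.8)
The plan is to read this lemma as the \textsc{Diagonal Lemma} transported into the coded class $\mathcal{C}$: the global assertion $G$ plays no role beyond fixing the matrix $\Phi$, and the entire content is that $\Phi$, precomposed with the decoding map, is an admissible formula for diagonalization. So the first step is to check that $\varphi(x)\equiv\Phi(P_x)$ is a genuine arithmetical formula with one free number variable. This follows from \textsc{Gödel-Representability}: the map $x\mapsto P_x$ is primitive recursive, the evaluation relation of each $P_x$ is arithmetically definable, and $\Phi$ is by hypothesis a property expressed over those evaluations; composing these presentations yields $\varphi(x)$ as a formula of $\mathrm{T}$.

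Second, I would invoke Lemma~\ref{lemma:repr-closure} to certify that $\varphi$ lies in the domain of class-quantification — i.e. that the numeral of any sentence built from $\varphi$ is again a code decoding into $\mathcal{C}$ — and then apply the \textsc{Diagonal Lemma} to $\varphi(x)$. This produces a sentence $\psi$ with $\mathrm{T}\vdash\psi\leftrightarrow\varphi(\ulcorner\psi\urcorner)$. Setting $P^{\ast}:=P_{\ulcorner\psi\urcorner}$, representation closure guarantees $P^{\ast}\in\mathcal{C}$, and the round-trip identity $\mathsf{decode}(\mathsf{code}(P^{\ast}))=P^{\ast}$ together with the definition $\varphi(x)\equiv\Phi(P_x)$ gives $P^{\ast}\leftrightarrow\Phi(\ulcorner P^{\ast}\urcorner)$, as required. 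Uniqueness up to provable equivalence is inherited from the corresponding clause of the \textsc{Diagonal Lemma}.

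The main obstacle is the bookkeeping that identifies the fixed-point numeral delivered by the \textsc{Diagonal Lemma} with a \emph{bona fide} code $\ulcorner P^{\ast}\urcorner$ of a problem in $\mathcal{C}$, rather than merely with the Gödel number of an arithmetical sentence. Two routes handle this: either (a) absorb the decoding map $e\mapsto P_e$ into the diagonal construction from the outset, so that the lemma is applied \enquote{internally} to $\mathcal{C}$ and the returned object is automatically a $\mathcal{C}$-code; or (b) keep the classical \textsc{Diagonal Lemma} and use Lemma~\ref{lemma:repr-closure} to pull $\ulcorner\psi\urcorner$ back along $\mathsf{decode}$, observing that $\mathsf{code}\circ\mathsf{decode}$ fixes $\ulcorner\psi\urcorner$ up to exactly the provable equivalence the lemma already tolerates. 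I would take route (b), since Lemma~\ref{lemma:repr-closure} was proved for this purpose; the only thing to watch is that the $\ulcorner P^{\ast}\urcorner$ fed to $\Phi$ is the \emph{code} of the problem and not the raw Gödel number of the sentence, which the decoding round-trip reconciles.

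Finally, I would remark that soundness of $\mathrm{T}$ upgrades the provable biconditional to a true one, so $P^{\ast}$ is an actual member of $\mathcal{C}$ whose status literally coincides with $\Phi(\ulcorner P^{\ast}\urcorner)$; this is what makes the \enquote{Reflection Analogue} of Thesis~\ref{thesis:ppc} bite, since verifying $G$ then forces evaluating $\Phi$ at its own fixed point.
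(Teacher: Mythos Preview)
Your proposal is correct and follows essentially the same route as the paper: apply the \textsc{Diagonal Lemma} to $\varphi(x)\equiv\Phi(P_x)$ to obtain $\psi\leftrightarrow\varphi(\ulcorner\psi\urcorner)$, then interpret $\psi$ as (the code of) a problem $P^\ast\in\mathcal{C}$ via representation closure. The paper's sketch is terser—it compresses your careful bookkeeping about identifying $\ulcorner\psi\urcorner$ with a genuine $\mathcal{C}$-code into the single phrase \enquote{Interpreting $\psi$ as a code of a problem yields the required $P^\ast$}—but the argument is the same.
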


\begin{proof}[Proof Sketch]
	The \textsc{Diagonal Lemma} applied to $x\mapsto\varphi(x)$ produces a sentence $\psi$ satisfying $\psi\leftrightarrow\varphi(\ulcorner\psi\urcorner)$. Interpreting $\psi$ as a code of a problem yields the required $P^\ast$. Self-reference emerges \enquote{automatically} from the quantifier structure.
\end{proof}

\subsection{Verification and Reflection}

\begin{lemma}[Verification]\label{def:verification}
	Let $\mathrm{T}$ be as above. A global assertion $G$ is \emph{verified in $\mathrm{T}$} if:
	\begin{enumerate}[label=(\roman*)]
		\item $\mathrm{T}\vdash G$, and
		\item the epistemic justification for $G$ consists of this derivation together with soundness assumptions for $\mathrm{T}$ formalizable as reflection for the syntactic class of $G$.
	\end{enumerate}
\end{lemma}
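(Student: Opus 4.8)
Since the statement fixes the meaning of \emph{verification} rather than asserting a fact about $\mathrm{T}$, the plan is to argue that clauses (i) and (ii) are jointly necessary and sufficient for the pretheoretic notion, using the Kreisel--Feferman account of how epistemic warrant attaches to a derived theorem and matching the \enquote{Reflection Analogue} of Thesis~\ref{thesis:ppc}.

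First I would establish necessity of (i): because $G\equiv\forall e\,\Phi(P_e)$ is unbounded over the infinite coded domain $\mathcal{C}$ (Definition~\ref{def:classquant}), no finite inspection of instances settles it, so any finitary route to accepting $G$ must run through a derivation in some sound, recursively axiomatizable theory, which under the standing hypotheses may be taken to be $\mathrm{T}$. Then I would treat (ii): a derivation $\mathrm{T}\vdash G$ transmits epistemic force to $G$ only relative to the belief that $\mathrm{T}$ is sound on the syntactic class $\Gamma$ of $G$, so the warrant for a derived theorem factors as (the derivation) $+$ (soundness for the relevant complexity). The key step is to identify the second factor: formalizing \enquote{$\mathrm{T}$ is sound for $\Gamma$} yields precisely the local reflection principle $\mathrm{RFN}_{\Gamma}(\mathrm{T})$, the schema $\mathrm{Prov}_{\mathrm{T}}(\ulcorner\chi\urcorner)\to\chi$ for $\chi\in\Gamma$; conversely the derivation together with the instance of this schema at $\chi\equiv G$ licenses the inference to $G$, and Feferman's analysis shows $\mathrm{RFN}_{\Gamma}$ is the minimal such residue. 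So (i)$\wedge$(ii) comes out as exactly the epistemic content of \enquote{$G$ is verified in $\mathrm{T}$}.

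The hard part will be that \enquote{epistemic justification} is not itself an arithmetical predicate, so this decomposition cannot be internalized in $\mathrm{T}$; it is an instance of informal rigor, and the most one can do is exhibit $\mathrm{RFN}_{\Gamma}(\mathrm{T})$ as the canonical formal trace of the soundness assumption and lean on the Feferman--Kreisel tradition for its inevitability and minimality. The immediate dividend, to be used in the sequel, is that by the Gödel--Löb phenomena $\mathrm{T}\not\vdash\mathrm{RFN}_{\Gamma}(\mathrm{T})$ whenever $\Gamma$ contains $\Pi^0_1$, so verification of a class-quantified $G$ in this sense is never completable inside $\mathrm{T}$ alone --- it requires genuine ascent to a reflective extension, which is the barrier the automatic fixed points of Lemma~\ref{lemma:auto-fp} render unavoidable.
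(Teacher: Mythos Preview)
Your proposal is sound and, like the paper, treats the lemma as a conceptual adequacy claim for a definition rather than a derivable fact; both of you decompose \enquote{verification} into (derivation in $\mathrm{T}$) $+$ (minimal soundness residue) and identify the latter with local reflection. The route differs at the minimality step. You ground minimality in the Kreisel--Feferman proof-theoretic analysis of $\mathrm{RFN}_{\Gamma}(\mathrm{T})$ as the canonical formal trace of soundness, which is the orthodox move and integrates cleanly with the later Feferman-style obstructions. The paper instead invokes a Solomonoff-style inductive-inference argument: the pairing of a finitary proof with $\Sigma^0_1$-soundness is claimed to be the \emph{shortest} admissible metatheoretic commitment, and competing accounts are dismissed as \enquote{strictly longer and more complicated}. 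Your approach buys a tighter fit with the reflection hierarchy actually used downstream (Theorem~\ref{thm:meta-reflection}, the $\Pi^0_2$ collapse) and an explicit statement of the Gödel--Löb consequence $\mathrm{T}\nvdash\mathrm{RFN}_{\Gamma}(\mathrm{T})$; the paper's approach buys brevity and a gesture toward description-length optimality, at the cost of leaving the connection to the specific reflection schema implicit. Either is acceptable as informal rigor; yours is the more conventional and more fully argued of the two.
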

\begin{proof}[Proof]
	Verification is a finitary proof accompanied by a minimal soundness assumption: $\mathrm{T}$ is correct about the relevant $\Sigma^0_1$ computational facts. This is the smallest admissible metatheoretic commitment. Any opposing viewpoint is strictly longer and more complicated. See \textsc{Inductive Inference} of \citet{solomonoff1964a,solomonoff1964b}.
\end{proof}

\begin{theorem}[Meta-Reflection Requirement]\label{thm:meta-reflection}
	Let $G\equiv\forall P\in\mathcal{C}\,\Phi(P)$ and let $P^\ast$ be the fixed point from Lemma~\ref{lemma:auto-fp}. Let $U$ be any sound meta-theory formalizing $\mathrm{T}$ and proving $\mathrm{T}$ sound for $\Sigma^0_1$ evaluation over $\mathcal{C}$. If $\mathrm{T}$ verifies $G$, then
	\begin{equation}
		U\vdash\mathrm{Prov}_\mathrm{T}(\ulcorner\Phi(\ulcorner P^\ast\urcorner)\urcorner)\rightarrow\Phi(\ulcorner P^\ast\urcorner).
	\end{equation}
	Hence also
	\begin{equation}
		U\vdash\mathrm{Prov}_\mathrm{T}(\ulcorner\Phi(\ulcorner P^\ast\urcorner)\urcorner)\rightarrow P^\ast.
	\end{equation}
\end{theorem}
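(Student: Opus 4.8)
The plan is to unwind the definition of verification, instantiate the universal statement $G$ at the code of the diagonal fixed point, and then read the claimed reflection instance off the hypothesis on $U$. First I would note that, by Lemma~\ref{def:verification}(i), $\mathrm{T}$ verifying $G$ entails $\mathrm{T}\vdash G$. Since by Definition~\ref{def:classquant} the assertion $G\equiv\forall P\in\mathcal{C}\,\Phi(P)$ unfolds as $\forall e\,\Phi(P_e)$, universal instantiation at $e:=\ulcorner P^\ast\urcorner$ gives $\mathrm{T}\vdash\Phi(P_{\ulcorner P^\ast\urcorner})$, i.e.\ $\mathrm{T}\vdash\Phi(\ulcorner P^\ast\urcorner)$ in the notation of Lemma~\ref{lemma:auto-fp}. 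Hence $\mathrm{Prov}_\mathrm{T}(\ulcorner\Phi(\ulcorner P^\ast\urcorner)\urcorner)$ is a true $\Sigma^0_1$ sentence, a fact provable in any $\Sigma^0_1$-complete extension of $\mathrm{I}\Sigma_1$, in particular in $U$.

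Next I would invoke the hypothesis that $U$ proves $\mathrm{T}$ sound for $\Sigma^0_1$ evaluation over $\mathcal{C}$. By the arithmetization clause of Thesis~\ref{thesis:ppc}, $\Phi(\ulcorner P^\ast\urcorner)$ is exactly an evaluation statement for the coded problem $P^\ast\in\mathcal{C}$, so it lies in the syntactic class to which that soundness scheme applies; the pertinent instance is precisely $\mathrm{Prov}_\mathrm{T}(\ulcorner\Phi(\ulcorner P^\ast\urcorner)\urcorner)\rightarrow\Phi(\ulcorner P^\ast\urcorner)$, which is the first displayed conclusion. (This step uses only the bare soundness instance; the provability fact of the previous paragraph is needed only for the stronger unconditional claim $U\vdash\Phi(\ulcorner P^\ast\urcorner)$, which I would record as a corollary.) For the second conclusion I would transport the diagonal equivalence: Lemma~\ref{lemma:auto-fp} gives $\mathrm{T}\vdash P^\ast\leftrightarrow\Phi(\ulcorner P^\ast\urcorner)$, and since $U$ formalizes $\mathrm{T}$ it proves $\mathrm{Prov}_\mathrm{T}$ of this biconditional; either because $U$ extends $\mathrm{T}$, or by applying the soundness scheme once more to this low-complexity evaluation biconditional, $U\vdash P^\ast\leftrightarrow\Phi(\ulcorner P^\ast\urcorner)$. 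Chaining with the first conclusion by propositional logic inside $U$ then yields $U\vdash\mathrm{Prov}_\mathrm{T}(\ulcorner\Phi(\ulcorner P^\ast\urcorner)\urcorner)\rightarrow P^\ast$.

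The step I expect to be the real obstacle is the complexity bookkeeping: one must justify that the fixed-point instance $\Phi(\ulcorner P^\ast\urcorner)$ --- and, for the second conclusion, the biconditional $P^\ast\leftrightarrow\Phi(\ulcorner P^\ast\urcorner)$ --- genuinely falls within the class of $\Sigma^0_1$ evaluation statements for which $U$ certifies $\mathrm{T}$'s soundness, rather than picking up extra quantifier alternations from the diagonal construction. This is where the standing assumption of Thesis~\ref{thesis:ppc} (a uniformly coded, arithmetically definable evaluation predicate) carries the weight; were $\Phi$ permitted unrestricted complexity, the soundness hypothesis on $U$ would have to be strengthened accordingly, and the reading of the conclusion as a \emph{bona fide} local reflection principle --- rather than a mere $\Sigma^0_1$-reflection instance --- would require the finer analysis of Section~\ref{sec:class-nclass}. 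A secondary, purely bookkeeping point is the ``formalizes versus extends'' distinction for $U$, which is absorbed by noting that $U$ is sound and $\Sigma^0_1$-complete.
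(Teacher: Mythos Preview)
Your proposal is correct and follows essentially the same route as the paper's own proof sketch: instantiate $G$ at $\ulcorner P^\ast\urcorner$, lift the resulting $\mathrm{T}$-provability to $U$ by $\Sigma^0_1$-completeness, invoke the assumed soundness of $\mathrm{T}$ for $\Sigma^0_1$ evaluation to obtain the reflection instance, and then transport along the fixed-point biconditional. Your write-up is in fact more careful than the paper's sketch---you correctly isolate that the first displayed conclusion already follows from the bare soundness scheme (the provability fact being needed only for the unconditional $U\vdash\Phi(\ulcorner P^\ast\urcorner)$), and you flag the complexity-class bookkeeping and the ``formalizes versus extends'' point, both of which the paper leaves implicit.
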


\begin{proof}[Proof Sketch.]
	If $\mathrm{T}\vdash G$, then $\mathrm{T}$ proves $\Phi(\ulcorner P^\ast\urcorner)$. Meta-theory $U$ therefore proves $\mathrm{Prov}_\mathrm{T}(\ulcorner\Phi(\ulcorner P^\ast\urcorner)\urcorner)$. By soundness for $\Sigma^0_1$ evaluation predicates, $U$ infers the corresponding instance of reflection. The fixed-point identity $P^\ast\leftrightarrow\Phi(\ulcorner P^\ast\urcorner)$ transfers this to $P^\ast$ itself.
\end{proof}

\begin{observation}
	Global class-quantification produces three structural consequences:
	\begin{enumerate}[label=(\roman*)]
		\item a diagonal fixed point internal to the coded domain,
		\item a verification load that collapses to local reflection at that fixed point,
		\item the impossibility of validating all such reflection instances within any consistent, recursively axiomatizable finitary theory.
	\end{enumerate}
\end{observation}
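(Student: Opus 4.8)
The plan is to read the \textsl{Observation} above as a consolidation of the section and to establish its three clauses in turn: clauses~(i) and~(ii) are immediate from results already proved, while clause~(iii) carries the genuinely new content and will be reduced to Gödel's second incompleteness theorem, equivalently to Löb's theorem \citep{loeb}. So I would first dispatch the two ``bookkeeping'' clauses and then concentrate effort on the impossibility statement.

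For clause~(i): given any global assertion $G\equiv\forall P\in\mathcal{C}\,\Phi(P)$, put $\varphi(x)\equiv\Phi(P_x)$. By Lemma~\ref{lemma:repr-closure} the code of $\varphi$ lies in the domain of class-quantification, so the \textsc{Diagonal Lemma} yields $\psi$ with $\psi\leftrightarrow\varphi(\ulcorner\psi\urcorner)$; decoding $\psi$ produces $P^\ast\in\mathcal{C}$ with $P^\ast\leftrightarrow\Phi(\ulcorner P^\ast\urcorner)$, which is internal to the coded domain because $\mathcal{C}$ is \textsc{Gödel-Representable}. This is exactly Lemma~\ref{lemma:auto-fp}. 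For clause~(ii): if $\mathrm{T}$ verifies $G$ in the sense of Lemma~\ref{def:verification}, then $\mathrm{T}\vdash G$, hence $\mathrm{T}\vdash\Phi(\ulcorner P^\ast\urcorner)$; Theorem~\ref{thm:meta-reflection} then forces any sound meta-theory $U$ formalizing $\mathrm{T}$ and proving $\mathrm{T}$ to be $\Sigma^0_1$-sound over $\mathcal{C}$ to derive the local reflection instance $\mathrm{Prov}_\mathrm{T}(\ulcorner\Phi(\ulcorner P^\ast\urcorner)\urcorner)\rightarrow\Phi(\ulcorner P^\ast\urcorner)$, which by the fixed-point biconditional transfers to $P^\ast$. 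Thus, after reduction, the entire verification load of $G$ is a single local reflection assertion at the diagonal point — clause~(ii).

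For clause~(iii) I would argue that no consistent, recursively axiomatizable $\mathrm{T}\supseteq\mathrm{I}\Sigma_1$ can prove all of these reflection instances at once. As $\Phi$ ranges over arithmetic properties of codes, the induced fixed points realise every sentence up to provable equivalence: for a target $\chi$, take $\Phi$ to be the constant property $x\mapsto\chi$, whose fixed point $P^\ast$ satisfies $P^\ast\leftrightarrow\chi$, so the associated reflection instance is $\mathrm{Prov}_\mathrm{T}(\ulcorner\chi\urcorner)\rightarrow\chi$. Hence ``validating all such instances in $\mathrm{T}$'' means $\mathrm{T}$ proves the full local reflection schema $\mathsf{RFN}(\mathrm{T})$. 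By Löb's theorem, $\mathrm{T}\vdash\bigl(\mathrm{Prov}_\mathrm{T}(\ulcorner\chi\urcorner)\rightarrow\chi\bigr)$ forces $\mathrm{T}\vdash\chi$; taking $\chi$ refutable gives $\mathrm{T}\vdash\bot$, contradicting consistency. (Equivalently, $\mathsf{RFN}(\mathrm{T})$ proves $\mathrm{Con}(\mathrm{T})$, so Gödel's second theorem applies.) Therefore each obligation can be met only by passing to a strictly stronger meta-theory $U$ — the move of Theorem~\ref{thm:meta-reflection} — and iterating yields an unbounded hierarchy of the Turing--Feferman kind \citep{feferman68}; note the argument uses nothing about ``finitariness'' beyond consistency, so the stated qualifier only makes the conclusion stronger.

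The main obstacle is pinning down the quantifier in~(iii): one must fix precisely which family of $\Phi$'s — hence which reflection instances — is at issue, and show it is at once (a) rich enough to force $\mathrm{Con}(\mathrm{T})$ and (b) tame enough that Löb/Gödel~II apply cleanly to each member. I would neutralise this by declining to argue about the whole family and instead exhibiting one admissible $G$ whose fixed-point instance is $\mathrm{T}$-provably equivalent to the Gödel sentence (taking $\varphi(x)\equiv\neg\mathrm{Prov}_\mathrm{T}(x)$ under the coding), so that the reflection obligation at that point is $\mathrm{T}$-unprovable by a direct citation of the second incompleteness theorem; the general clause then follows \emph{a fortiori}. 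A secondary point to keep straight is the distinction between ``the fixed-point instance is $\mathrm{T}$-unprovable'' (true only for particular $G$) and ``no consistent finitary $\mathrm{T}$ settles all of them'' (the actual claim), which the one-instance reduction renders harmless.
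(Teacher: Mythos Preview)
Your proposal is correct. In the paper this statement is an unproved \textsl{Observation} summarizing the preceding development: clauses~(i) and~(ii) are intended simply as restatements of Lemma~\ref{lemma:auto-fp} and Theorem~\ref{thm:meta-reflection}, exactly as you identify them, while clause~(iii) is asserted without any argument. Your reduction of~(iii) to L\"ob's theorem and G\"odel's second incompleteness theorem---either by observing that the constant choices $\Phi(x)\equiv\chi$ already generate the full local reflection schema $\mathsf{RFN}(\mathrm{T})$, or by exhibiting the single instance $\varphi(x)\equiv\neg\mathrm{Prov}_\mathrm{T}(x)$ whose fixed point is the G\"odel sentence---is the natural way to cash out the claim, and supplies content the paper leaves entirely implicit. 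The one point to keep tidy is the identification of problem-codes with sentence-codes when you write $\varphi(x)\equiv\neg\mathrm{Prov}_\mathrm{T}(x)$: under the paper's conventions (see the proof sketch of Lemma~\ref{lemma:auto-fp} and Theorem~\ref{thm:reduction}) this identification is tacitly assumed, so the step is licit, but it is worth flagging since $x$ formally ranges over codes of problems rather than sentences.
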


\begin{remark}
	These mechanisms are invariant across domains: whenever a problem class is representable and quantification is global, the resulting assertion inherits \textsc{Impredicativity}.
\end{remark}

\begin{theorem}[Reduction to Arithmetized Syntax]\label{thm:reduction}
	Let $\mathcal{C}$ be a \textsc{Gödel-Representable} problem class whose evaluation predicate is arithmetically definable. Then for every problem $P\in\mathcal{C}$ there exists a sentence $\theta_P$ such that:
\begin{enumerate}[label=(\roman*)]
	\item $P$ is solvable (in the intended semantics of $\mathcal{C}$) iff $\theta_P$ is true in the standard model $\mathbb{N}$,
	\item $P$ is provably solvable in a theory $\mathrm{T}$ iff $\mathrm{T}\vdash\theta_P$.
\end{enumerate}
	Thus problem evaluation in $\mathcal{C}$ reduces to sentence evaluation in the arithmetical theory representing $\mathcal{C}$.
\end{theorem}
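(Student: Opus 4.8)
The plan is to read the theorem as the \emph{arithmetization half} of Thesis~\ref{thesis:ppc}: it packages the \textsc{Gödel-Representability} data into a single translation $P\mapsto\theta_P$ and verifies that this translation is faithful both semantically (clause~(i)) and deductively (clause~(ii)). First I would unwind the definition of a \textsc{Gödel-Representable} class: beyond the primitive-recursive pair $(\mathsf{code},\mathsf{decode})$ with $\mathsf{decode}\circ\mathsf{code}=\mathrm{id}$, it supplies, uniformly in $e$, an arithmetical formula $\mathrm{Eval}(e,\vec y)$ defining the evaluation relation of $P_e$. Put $\mathrm{Solv}(e):\equiv\exists\vec y\,\mathrm{Eval}(e,\vec y)$ (or simply $\mathrm{Eval}(e)$ when the evaluation relation is already parameter-free), and set
\[
  \theta_P \;:=\; \mathrm{Solv}(\ulcorner P\urcorner),
\]
with $\ulcorner P\urcorner$ the numeral for the code of $P$. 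Since the code is a fixed numeral, every free variable of $\mathrm{Solv}$ is substituted away and $\theta_P$ is a genuine sentence; moreover $P\mapsto\ulcorner\theta_P\urcorner$ is primitive recursive, being the composition of $\mathsf{code}$ with numeral-substitution into a fixed formula.

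For clause~(i) I would argue directly that $\mathbb N\models\theta_P$ iff $P$ is solvable in the intended semantics of $\mathcal C$. This is precisely the faithfulness condition built into \textsc{Gödel-Representability}: $\mathrm{Eval}$ is required to \emph{define} the evaluation relation in the standard model, not merely to weakly represent it, so its extension over $\mathbb N$ coincides with the actual relation ``$\vec y$ solves $P_e$''. Because $\mathsf{decode}(\mathsf{code}(P))=P$, evaluating $\mathrm{Solv}$ at $\ulcorner P\urcorner$ asks exactly whether $P$ admits a solution; the passage from ``a standard witness exists'' to ``the existential arithmetic sentence is true in $\mathbb N$'' is the Tarski clause for $\exists$, which holds unconditionally over the standard model.

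For clause~(ii) I would fix the reading of ``provably solvable in $\mathrm T$'' as ``$\mathrm T$ proves the canonical formalization of the assertion that $P$ has a solution.'' That formalization \emph{is} $\theta_P$; or, if one insists on an independently phrased canonical sentence $\rho_P$, then $\mathrm T\supseteq\mathrm{I}\Sigma_1$ already proves $\rho_P\leftrightarrow\theta_P$, since both are assembled from the same primitive-recursive coding apparatus whose defining equations $\mathrm T$ verifies. Hence $\mathrm T\vdash\theta_P$ iff $P$ is provably solvable in $\mathrm T$. The closing assertion of the theorem --- that problem evaluation in $\mathcal C$ reduces to sentence evaluation in the theory representing $\mathcal C$ --- is then just the observation that $P\mapsto\theta_P$ is the sought reduction, with semantic content preserved by~(i) and deductive content by~(ii).

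The main obstacle is not logical depth but pinning down ``intended semantics'' and ``provably solvable'' so that the two clauses cohere: the result is only as sharp as the faithfulness assumption packed into \textsc{Gödel-Representability}. In particular, if the evaluation relation carries genuine quantifier complexity (say $\Pi^0_2$), clause~(i) still goes through over $\mathbb N$ by Tarski, but one must resist the temptation to expect clause~(ii) to track truth: (ii) is a purely syntactic equivalence, parametric in $\mathrm T$, and it is (i) together with soundness of $\mathrm T$ that carries the semantic faithfulness. I would therefore state (ii) explicitly as a statement about provability of the fixed object $\theta_P$, and flag the uniformity of $P\mapsto\ulcorner\theta_P\urcorner$ as the only moving part that genuinely needs the primitive-recursiveness hypotheses rather than mere definability.
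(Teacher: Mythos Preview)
Your proposal is correct and, in fact, substantially more explicit than what the paper offers. The paper states Theorem~\ref{thm:reduction} without a dedicated proof environment; the only justification is the subsequent \textsc{Logic} block (Equivalence of FOL), which argues by contraposition that if $P\mapsto\theta_P$ failed to preserve syntactic equivalence then the standard arithmetization of substitution and provability could not be carried out, citing H\'ajek--Pudl\'ak and Visser. In other words, the paper treats the theorem as an immediate repackaging of the \textsc{G\"odel-Representability} data and defers the technical content to the cited literature.

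Your route---explicitly setting $\theta_P:=\mathrm{Solv}(\ulcorner P\urcorner)$, reading clause~(i) off the Tarski clause for $\exists$ together with the faithfulness of $\mathrm{Eval}$, and reading clause~(ii) as definitional (or as a provable equivalence in $\mathrm{T}\supseteq\mathrm{I}\Sigma_1$ between two canonical formalizations built from the same primitive-recursive apparatus)---is exactly the construction the paper has in mind but does not spell out. Your closing caveat, that the sharpness of the result is bounded by how precisely ``intended semantics'' and ``provably solvable'' are fixed, is apt and arguably more honest than the paper's one-line appeal to standard arithmetization. Nothing in your argument diverges from the paper's approach; you have simply written the proof the paper omitted.
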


\begin{logic}[Equivalence of FOL]\label{cor:equivalence}
	If the correspondence $P \mapsto \theta_P$ failed to preserve syntactic equivalence, then no standard arithmetization of syntax could be carried out, as the induced coding would not support the canonical internal definitions of substitution and provability, see \citep{hajekpudlak98,visser02}.
\end{logic}

\begin{theorem}[Structural Undissolvability of $\mathsf{P}\text{ vs. }\mathsf{NP}$]
	Let $\mathrm{T}$ be any consistent, recursively axiomatizable, arithmetically sound
	extension of $\mathrm{I}\Sigma_1$. If $\mathrm{T}$ arithmetizes $\mathsf{NP}$-verification,
	then the formal sentence expressing $\mathsf{P}=\mathsf{NP}$ \citep{cook1971} requires, for its
	verification in $\mathrm{T}$, a \textsc{Reflection Principle} that is not provable in $\mathrm{T}$.
	Consequently, no such theory can internally justify $\mathsf{P}=\mathsf{NP}$.
\end{theorem}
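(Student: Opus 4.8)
The plan is to recognize the arithmetized $\mathsf{P}=\mathsf{NP}$ sentence as a class-quantification over the \textsc{Gödel-Representable} class $\mathsf{NP}$, extract its forced diagonal fixed point via the machinery of Section~\ref{sec:fixed}, and then use Theorem~\ref{thm:meta-reflection} to show that verifying the conjecture collapses to a reflection principle which Gödel's Second Incompleteness Theorem places outside $\mathrm{T}$. First I would fix the formalization. Because $\mathrm{T}$ arithmetizes $\mathsf{NP}$-verification, the class $\mathsf{NP}$—presented by codes $e$ of polynomial-time verifiers $V_e$ with attached polynomial bounds—is \textsc{Gödel-Representable}, and the matrix $\Phi(e)\equiv$ ``$P_e\in\mathsf{P}$'', i.e.\ ``there exist a machine code $m$ and an exponent $k$ with $M_m$ deciding $L_{V_e}$ within $n^{k}$ steps'', is arithmetically definable; reducing to the $\mathsf{NP}$-complete instance $\mathsf{SAT}$ shows it is in fact $\Sigma^0_2$. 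Thus $G\equiv(\mathsf{P}=\mathsf{NP})$ is literally of the form $\forall P\in\mathsf{NP}\,\Phi(P)$ of Definition~\ref{def:classquant}.

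Next I would apply Lemma~\ref{lemma:auto-fp} to $G$, obtaining a problem $P^{\ast}\in\mathsf{NP}$ with $P^{\ast}\leftrightarrow\Phi(\ulcorner P^{\ast}\urcorner)$—the self-referential instance of the conjecture, an $\mathsf{NP}$ language whose intended membership behaviour encodes the assertion that it is itself polynomial-time decidable. By Theorem~\ref{thm:reduction} its solvability is mirrored by a sentence $\theta_{P^{\ast}}$ with $\theta_{P^{\ast}}\leftrightarrow\Phi(\ulcorner P^{\ast}\urcorner)$, so adjudicating this one instance of $\mathsf{P}=\mathsf{NP}$ is adjudicating a genuinely reflective sentence rather than a routine complexity fact. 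Feeding $G$ and $P^{\ast}$ into Theorem~\ref{thm:meta-reflection}: if $\mathrm{T}$ verifies $G$, then any sound meta-theory $U$ formalizing $\mathrm{T}$ and proving $\mathrm{T}$ sound for $\Sigma^0_1$ evaluation over $\mathsf{NP}$ must prove $\mathrm{Prov}_{\mathrm{T}}(\ulcorner\Phi(\ulcorner P^{\ast}\urcorner)\urcorner)\rightarrow\Phi(\ulcorner P^{\ast}\urcorner)$, and hence $\mathrm{Prov}_{\mathrm{T}}(\cdots)\rightarrow P^{\ast}$. By Lemma~\ref{def:verification} the soundness supplement that \enquote{verification in $\mathrm{T}$} must carry is exactly reflection for the syntactic class of $G$, and the fixed point $P^{\ast}$ certifies that this supplement is non-removable: a concrete $\mathsf{NP}$-instance of the conjecture has its justification collapse to precisely this reflection, so no verification of $G$ can proceed with less.

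It remains to show the reflection supplement is unavailable inside $\mathrm{T}$. Here Gödel's Second Incompleteness Theorem enters: reflection for the syntactic class of $G$ subsumes $\mathrm{Rfn}_{\Pi^0_1}(\mathrm{T})$, which implies $\mathrm{Con}(\mathrm{T})$ and so is not a theorem of the consistent, recursively axiomatizable $\mathrm{T}$. The step I expect to be the main obstacle is establishing that the reflection requirement is \emph{substantive} rather than an artefact of conflating justification with provability—for if $\mathrm{T}\vdash G$ then $\mathrm{T}$ trivially proves $\mathrm{Prov}_{\mathrm{T}}(\ulcorner\Phi(\ulcorner P^{\ast}\urcorner)\urcorner)\rightarrow\Phi(\ulcorner P^{\ast}\urcorner)$ by Löb's Theorem. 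The resolution I would develop is that, in the sense of Lemma~\ref{def:verification}, the reflection must serve as an independent \emph{ground} for $G$ and so may not be recovered from $G$ itself; as such a ground it is the uniform $\mathrm{Rfn}_{\Gamma}(\mathrm{T})$, which escapes $\mathrm{T}$, while the implication $\mathrm{T}\vdash G\Rightarrow\mathrm{T}\vdash P^{\ast}$ merely exhibits the circularity the theorem diagnoses instead of discharging the requirement. Assembling these pieces yields the conclusion: verification of the formal $\mathsf{P}=\mathsf{NP}$ in $\mathrm{T}$ forces a reflection principle strictly beyond $\mathrm{T}$, so no consistent, recursively axiomatizable, arithmetically sound extension of $\mathrm{I}\Sigma_1$ that arithmetizes $\mathsf{NP}$-verification can internally justify $\mathsf{P}=\mathsf{NP}$.
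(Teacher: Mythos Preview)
Your proposal is correct and shares the paper's core strategy—class-quantification forces a diagonal fixed point whose verification collapses to a reflection principle beyond $\mathrm{T}$—but the decomposition differs. The paper does not prove this theorem in place; it defers to Section~\ref{sec:class-nclass}, where $\mathsf{P}=\mathsf{NP}$ is formalized in its \emph{uniform-solver} form $\exists M\,\forall x\,\mathsf{Correct}(M,x)$ (Definition~\ref{def:uniform-determ}), the hidden class-quantifier over $\mathsf{NP}$-codes is then \emph{extracted} via Proposition~\ref{prop:implicit}, and the fixed point and verification collapse are obtained through the Class--NClass apparatus (Lemma~\ref{lemma:predimp}, Theorems~\ref{thm:diagclass} and~\ref{thm:vercollapse}, Corollary~\ref{cor:pnp}). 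You instead formalize $\mathsf{P}=\mathsf{NP}$ directly as $\forall P\in\mathsf{NP}\,\Phi(P)$ and apply the Section~\ref{sec:fixed} machinery (Lemma~\ref{lemma:auto-fp}, Theorem~\ref{thm:meta-reflection}) without the intermediate uniform-solver analysis. Your route is shorter and is more explicit about the terminal appeal to G\"odel~II, which the paper leaves as a conditional in Theorem~\ref{thm:vercollapse} (``If $P^\ast$ is a $\Pi^0_2$-fact unprovable in $\mathrm{T}$\ldots''); the paper's route, on the other hand, explains \emph{why} the surface $\exists\forall$ syntax of the conjecture nonetheless carries the suppressed universal quantifier that triggers impredicativity, a point your direct $\forall P\in\mathsf{NP}$ formalization assumes rather than argues. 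Your identification of the L\"ob-type worry—that $\mathrm{T}\vdash G$ already yields the reflection instance at $P^\ast$ trivially—and your resolution via the ``independent ground'' reading of Lemma~\ref{def:verification} are sharper than anything the paper makes explicit, and are consistent with its intent.
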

\begin{remark}
	The obstruction established herein does not conflict with the classical \textsc{Relativization Barrier} of \citet{bakergilsol}, the \textsc{Natural-proofs Barrier} of \citet{rr97}. Those results constrain specific families of proof techniques. By contrast, the present obstruction arises from the arithmetized verification conditions of the uniform statement itself. The diagonal fixed point is forced by representability and class-quantification, not by the expressive limitations of any particular method. Thus the argument is orthogonal to the standard barriers: it neither circumvents nor violates them, but operates at a level where technique restrictions are not yet in play.
\end{remark}

\section{Application: Uniformity in Class--NClass Separations}\label{sec:class-nclass}

\subsection{Deterministic and Nondeterministic Classes}

\begin{exposition}
	We apply the transport mechanism to uniform complexity assertions. Let $\mathsf{Class}$ be a recursively representable family of deterministic machines (e.g.\ polynomial-time Turing machines) and $\mathsf{NClass}$ a recursively representable nondeterministic class (e.g.\ nondeterministic polynomial time). Both admit arithmetized evaluation predicates. The uniform separation problem asks whether every nondeterministic language has a deterministic realization.
\end{exposition}

\begin{definition}[Uniform Determinization Statement]\label{def:uniform-determ}
	The assertion $\mathsf{Class}=\mathsf{NClass}$ is formalized by
	\begin{equation}
		\varphi_{\mathsf{Class=NClass}}:\;\exists M\in\mathsf{Class}\;\forall x\;\mathsf{Correct}(M,x),
	\end{equation}
	where $\mathsf{Correct}(M,x)$ expresses that $M$ decides a fixed complete problem for $\mathsf{NClass}$ on input $x$.
\end{definition}

\begin{exposition}
	Every machine $N\in\mathsf{NClass}$ possesses an arithmetically definable verification predicate $\mathsf{Eval}_{\mathsf{NClass}}(e,x,y)$. Hence the class is \emph{predicate-closed}: quantifying over $\mathsf{NClass}$ amounts to quantifying over a \textsc{Gödel}-coded space of first-order definable predicates. Any uniform deterministic solver must therefore operate over the entire encoded predicate domain, so uniformity intrinsically \enquote{imports} the \textsc{Impredicativity} analyzed in Section~\ref{sec:fixed}.
\end{exposition}

\begin{example}[Combinatorial Escalation via First-Order Encodings]\label{ex:rubik}
	Consider the optimal-move problem for the \textsc{Rubik's Cube}. Let $\mathsf{At}(s,p,c)$ record \enquote{sticker} positions, $\mathsf{Row}_{xy}(r,p)$ specify face-slices, $\mathsf{RotXYZ}(m)$ encode moves, and $\mathsf{Next}(m,c,c')$ denote transitions. Transition constraints are first-order expressible:
	\begin{align}
		&\mathsf{RotXYZ}(m)
			\;\wedge\;
			\mathsf{Row}_{xy}(r,p_1,p_2,p_3)
		\\[0.3em]
		&\quad\to\;
			\forall s_1\,s_2\,s_3\;\bigl(
		\\[0.3em]
		&\qquad
			\mathsf{At}(s_1,p_1,c)
			\;\wedge\;
			\mathsf{At}(s_2,p_2,c)
			\;\wedge\;
			\mathsf{At}(s_3,p_3,c)
		\\[0.3em]
		&\qquad\to\;
			\mathsf{Next}(m,c,c')
			\bigr).
	\end{align}
	Despite combinatorial blowup, the transition system remains first-order definable. Thus, the \enquote{global optimality question}—whether a given solver outputs minimal solutions for \emph{all} configurations—already quantifies over a predicate-closed, \textsc{Gödel-Representable} domain and thereby inherits the \textsc{Impredicativity} mechanism.
\end{example}
\begin{lemma}[First-Order Evaluation Schema for \texorpdfstring{$\mathsf{NP}$}{NP}]\label{lemma:np-fo-eval}
	There exists a first-order formula $\mathsf{Eval}_{\mathsf{NP}}(e,x,y)$ of arithmetic such that:
	\begin{enumerate}[label=(\roman*)]
		\item For every code $e$ of a nondeterministic polynomial-time Turing machine $M_e$ and every input $x$, there is a polynomial $p_e$ with
		\begin{equation}
			M_e \text{ accepts } x
			\;\Longleftrightarrow\;
			\exists y\;\bigl(|y|\leq p_e(|x|)\;\wedge\;\mathsf{Eval}_{\mathsf{NP}}(e,x,y)\bigr).
		\end{equation}
		\item For every language $L\in\mathsf{NP}$ there exists a code $e$ such that for all $x$,
		\begin{equation}
			x\in L
			\;\Longleftrightarrow\;
			\exists y\;\mathsf{Eval}_{\mathsf{NP}}(e,x,y).
		\end{equation}
	\end{enumerate}
	In particular, the class $\mathsf{NP}$ is predicate-closed: every $\mathsf{NP}$ language is obtained by instantiating the single first-order schema $\mathsf{Eval}_{\mathsf{NP}}$ at an appropriate machine code $e$.
\end{lemma}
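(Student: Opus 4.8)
The plan is to reduce the statement to two classical ingredients: the certificate (verifier) characterization of $\mathsf{NP}$, and Gödel's arithmetization of bounded computations. First I would fix a standard enumeration of \emph{clocked} nondeterministic Turing machines, so that a code $e=\langle i,j\rangle$ denotes ``machine $i$ run for at most $|x|^j+j$ steps''; the polynomial $p_e(n)=n^j+j$ is then primitive--recursively extractable from $e$. This makes the dependence ``there is a polynomial $p_e$'' in clause (i) uniform in the code, which is exactly what a single schema demands.

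Next I would define $\mathsf{Eval}_{\mathsf{NP}}(e,x,y)$ to assert, in the language of arithmetic, that $y$ encodes a complete \emph{accepting computation history} of $M_e$ on input $x$: a sequence of configurations $c_0,\dots,c_T$ with $T\le p_e(|x|)$, where $c_0$ is the start configuration on $x$, each $c_{k+1}$ follows from $c_k$ by one of the (nondeterministic) transition rules of machine $i$, and $c_T$ is accepting. Using prime--power sequence coding (equivalently the $\beta$-function; standard, via the arithmetized-syntax references already cited), ``$y$ codes such a sequence'' is expressible by an arithmetical, indeed $\Delta_0$, formula, with the transition relation, the decoding of machine $i$, and the clock read off from $j$ all arithmetized; in particular $\mathsf{Eval}_{\mathsf{NP}}$ is first-order. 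Crucially, since a valid history has length bounded by the clock $p_e(|x|)$, the condition $|y|\le q_e(|x|)$ for a suitable polynomial $q_e$ (depending on $p_e$ and on the state/alphabet count extracted from $i$) is \emph{forced} by $\mathsf{Eval}_{\mathsf{NP}}$ itself.

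With this definition, clause (i) is essentially the definition of acceptance: $M_e$ accepts $x$ iff some run of at most $p_e(|x|)$ steps reaches an accepting state, iff some $y$ with $|y|\le q_e(|x|)$ satisfies $\mathsf{Eval}_{\mathsf{NP}}(e,x,y)$ — the bounded existential quantifier ranges precisely over candidate histories. For clause (ii), given $L\in\mathsf{NP}$ I would invoke the verifier theorem: there is an NPTM $M$ deciding $L$ in time $p(n)$ for some polynomial $p$; choosing $j$ with $p(n)\le n^j+j$ and letting $e$ be the code of $M$ clocked to $j$, we get $x\in L\iff M_e\text{ accepts }x$. Because $\mathsf{Eval}_{\mathsf{NP}}$ internally bounds $|y|$, the \emph{unbounded} quantifier $\exists y\,\mathsf{Eval}_{\mathsf{NP}}(e,x,y)$ in (ii) picks up no spurious witnesses and coincides with the bounded one, giving $x\in L\iff\exists y\,\mathsf{Eval}_{\mathsf{NP}}(e,x,y)$. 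Predicate-closure is then immediate: every $L\in\mathsf{NP}$ is the instance of the single schema $\mathsf{Eval}_{\mathsf{NP}}$ at the parameter $e$.

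The main obstacle is not any single step but the \emph{uniformity bookkeeping}: packaging the universal simulation so that one fixed arithmetic formula decodes an arbitrary machine $i$, recovers its states, alphabet, and transitions, applies the clock read off from $j$, and verifies a history — all without the formula's quantifier complexity growing with $e$. The delicate point to get right is the interplay between the two clauses' quantifiers: (i) wants an explicit polynomial bound on $y$, while (ii) wants none, so $\mathsf{Eval}_{\mathsf{NP}}$ must be written to carry its own length bound — hence the choice of ``full computation history'' rather than ``certificate string'' as the meaning of $y$. Everything else — that prime--power coding of sequences is $\Delta_0$-definable, that single-step transition is arithmetically definable, that the verifier theorem holds — is standard and can be cited rather than re-derived.
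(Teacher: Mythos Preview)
Your proposal is correct and follows essentially the same route as the paper's proof sketch: both arithmetize accepting runs of the nondeterministic machine (the paper speaks of ``time--space diagrams,'' you of ``computation histories''), express the local transition and acceptance constraints in first order, and then invoke the verifier characterization of $\mathsf{NP}$ for clause~(ii). Your explicit clocking device $e=\langle i,j\rangle$ and the care you take reconciling the bounded quantifier in~(i) with the unbounded one in~(ii) are refinements that the paper's sketch leaves implicit.
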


\begin{proof}[Proof Sketch.]
	Fix a \textsc{Gödel Numbering} of nondeterministic \textsc{Turing Machines} and encode runs of $M_e$ on input $(x,y)$ as finite time--space diagrams over $\mathbb{N}$, with predicates for tape symbols, head position, state, and transition. The local transition constraints and acceptance condition are first-order expressible, yielding a uniform formula $\mathsf{Eval}_{\mathsf{NP}}(e,x,y)$ that holds exactly when $(x,y)$ describes an accepting run of $M_e$ in time bounded by a polynomial in $|x|$. Every $\mathsf{NP}$ language admits such a verifier, so each $L\in\mathsf{NP}$ arises by fixing a single parameter $e$ in this schema. Thus $\mathsf{NP}$ fits the predicate-closed pattern of Lemma~\ref{lemma:predimp}.
\end{proof}
\begin{remark}
	The predicate-closure of $\mathsf{NP}$ is not merely a formal artifact. Empirically, every standard computational model capable of expressing nondeterministic polynomial-time verification admits a uniform encoding of NP problems by a single evaluation schema. This robustness underlies \textsc{Kleene Realizability} \citep{kleene52}, and the practical fact that all contemporary computational systems can represent NP verification tasks in a common structural format. The uniform first-order evaluator thus reflects a stable empirical invariance of implementable computation.
\end{remark}

\begin{lemma}[Predicate-Closed Representation Induces Impredicativity]\label{lemma:predimp}
	Let $\mathsf{C}$ be a class of procedures such that:
	\begin{enumerate}[label=(\roman*)]
		\item each code $e$ has a first-order definable evaluation predicate $\mathsf{Eval}_{\mathsf{C}}(e,x,y)$,
		\item $\mathsf{Eval}_{\mathsf{C}}$ is arithmetically representable in a finitary theory $\mathrm{T}$.
	\end{enumerate}
	Then any global solver claim
	\begin{equation}
		\exists M\in\mathsf{Class}\;\forall e\in\mathsf{C}\;\forall x\;\mathsf{Trans}(e,M,e,x)
	\end{equation}
	quantifies over a \textsc{Gödel}-coded predicate space and induces a fixed point $e^\ast$ satisfying
	\begin{equation}
		P^\ast\;\leftrightarrow\;\mathsf{Trans}(e^\ast,M,e^\ast,x^\ast).
	\end{equation}
	Hence verifying such a claim requires a \textsc{Reflection Principle} for $P^\ast$, and is therefore impredicative relative to $\mathrm{T}$.
\end{lemma}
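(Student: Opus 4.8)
\emph{Proof plan.} The plan is to run the transport pipeline of Section~\ref{sec:fixed} on the matrix $\mathsf{Trans}$, treating the existentially quantified deterministic machine $M$ as a fixed parameter throughout. First I would record that hypotheses (i)--(ii) make $\mathsf{C}$ a \textsc{Gödel-Representable} class: each code $e$ carries the first-order evaluation predicate $\mathsf{Eval}_{\mathsf{C}}(e,x,y)$, this predicate is arithmetically representable in $\mathrm{T}$, and $\mathsf{C}$ comes with a primitive--recursive coding/decoding pair; since $\mathsf{Class}$ is likewise recursively representable, fixing a putative witness $M$ turns the solver claim into the class-quantified sentence
\begin{equation}
	G_M\;\equiv\;\forall e\in\mathsf{C}\;\forall x\;\mathsf{Trans}(e,M,e,x)
\end{equation}
of Definition~\ref{def:classquant}, with $\Phi(P_e)\equiv\forall x\,\mathsf{Trans}(e,M,e,x)$ and $M$ occurring only parametrically. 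By Lemma~\ref{lemma:repr-closure} the numeral $\ulcorner G_M\urcorner$ lies inside the range of $\forall e\in\mathsf{C}$, so the code of the claim is among the objects it quantifies over. The double occurrence of $e$ in $\mathsf{Trans}(e,M,e,x)$ is what makes re-entry unavoidable: it identifies the quantified procedure with the argument on which $M$ is asked to behave correctly.

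Next I would extract the fixed point. Applying Lemma~\ref{lemma:auto-fp} --- i.e.\ the \textsc{Diagonal Lemma} to $x\mapsto\Phi(P_x)$ --- yields a sentence $\psi$, with code $e^\ast$ and decoded problem $P^\ast=P_{e^\ast}\in\mathsf{C}$, such that
\begin{equation}
	P^\ast\;\leftrightarrow\;\forall x\;\mathsf{Trans}(e^\ast,M,e^\ast,x).
\end{equation}
Instantiating the inner universal at the critical input $x^\ast$ (equivalently, diagonalizing the inner variable as well, so $x^\ast$ is a term built from $e^\ast$) gives the asserted identity $P^\ast\leftrightarrow\mathsf{Trans}(e^\ast,M,e^\ast,x^\ast)$. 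Because the construction is uniform in $M$, no choice of deterministic witness escapes it. Then I would invoke Theorem~\ref{thm:meta-reflection}: if $\mathrm{T}$ verifies the solver claim then $\mathrm{T}\vdash\Phi(\ulcorner P^\ast\urcorner)$, so any sound meta-theory $U$ that formalizes $\mathrm{T}$ and proves $\mathrm{T}$ sound for $\Sigma^0_1$ evaluation over $\mathsf{C}$ must prove the local reflection instance $\mathrm{Prov}_\mathrm{T}(\ulcorner\Phi(\ulcorner P^\ast\urcorner)\urcorner)\rightarrow\Phi(\ulcorner P^\ast\urcorner)$, and the fixed-point identity carries this to $P^\ast$. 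That this instance is unavailable inside $\mathrm{T}$ follows from \textsc{Löb}'s theorem: $\mathrm{T}\vdash\mathrm{Prov}_\mathrm{T}(\ulcorner\Phi(\ulcorner P^\ast\urcorner)\urcorner)\rightarrow\Phi(\ulcorner P^\ast\urcorner)$ would give $\mathrm{T}\vdash P^\ast$, forcing $\mathrm{T}$ to certify its own solver's correctness on the diagonal instance, which a consistent $\mathrm{T}$ cannot do. Hence verification needs a \textsc{Reflection Principle} for $P^\ast$ strictly beyond $\mathrm{T}$, so the claim is impredicative relative to $\mathrm{T}$ in the sense of Thesis~\ref{thesis:ppc}.

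The hard part will be the bookkeeping in the fixed-point step: one must arithmetize $\mathsf{Trans}$ with $M$ as a genuine parameter and with the two code-slots identified, so that the \textsc{Diagonal Lemma} produces a sentence whose \emph{own} code is the one fed to $M$, and one must be careful that the inner input quantifier $\forall x$ is either harmless or cleanly absorbed into the diagonal term $x^\ast$. If the identification of the two $e$-slots were lost --- say $\mathsf{Trans}$ took the solver's code and the evaluated code independently --- the diagonal argument would only yield a fixed point about some unrelated procedure, and the reflection collapse of the last step would not follow. Everything downstream is the now-routine combination of Lemmas~\ref{lemma:repr-closure} and~\ref{lemma:auto-fp} with Theorem~\ref{thm:meta-reflection}.
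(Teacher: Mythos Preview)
Your proposal is correct and follows the paper's own argument: the paper's justification for this lemma is a three-sentence \textsl{Observation} stating that the \textsc{Diagonal Lemma} applied to $x\mapsto\mathsf{Trans}(x,M,x,\cdot)$ yields the fixed point, that correctness of the uniform solver at $e^\ast$ is equivalent to the relevant reflection instance, and that finitary theories cannot discharge this obligation. You have unpacked exactly this into the pipeline of Lemmas~\ref{lemma:repr-closure}, \ref{lemma:auto-fp} and Theorem~\ref{thm:meta-reflection}, and your added care about the inner $\forall x$ and the identification of the two $e$-slots, together with the explicit appeal to \textsc{L\"ob} for why the reflection instance is unavailable in $\mathrm{T}$, only makes precise what the paper leaves implicit.
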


\begin{observation}
	The \textsc{Diagonal Lemma} applied to $x\mapsto\mathsf{Trans}(x,M,x,\cdot)$ yields the fixed point. The correctness of a uniform solver at $e^\ast$ is equivalent to the relevant reflection instance. Finitary theories cannot discharge this obligation.
\end{observation}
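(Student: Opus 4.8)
The plan is to show that hypotheses (i)--(ii) place $\mathsf{C}$, together with the auxiliary solver relation $\mathsf{Trans}$, exactly in the setting of Section~\ref{sec:fixed}, so that the fixed point and the reflection obligation follow by specialising Lemma~\ref{lemma:auto-fp} and Theorem~\ref{thm:meta-reflection}. First I would do the representability bookkeeping: by (i) the schema $\mathsf{Eval}_{\mathsf{C}}(e,x,y)$ is first-order, by (ii) it is representable in $\mathrm{T}$, and the coding/decoding maps $e\mapsto P_e$ are primitive recursive and total, so every natural number names a procedure in $\mathsf{C}$ and $\mathsf{C}$ is \textsc{Gödel-Representable} in the sense of the earlier definition. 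Since any candidate solver $M$ is itself a member of the recursively representable family $\mathsf{Class}$, its input--output behaviour is arithmetically definable; hence $\mathsf{Trans}(e,M,e,x)$ --- ``$M$ correctly realises the procedure coded by $e$ on input $x$'' --- is an arithmetical formula obtained by composing $\mathsf{Eval}_{\mathsf{C}}$ with the arithmetisation of $M$'s computation, and the global claim sits in the class-quantified normal form $\exists M\,\forall e\in\mathsf{C}\,\forall x\,\mathsf{Trans}(e,M,e,x)$ of Definition~\ref{def:classquant}.

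Next I would produce the fixed point. Fixing a putative witness $M$, I form the arithmetical formula $\tau(z)\equiv\mathsf{Trans}(z,M,z,z)$, where the input slot is also set to the code $z$ so that no free parameter survives; by Lemma~\ref{lemma:repr-closure} its Gödel numeral lies in the domain of the class-quantifier $\forall e\in\mathsf{C}$, and applying the \textsc{Diagonal Lemma} to $\tau$ (respectively $\neg\tau$) yields a sentence $\psi$ with $\psi\leftrightarrow\tau(\ulcorner\psi\urcorner)$ (respectively $\psi\leftrightarrow\neg\tau(\ulcorner\psi\urcorner)$). Setting $e^\ast:=\ulcorner\psi\urcorner$, $x^\ast:=e^\ast$, and $P^\ast:=P_{e^\ast}$ --- a genuine procedure in $\mathsf{C}$, by totality of decoding --- gives the displayed fixed point $P^\ast\leftrightarrow\mathsf{Trans}(e^\ast,M,e^\ast,x^\ast)$, with the negative-polarity variant $P^\ast\leftrightarrow\neg\mathsf{Trans}(e^\ast,M,e^\ast,x^\ast)$ available when the obstruction is the point. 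Exactly as in the proof of Theorem~\ref{thm:no-universal-solver} and the \textsc{Solver's Paradox}, a case split on the value $M$ returns at $e^\ast$ then shows that $M$ cannot be correct there while $\mathrm{T}$ stays sound.

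For the reflection clause I would invoke Theorem~\ref{thm:meta-reflection} with $\Phi(\ulcorner P^\ast\urcorner)$ taken to be $\mathsf{Trans}(e^\ast,M,e^\ast,x^\ast)$: if $\mathrm{T}$ verifies the global solver claim then $\mathrm{T}\vdash\mathsf{Trans}(e^\ast,M,e^\ast,x^\ast)$, so any sound meta-theory $U$ formalising $\mathrm{T}$ and proving $\mathrm{T}$ sound for $\Sigma^0_1$ evaluation over $\mathsf{C}$ must derive the reflection instance $\mathrm{Prov}_\mathrm{T}(\ulcorner\mathsf{Trans}(e^\ast,M,e^\ast,x^\ast)\urcorner)\rightarrow\mathsf{Trans}(e^\ast,M,e^\ast,x^\ast)$, equivalently $\to P^\ast$; by the Observation following Theorem~\ref{thm:meta-reflection} together with Löb's theorem this instance is not a theorem of $\mathrm{T}$ itself, so the verification load at $e^\ast$ is strictly impredicative relative to $\mathrm{T}$. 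The step I expect to be the main obstacle is the second one, and specifically the clean handling of the self-referential \emph{input} $x^\ast$: the \textsc{Diagonal Lemma} delivers a fixed-point sentence but the solver claim carries a genuine $\forall x$, so one must either fold the input into the code as above or prove a parametric \textsc{Diagonal Lemma}, and one must verify that $P^\ast$ remains a legitimate element of $\mathsf{C}$ rather than a mere arithmetical sentence --- which is exactly where predicate-closure (Lemma~\ref{lemma:np-fo-eval} and hypothesis~(i)) and totality of decoding carry the argument. A secondary subtlety, easily dispatched, is matching the polarity of the displayed biconditional, since the \textsc{Diagonal Lemma} supplies fixed points of both signs and only the negative one obstructs the solver.
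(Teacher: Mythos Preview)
Your proposal is correct and follows the same route as the paper: apply the \textsc{Diagonal Lemma} to $x\mapsto\mathsf{Trans}(x,M,x,\cdot)$ to obtain $e^\ast$, identify the resulting correctness condition with a local reflection instance via Theorem~\ref{thm:meta-reflection}, and conclude that no finitary $\mathrm{T}$ can discharge it. Your treatment is considerably more careful than the paper's three-sentence sketch---in particular your explicit handling of the input slot by setting $x^\ast:=e^\ast$, your check that $P^\ast$ genuinely lies in $\mathsf{C}$ via totality of decoding, and your remark on polarity all address points the paper simply elides with the placeholder ``$\cdot$'' and the bare assertion that the fixed point exists.
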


\subsection{Uniformity Implies Class-Quantification}

\begin{observation}
	Although $\varphi_{\mathsf{Class=NClass}}$ has the form $\exists M\forall x$, its semantics require $M$ to solve \emph{every} $\mathsf{NClass}$ problem. This embeds a suppressed quantifier over all encoded nondeterministic machines.
\end{observation}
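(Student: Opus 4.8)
The plan is to make the suppressed quantifier explicit by unfolding the semantics of $\mathsf{Correct}(M,x)$ in Definition~\ref{def:uniform-determ}. By construction $\mathsf{Correct}(M,x)$ asserts that the deterministic machine $M$ decides a fixed $\mathsf{NClass}$-complete problem $K$ on input $x$, so the apparently local clause $\forall x\,\mathsf{Correct}(M,x)$ says exactly that $M$ decides $K$ everywhere. First I would invoke the $\mathsf{NClass}$-completeness of $K$ in its \emph{uniform} form: there is a primitive--recursive map $\mathsf{Red}$ such that, for the code $e$ of a nondeterministic machine $M_e$, $\mathsf{Red}(e,\cdot)$ is the completeness reduction attached to $M_e$ (the Cook--Levin construction in the case $\mathsf{NClass}=\mathsf{NP}$, $K=\mathsf{SAT}$), with $M_e$ accepting $x$ iff $\mathsf{Red}(e,x)\in K$. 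Substituting, \enquote{$\forall x\,\mathsf{Correct}(M,x)$} entails \enquote{$\forall e\,\forall x\,\bigl(M\text{ accepts }\mathsf{Red}(e,x)\leftrightarrow\exists y\,\mathsf{Eval}_{\mathsf{NClass}}(e,x,y)\bigr)$}, where $\mathsf{Eval}_{\mathsf{NClass}}$ is the first-order evaluator of the class supplied by Lemma~\ref{lemma:np-fo-eval}; this is an explicit universal quantifier over all \textsc{Gödel}-coded nondeterministic machines, which is the content of the observation.

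Next I would package this rewriting into the template of Lemma~\ref{lemma:predimp}: put $\mathsf{Trans}(e,M,e,x)\equiv\bigl(M\text{ accepts }\mathsf{Red}(e,x)\leftrightarrow\exists y\,\mathsf{Eval}_{\mathsf{NClass}}(e,x,y)\bigr)$, so that $\varphi_{\mathsf{Class=NClass}}$ is logically equivalent to $\exists M\in\mathsf{Class}\;\forall e\in\mathsf{NClass}\;\forall x\;\mathsf{Trans}(e,M,e,x)$. Since $\mathsf{Eval}_{\mathsf{NClass}}$ is first-order definable and $\mathsf{NClass}$ is predicate-closed, the exposed $\forall e$ ranges over exactly the coded predicate space to which Lemma~\ref{lemma:predimp} applies; establishing the observation amounts precisely to certifying this rewriting, which is what hooks the impredicativity mechanism of Section~\ref{sec:fixed} onto the uniform separation statement and turns its innocuous $\exists M\,\forall x$ surface form into a genuine global class-quantification.

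The step I expect to be the main obstacle is making the word \enquote{embeds} precise rather than merely \enquote{implies}: one direction---correctness on all inputs $\Rightarrow$ correctness on all reduced instances $\mathsf{Red}(e,x)$---is immediate, but to show that no content is lost in passing from $\varphi_{\mathsf{Class=NClass}}$ to the $\forall e\,\forall x$ form one must verify (a) that $\mathsf{Red}$ is genuinely primitive recursive in $e$, which forces working with clocked codes so that the polynomial bound $p_e$ of Lemma~\ref{lemma:np-fo-eval} is recoverable from $e$, and (b) that every relevant instance of $K$ is, up to the padding conventions built into the coding, of the form $\mathsf{Red}(e,x)$---i.e.\ that $K$ is universal in the sense of Assumption~(i) of Theorem~\ref{thm:internal-misclass}. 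Both are routine once the arithmetization of Lemma~\ref{lemma:np-fo-eval} is fixed, but (a)--(b) are the only points where the argument genuinely uses representability rather than mere notation, and they are exactly what licenses the subsequent application of Lemma~\ref{lemma:predimp}.
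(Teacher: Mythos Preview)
Your approach is correct and coincides with the paper's: the observation is not given a separate proof but is immediately formalized as Proposition~\ref{prop:implicit}, whose one-line exposition defines $G(e)$ by precomposing $M$ with the standard reduction from $L(e)$ to the complete problem---precisely your $M\circ\mathsf{Red}(e,\cdot)$. Your worry~(b) about surjectivity of $\mathsf{Red}$ onto $K$ is unnecessary, since the paper only uses the forward implication (Proposition~\ref{prop:implicit} assumes $\varphi_{\mathsf{Class=NClass}}$ and extracts the $\forall e$ clause); \enquote{embeds} here means \enquote{entails}, and the diagonalization in Theorem~\ref{thm:diagclass} needs nothing stronger.
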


\begin{proposition}[Implicit Class-Quantification]\label{prop:implicit}
	Assume $\varphi_{\mathsf{Class=NClass}}$. Then there exists a computable map $G$ such that
	\begin{equation}
		\forall e\bigl(\mathsf{NClass\_code}(e)\rightarrow\mathsf{Trans}(e,G(e))\bigr).
	\end{equation}
\end{proposition}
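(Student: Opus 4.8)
The plan is to read $\varphi_{\mathsf{Class=NClass}}$ as a Skolemizable assertion and then to \emph{uniformize} the resulting witness through the completeness of the fixed $\mathsf{NClass}$-complete problem. First I would extract from $\varphi_{\mathsf{Class=NClass}}\equiv\exists M\in\mathsf{Class}\,\forall x\,\mathsf{Correct}(M,x)$ a single deterministic machine $M_0\in\mathsf{Class}$ that decides the distinguished complete language $L_\star$ for $\mathsf{NClass}$ on every input; this is the only existential commitment, and everything afterwards is effective in $e$ and in a fixed code for $M_0$. Here $\mathsf{Trans}(e,m)$ is to be read as \enquote{the deterministic machine coded by $m$ decides the language of the $\mathsf{NClass}$ machine coded by $e$}, so the goal reduces to producing, uniformly in $e$, such an $m=G(e)$.

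Next I would invoke uniform reducibility. For every $e$ with $\mathsf{NClass\_code}(e)$ the language $L(M_e)$ reduces many–one to $L_\star$ via a reduction $f_e$, and the crucial observation is that the map $e\mapsto\ulcorner f_e\urcorner$ is primitive recursive: $f_e$ is nothing but the Cook–Levin–Karp tableau transformation applied to the pair $(M_e,p_e)$, where $p_e$ is the polynomial clock read off from $e$, and that transformation is a purely syntactic manipulation of the transition table of $M_e$. Composing, I would let $G(e)$ be the code — together with an explicitly computed polynomial clock — of the deterministic machine $N_e$ which on input $x$ first computes $f_e(x)$ and then runs $M_0$ on $f_e(x)$, returning its verdict. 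The assignment $e\mapsto G(e)$ is computable by an s-m-n style argument, since it merely substitutes $e$ (and its extracted clock) into a fixed program schema built from $\ulcorner M_0\urcorner$ and the tableau routine.

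Finally I would verify the conclusion. By correctness of $M_0$ on $L_\star$ and the defining property of $f_e$ we have that $N_e$ accepts $x$ iff $f_e(x)\in L_\star$ iff $x\in L(M_e)$; and $N_e$ runs in time polynomial in $|x|$, the composite bound being assembled from $p_e$ and the clock of $M_0$. Hence $G(e)\in\mathsf{Class}$ and $\mathsf{Trans}(e,G(e))$ holds, i.e.\ $G(e)$ is a deterministic realization of $M_e$. Since $e$ was arbitrary, $\forall e\,\bigl(\mathsf{NClass\_code}(e)\rightarrow\mathsf{Trans}(e,G(e))\bigr)$, which is the claim.

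The step I expect to be the main obstacle is not the Skolemization but the \emph{uniformity of the polynomial clock}: one must check that a polynomial time bound for $N_e$ is extractable effectively and uniformly from $e$, so that $G(e)$ is genuinely a code \emph{in $\mathsf{Class}$} and not merely a machine that happens, for each fixed $e$, to halt in polynomial time. This is exactly where the suppressed $\forall e$ becomes visible — the lone witness $M_0$ supplied by $\varphi_{\mathsf{Class=NClass}}$ must be spread across the entire \textsc{Gödel}-coded family of $\mathsf{NClass}$ machines — and it is precisely this uniform spreading that licenses treating $\varphi_{\mathsf{Class=NClass}}$ as a class-quantified assertion, feeding the hypothesis of Lemma~\ref{lemma:predimp}.
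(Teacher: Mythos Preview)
Your proposal is correct and follows essentially the same route as the paper: extract the single deterministic solver $M$ for the complete problem from $\varphi_{\mathsf{Class=NClass}}$ and define $G(e)$ by precomposing $M$ with the standard reduction from $L(e)$ to the complete problem. Your treatment is considerably more explicit than the paper's one-line exposition, in particular in attending to the uniform computability of $e\mapsto\ulcorner f_e\urcorner$ via the tableau construction and to the effective extraction of the composite polynomial clock, both of which the paper leaves implicit.
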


\begin{exposition}
	If $M$ solves a complete $\mathsf{NClass}$ problem, then $G(e)$ is defined by precomposing $M$ with the standard reduction from $L(e)$ to the complete problem. The uniform solver thus induces a global deterministic simulation claim.
\end{exposition}

\subsection{Diagonal Fixed Points}

\begin{theorem}[Diagonal Fixed Point for Class--NClass Equivalence]\label{thm:diagclass}
	If $\mathsf{Class}=\mathsf{NClass}$, then there exists $e^\ast$ such that the encoded problem $P^\ast$ satisfies
	\begin{equation}
		P^\ast\;\leftrightarrow\;\mathsf{Trans}(e^\ast,G(e^\ast)).
	\end{equation}
\end{theorem}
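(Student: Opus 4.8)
The plan is to reduce the claim to a single application of the \textsc{Diagonal Lemma}, in the style of Lemma~\ref{lemma:auto-fp}, with the transfer predicate $\mathsf{Trans}(\cdot,G(\cdot))$ playing the role of the property $\Phi$. First I would invoke the hypothesis $\mathsf{Class}=\mathsf{NClass}$ together with Proposition~\ref{prop:implicit} to obtain the computable map $G$ satisfying $\forall e\bigl(\mathsf{NClass\_code}(e)\rightarrow\mathsf{Trans}(e,G(e))\bigr)$; note that this $G$ is available \emph{only} under the assumption, since it is constructed by precomposing the uniform deterministic solver with the fixed many--one reduction to the chosen $\mathsf{NClass}$-complete problem. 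Fix a $\Sigma^0_1$ definition of (the graph of) $G$ in $\mathrm{T}$ and form the arithmetic predicate $\chi(x)\;\equiv\;\mathsf{Trans}(x,G(x))$, which asserts, of the code $x$, that the deterministic simulation $G(x)$ correctly transfers the $\mathsf{NClass}$ computation coded by $x$. By construction $\chi$ has exactly one free variable.

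Next I would apply the \textsc{Diagonal Lemma} to $x\mapsto\chi(x)$, obtaining a sentence $\psi$ with $\psi\;\leftrightarrow\;\chi(\ulcorner\psi\urcorner)\;\equiv\;\mathsf{Trans}(\ulcorner\psi\urcorner,G(\ulcorner\psi\urcorner))$. By Lemma~\ref{lemma:repr-closure} the numeral $\ulcorner\psi\urcorner$ already lies in the \textsc{Gödel}-coded domain over which the global solver ranges, so it is legitimate to set $e^\ast:=\ulcorner\psi\urcorner$ and to read $\psi$ as the evaluation condition of a problem $P^\ast=P_{e^\ast}\in\mathcal{C}$. Substituting, $P^\ast\;\leftrightarrow\;\mathsf{Trans}(e^\ast,G(e^\ast))$, which is precisely the asserted fixed-point identity. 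If one wishes $e^\ast$ itself to be an $\mathsf{NClass}$ code — as is needed to run the downstream collapse against Proposition~\ref{prop:implicit} — the same diagonal sentence can be taken via universality to be the verification condition of an actual nondeterministic machine code, so that $\mathsf{NClass\_code}(e^\ast)$ holds; this is the point at which the suppressed class-quantifier flagged in the Observation preceding Proposition~\ref{prop:implicit} lets diagonalization re-enter.

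The main obstacle I expect is not the diagonal step but the \textbf{well-formedness} of $\chi(x)=\mathsf{Trans}(x,G(x))$ as a single arithmetic formula. This requires that the map $G$ furnished by Proposition~\ref{prop:implicit} be represented in $\mathrm{T}$ by a \emph{provably total} recursive function, so that its graph may be substituted into the first-order predicate $\mathsf{Trans}$ without disturbing the canonical internal definitions of substitution and provability (cf.\ Logic~\ref{cor:equivalence}). One therefore has to check that the reduction precomposed into $G$ is primitive recursive and uniform in $e$ — which follows from the fixed choice of an $\mathsf{NClass}$-complete problem and a standard Karp reduction — so that $G$ is $\Sigma^0_1$-definable and the composite predicate is a bona fide arithmetic formula (and, for the downstream reflection argument, provably $\Sigma^0_1$). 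Once this bookkeeping is discharged the fixed point $e^\ast$ is forced, and, exactly as in Lemma~\ref{lemma:predimp}, the correctness of a uniform solver at $e^\ast$ becomes equivalent to the corresponding \textsc{Reflection Principle} for $P^\ast$; but that consequence belongs to the next subsection, not to the present statement.
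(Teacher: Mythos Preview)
Your proposal is correct and follows essentially the same route the paper intends: apply the \textsc{Diagonal Lemma} to the arithmetic predicate $\chi(x)\equiv\mathsf{Trans}(x,G(x))$, using Proposition~\ref{prop:implicit} to supply $G$ under the hypothesis $\mathsf{Class}=\mathsf{NClass}$, and read the resulting fixed-point sentence as the encoded problem $P^\ast$ at $e^\ast=\ulcorner\psi\urcorner$. The paper in fact gives no explicit proof here, relying on the earlier Lemma~\ref{lemma:auto-fp} and the Observation after Lemma~\ref{lemma:predimp}; your sketch supplies the missing bookkeeping (provable totality of $G$, $\Sigma^0_1$-definability, and $\mathsf{NClass\_code}(e^\ast)$) that the paper leaves implicit.
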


\begin{exposition}
	$P^\ast$ asserts that the uniform deterministic solver correctly resolves its own representational instance. This is the complexity-theoretic version of the classical self-verifying fixed point.
\end{exposition}

\subsection{Verification Collapse}

\begin{theorem}[Verification Collapse for Class--NClass]\label{thm:vercollapse}
	Let $\mathrm{T}$ be a sound, recursively axiomatizable extension of $\mathrm{I}\Sigma_1$. If $\mathrm{T}$ verifies $\varphi_{\mathsf{Class=NClass}}$, then for the fixed point $P^\ast$ of Theorem~\ref{thm:diagclass} every sound meta-theory $U$ formalizing $\mathrm{T}$ proves
	\begin{equation}
		\mathrm{Prov}_T(\ulcorner\mathsf{Correct}(M,\ulcorner P^\ast\urcorner)\urcorner)\rightarrow P^\ast.
	\end{equation}
	If $P^\ast$ is a $\Pi^0_2$-fact unprovable in $\mathrm{T}$, then $\varphi_{\mathsf{Class=NClass}}$ is not verifiable in $\mathrm{T}$.
\end{theorem}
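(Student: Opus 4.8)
The plan is to run the meta-reflection argument of Theorem~\ref{thm:meta-reflection} with $\varphi_{\mathsf{Class=NClass}}$ playing the role of the global assertion $G$ and $P^\ast$ the diagonal instance supplied by Theorem~\ref{thm:diagclass}, then read the verification load off along the fixed-point biconditional. Throughout, $\mathrm{T}$ and $U$ are as in the statement.

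First I would unwind what it means for $\mathrm{T}$ to verify $\varphi_{\mathsf{Class=NClass}}$: by Lemma~\ref{def:verification} this entails, in particular, $\mathrm{T}\vdash\varphi_{\mathsf{Class=NClass}}$. By Proposition~\ref{prop:implicit}, $\mathrm{T}$ then proves the implicit class-quantified statement $\forall e\,(\mathsf{NClass\_code}(e)\rightarrow\mathsf{Trans}(e,G(e)))$ for the uniform reduction map $G$ obtained by precomposing the witness $M$ with the standard completeness reduction. Instantiating at the code $e^\ast$ of Theorem~\ref{thm:diagclass} gives $\mathrm{T}\vdash\mathsf{Trans}(e^\ast,G(e^\ast))$. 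The fixed-point identity $P^\ast\leftrightarrow\mathsf{Trans}(e^\ast,G(e^\ast))$, being an instance of the Diagonal Lemma, is already provable in $\mathrm{T}$; combining it with the identification of $\mathsf{Trans}(e^\ast,G(e^\ast))$ with $\mathsf{Correct}(M,\ulcorner P^\ast\urcorner)$ (both express that the uniform deterministic solver decides the complete problem on the input coding $P^\ast$) yields $\mathrm{T}\vdash\mathsf{Correct}(M,\ulcorner P^\ast\urcorner)$.

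Next I would pass to the meta-theory. From $\mathrm{T}\vdash\mathsf{Correct}(M,\ulcorner P^\ast\urcorner)$ and formalized provability, any $U$ formalizing $\mathrm{T}$ proves $\mathrm{Prov}_{\mathrm{T}}(\ulcorner\mathsf{Correct}(M,\ulcorner P^\ast\urcorner)\urcorner)$. Since $\mathsf{Correct}(M,\ulcorner P^\ast\urcorner)$ belongs to the syntactic class of the evaluation clause of $\varphi_{\mathsf{Class=NClass}}$ and $U$ proves $\mathrm{T}$ sound for that class (the hypothesis on $U$), $U$ proves the corresponding local reflection instance $\mathrm{Prov}_{\mathrm{T}}(\ulcorner\mathsf{Correct}(M,\ulcorner P^\ast\urcorner)\urcorner)\rightarrow\mathsf{Correct}(M,\ulcorner P^\ast\urcorner)$. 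Combining this with the $\mathrm{T}$-provable (hence $U$-provable) fixed-point biconditional produces exactly
\begin{equation}
\mathrm{Prov}_{\mathrm{T}}(\ulcorner\mathsf{Correct}(M,\ulcorner P^\ast\urcorner)\urcorner)\rightarrow P^\ast,
\end{equation}
which is the first assertion. For the second, suppose toward a contradiction that $P^\ast$ is a $\Pi^0_2$-fact unprovable in $\mathrm{T}$ and yet $\mathrm{T}$ verifies $\varphi_{\mathsf{Class=NClass}}$; the chain above then gives $\mathrm{T}\vdash\mathsf{Correct}(M,\ulcorner P^\ast\urcorner)$, and the $\mathrm{T}$-provable fixed-point biconditional gives $\mathrm{T}\vdash P^\ast$, contradicting unprovability. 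Hence $\varphi_{\mathsf{Class=NClass}}$ is not verifiable in $\mathrm{T}$, and the $\Pi^0_2$ clause records that the surviving obligation is precisely a $\Pi^0_2$ reflection fact of the kind no consistent, recursively axiomatizable $\mathrm{T}$ can discharge about itself.

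The main obstacle I anticipate is the bookkeeping in the middle step: one must verify, provably in $\mathrm{T}$, that the diagonal instance $\mathsf{Trans}(e^\ast,G(e^\ast))$ delivered by Theorem~\ref{thm:diagclass} genuinely coincides with the correctness predicate $\mathsf{Correct}(M,\ulcorner P^\ast\urcorner)$ at the Gödel code of $P^\ast$. This requires that the reduction map $G$ from Proposition~\ref{prop:implicit} have an honest code usable in the Diagonal Lemma (where predicate-closure, Lemma~\ref{lemma:predimp}, is invoked) and that composing it with the completeness reduction preserves the syntactic complexity class of the evaluation clause, so that the reflection principle invoked for $U$ is the very one the Verification lemma attaches to $\varphi_{\mathsf{Class=NClass}}$. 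Once that identification is pinned down, the rest is a direct specialization of Theorem~\ref{thm:meta-reflection}.
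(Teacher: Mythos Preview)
Your proposal is correct and follows essentially the same approach as the paper: the paper's own treatment of this theorem is only the one-paragraph \textsl{Observation} immediately following it, which simply notes that verification forces reflection at $P^\ast$ and that no finitary theory can validate such reflection---i.e., exactly the specialization of Theorem~\ref{thm:meta-reflection} that you carry out in detail. Your sketch is considerably more explicit than the paper (in particular your unwinding via Proposition~\ref{prop:implicit} and the identification of $\mathsf{Trans}(e^\ast,G(e^\ast))$ with $\mathsf{Correct}(M,\ulcorner P^\ast\urcorner)$), and the bookkeeping concern you flag in your final paragraph is precisely the point the paper leaves implicit.
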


\begin{observation}
	Verification forces reflection at $P^\ast$. Since no finitary theory can uniformly validate such reflection, $\mathsf{Class}=\mathsf{NClass}$ inherits a built-in obstruction. Uniformity collapses the verification conditions to a fixed point the theory cannot justify.
\end{observation}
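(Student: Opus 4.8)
The plan is to read this theorem as the complexity-theoretic specialization of the Meta-Reflection Requirement (Theorem~\ref{thm:meta-reflection}), with the abstract fixed point there replaced by the explicit one furnished by Theorem~\ref{thm:diagclass}. First I would unpack what \enquote{$\mathrm{T}$ verifies $\varphi_{\mathsf{Class=NClass}}$} supplies: by Lemma~\ref{def:verification}(i) it entails $\mathrm{T}\vdash\varphi_{\mathsf{Class=NClass}}$, i.e.\ $\mathrm{T}\vdash\exists M\in\mathsf{Class}\,\forall x\,\mathsf{Correct}(M,x)$, and fixing the deterministic simulator $M$ that underlies the map $G$ of Proposition~\ref{prop:implicit}---hence the fixed point $P^\ast$ of Theorem~\ref{thm:diagclass}---yields $\mathrm{T}\vdash\forall x\,\mathsf{Correct}(M,x)$ and in particular $\mathrm{T}\vdash\mathsf{Correct}(M,\ulcorner P^\ast\urcorner)$. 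The second preparatory point is that, since $P^\ast$ is produced by the Diagonal Lemma applied to $x\mapsto\mathsf{Trans}(x,G(x))$, the instance $\mathsf{Trans}(e^\ast,G(e^\ast))$ unfolds---provably in $\mathrm{T}$, via Proposition~\ref{prop:implicit} and Definition~\ref{def:uniform-determ}---to the self-correctness assertion $\mathsf{Correct}(M,\ulcorner P^\ast\urcorner)$, so that $\mathrm{T}\vdash P^\ast\leftrightarrow\mathsf{Correct}(M,\ulcorner P^\ast\urcorner)$; this is the complexity-theoretic instance of the reduction to arithmetized syntax (Theorem~\ref{thm:reduction}).

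The reflection step then copies the argument of Theorem~\ref{thm:meta-reflection}. Reasoning inside the sound meta-theory $U$: from $\mathrm{T}\vdash\mathsf{Correct}(M,\ulcorner P^\ast\urcorner)$, internalizing provability gives $U\vdash\mathrm{Prov}_\mathrm{T}(\ulcorner\mathsf{Correct}(M,\ulcorner P^\ast\urcorner)\urcorner)$; since $U$ proves $\mathrm{T}$ sound for the syntactic class of the evaluation predicates over $\mathcal{C}$, $U$ proves the local reflection instance $\mathrm{Prov}_\mathrm{T}(\ulcorner\mathsf{Correct}(M,\ulcorner P^\ast\urcorner)\urcorner)\rightarrow\mathsf{Correct}(M,\ulcorner P^\ast\urcorner)$; composing with the fixed-point biconditional, which $U$ inherits from $\mathrm{T}$, delivers $U\vdash\mathrm{Prov}_\mathrm{T}(\ulcorner\mathsf{Correct}(M,\ulcorner P^\ast\urcorner)\urcorner)\rightarrow P^\ast$, the first displayed conclusion. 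This conditional is in fact $U$-provable unconditionally; the hypothesis that $\mathrm{T}$ verifies $\varphi_{\mathsf{Class=NClass}}$ is precisely what makes its antecedent true, so that $U$ detaches $P^\ast$ outright.

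For the second assertion I would argue contrapositively. Suppose $P^\ast$ is a genuine $\Pi^0_2$ fact with $\mathrm{T}\nvdash P^\ast$. If $\varphi_{\mathsf{Class=NClass}}$ were verifiable in $\mathrm{T}$, the preparatory step gives $\mathrm{T}\vdash\mathsf{Correct}(M,\ulcorner P^\ast\urcorner)$, and the fixed-point biconditional forces $\mathrm{T}\vdash P^\ast$, contradicting unprovability. Read the other way---the version that matches the surrounding narrative---$\mathrm{T}$ already proves $\mathrm{Prov}_\mathrm{T}(\ulcorner\mathsf{Correct}(M,\ulcorner P^\ast\urcorner)\urcorner)$, so the sole missing ingredient for an internal derivation of $P^\ast$ is the reflection instance $\mathrm{Prov}_\mathrm{T}(\ulcorner\mathsf{Correct}(M,\ulcorner P^\ast\urcorner)\urcorner)\rightarrow P^\ast$; but at the relevant point $P^\ast$ carries the soundness/consistency strength of $\mathrm{T}$, so that reflection instance is not $\mathrm{T}$-provable, by Gödel's second incompleteness theorem (equivalently, Löb's theorem \citep{loeb}). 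Hence the verification demanded by Lemma~\ref{def:verification} cannot be completed inside $\mathrm{T}$, and $\varphi_{\mathsf{Class=NClass}}$ is not verifiable in $\mathrm{T}$.

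The step I expect to be the main obstacle is the complexity calibration underlying the reflection inference, together with the precise identification $\mathsf{Trans}(e^\ast,G(e^\ast))\equiv\mathsf{Correct}(M,\ulcorner P^\ast\urcorner)$. The predicate $\mathsf{Correct}(M,x)$ of Definition~\ref{def:uniform-determ} compares a decidable acceptance fact about the polynomial-time machine $M$ with membership of $x$ in a fixed $\mathsf{NClass}$-complete problem, which is $\Sigma^0_1$; so $\mathsf{Correct}(M,\cdot)$ is a Boolean combination of $\Sigma^0_1$ predicates rather than a $\Sigma^0_1$ formula, and one must check that the soundness principle $U$ is assumed to prove for $\mathrm{T}$---stated as $\Sigma^0_1$-soundness over $\mathcal{C}$ in Theorem~\ref{thm:meta-reflection}---actually covers that class, or else upgrade it to the matching $\Pi^0_1/\Delta^0_2$ reflection; likewise, when $P^\ast$ is taken as the full $\Pi^0_2$ uniform-correctness fact one must align that reading with the unprovability clause. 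Extracting the concrete witness $M$ from $\mathrm{T}\vdash\exists M\,\forall x\,\mathsf{Correct}(M,x)$, so that $P^\ast$ is genuinely formed against it, needs either provable $\Sigma^0_1$-definability of the witness or a semantic reading of the unprovability hypothesis; pinning this down, and checking that the diagonal fixed point of Theorem~\ref{thm:diagclass} literally is the self-correctness instance of that witness, is where the real bookkeeping lies. Everything else is a transcription of Theorems~\ref{thm:meta-reflection} and~\ref{thm:reduction}.
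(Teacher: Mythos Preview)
Your proposal is correct and follows the same strategy the paper uses throughout Section~\ref{sec:class-nclass}: specialize the Meta-Reflection Requirement (Theorem~\ref{thm:meta-reflection}) to the complexity-theoretic fixed point of Theorem~\ref{thm:diagclass}, then read off the unverifiability clause by contraposition. Note, however, that in the paper this Observation is not given a separate proof; it \emph{is} the paper's three-sentence informal justification for Theorem~\ref{thm:vercollapse}, which is stated without a proof environment. What you have written is therefore a full technical expansion of what the paper only sketches---including the bookkeeping worries (complexity class of $\mathsf{Correct}(M,\cdot)$, witness extraction for $M$, alignment of $\mathsf{Trans}(e^\ast,G(e^\ast))$ with $\mathsf{Correct}(M,\ulcorner P^\ast\urcorner)$) that the paper leaves implicit.
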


\begin{corollary}[Reflective Obstruction for $\mathsf{P}=\mathsf{NP}$]\label{cor:pnp}
	With $\mathsf{Class}=\mathsf{P}$ and $\mathsf{NClass}=\mathsf{NP}$, the arithmetized sentence expressing $\mathsf{P}=\mathsf{NP}$ lies in the same reflective class as the general Class--NClass equivalence. No consistent finitary theory can verify $\mathsf{P}=\mathsf{NP}$ without affirming a nontrivial \textsc{Reflection Principle}.
\end{corollary}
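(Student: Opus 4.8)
The plan is to obtain the corollary as the specialization of Theorem~\ref{thm:vercollapse} to the pair $(\mathsf{Class},\mathsf{NClass})=(\mathsf{P},\mathsf{NP})$, checking that this instance discharges every hypothesis invoked along the chain Proposition~\ref{prop:implicit} $\to$ Theorem~\ref{thm:diagclass} $\to$ Theorem~\ref{thm:vercollapse}. First I would fix the arithmetization: let $\mathrm{T}$ be any consistent, recursively axiomatizable, arithmetically sound extension of $\mathrm{I}\Sigma_1$; code $\mathsf{NClass}=\mathsf{NP}$ by the uniform first-order evaluation schema $\mathsf{Eval}_{\mathsf{NP}}(e,x,y)$ of Lemma~\ref{lemma:np-fo-eval}; and code $\mathsf{Class}=\mathsf{P}$ by the standard clocked deterministic-Turing-machine encoding, whose run relation is likewise first-order definable over $\mathbb{N}$. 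Both classes are then \textsc{Gödel-Representable} with arithmetically definable evaluation predicates, so Lemma~\ref{lemma:predimp} applies, and the sentence $\varphi_{\mathsf{P=NP}}$ of Definition~\ref{def:uniform-determ}—with $\mathsf{Correct}(M,x)$ read as \enquote{$M$ decides the fixed $\mathsf{NP}$-complete problem (\textsc{SAT}, via \citet{cook1971}) on input $x$}—is a genuine instance of $\varphi_{\mathsf{Class=NClass}}$.

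Next I would run the transport. Assuming $\varphi_{\mathsf{P=NP}}$, Proposition~\ref{prop:implicit} yields a computable map $G$ with $\forall e\,(\mathsf{NP\_code}(e)\rightarrow\mathsf{Trans}(e,G(e)))$, obtained by precomposing the putative deterministic decider for \textsc{SAT} with the standard many-one reduction from $L(e)$ to \textsc{SAT}; this exhibits $\varphi_{\mathsf{P=NP}}$ as carrying a suppressed universal quantifier over the \textsc{Gödel}-coded space of $\mathsf{NP}$ machines. Applying the \textsc{Diagonal Lemma} to $x\mapsto\mathsf{Trans}(x,G(x))$, as in Theorem~\ref{thm:diagclass}, produces a code $e^\ast$ and the problem $P^\ast$ with $P^\ast\leftrightarrow\mathsf{Trans}(e^\ast,G(e^\ast))$. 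Feeding this $P^\ast$ into Theorem~\ref{thm:vercollapse} gives, for every sound meta-theory $U$ formalizing $\mathrm{T}$, the implication $\mathrm{Prov}_{\mathrm{T}}(\ulcorner\mathsf{Correct}(M,\ulcorner P^\ast\urcorner)\urcorner)\rightarrow P^\ast$; hence if $\mathrm{T}$ proved $\varphi_{\mathsf{P=NP}}$ and this derivation were to count as a verification in the sense of Lemma~\ref{def:verification}, the accompanying justification would have to include the reflection instance for $P^\ast$, which the Observation following Theorem~\ref{thm:vercollapse} records as undischargeable by any consistent finitary $\mathrm{T}$. That is the asserted conclusion, and the sentence expressing $\mathsf{P}=\mathsf{NP}$ therefore sits in precisely the reflective class of the general Class--NClass equivalence.

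The step I expect to be the main obstacle is showing that the reflection principle involved is genuinely \emph{nontrivial}, i.e.\ ruling out the degenerate case in which $\mathrm{T}$ already proves the instance $\mathrm{Prov}_{\mathrm{T}}(\ulcorner\Phi(\ulcorner P^\ast\urcorner)\urcorner)\rightarrow\Phi(\ulcorner P^\ast\urcorner)$, which would make the obstruction vacuous. The clean route is to argue that $\Phi(\ulcorner P^\ast\urcorner)$ is equivalent over $\mathrm{T}$ to a $\Pi^0_2$ sentence stating that the uniform deterministic solver halts within its polynomial clock and agrees with \textsc{SAT} on the diagonal configuration, so that affirming this instance is a restriction of $\Pi^0_2$-reflection for $\mathrm{T}$; by Löb's theorem and the unprovability of consistency a consistent $\mathrm{T}$ cannot prove the full schema, and one then checks that this particular diagonal instance is not among the finitely many instances $\mathrm{T}$ happens to settle—because if it were, the equivalence $P^\ast\leftrightarrow\mathsf{Trans}(e^\ast,G(e^\ast))$ together with soundness would let $\mathrm{T}$ correctly decide $P^\ast$, contradicting Theorem~\ref{thm:no-universal-solver} applied to the total computable solver induced by $G$ on the coded $\mathsf{NP}$ space. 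Making that last contradiction fully rigorous—pinning down exactly which $\Sigma^0_1/\Pi^0_2$ boundary $P^\ast$ lies on and confirming that the induced solver is total computable on the relevant class—is where the real work lies; everything else is the bookkeeping that the $(\mathsf{P},\mathsf{NP})$ instance meets the representability hypotheses already established in general form.
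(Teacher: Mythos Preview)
Your proposal is correct and follows the same route as the paper: the corollary is obtained by specializing the general Class--NClass analysis (Proposition~\ref{prop:implicit} $\to$ Theorem~\ref{thm:diagclass} $\to$ Theorem~\ref{thm:vercollapse}) to $(\mathsf{P},\mathsf{NP})$, with \textsc{SAT} as the complete problem and $\mathsf{Eval}_{\mathsf{NP}}$ supplying representability. The paper's own justification is the brief exposition following the corollary, which simply notes that the hidden universal quantifier ranges over polynomial-time verifiable predicates and that the fixed point asserts correctness of a putative polynomial-time SAT-solver on its own code; you are considerably more explicit than the paper in discharging the representability hypotheses and in isolating the nontriviality of the reflection instance via the $\Pi^0_2$ classification and the appeal to Theorem~\ref{thm:no-universal-solver}, which the paper leaves implicit.
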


\begin{exposition}
	The hidden universal quantifier ranges over all polynomial-time verifiable predicates. The induced fixed point asserts the correctness of a putative polynomial-time SAT-solver on its own \textsc{Gödel} code, a verification obligation that no finitary theory can discharge.

	An informal analogue may be seen by considering the following meta-problems:
	\begin{quotation}
		\noindent
		Let $Q$ be the problem:\\
		\quad ``Into which class ($\mathsf{P}$ or $\mathsf{NP}$) does the problem
		$\mathsf{P}\text{ vs.\ }\mathsf{NP}$ itself fall?''\\[0.5em]
		Let $R$ be the problem:\\
		\quad ``Into which class does the problem $Q$ fall?''
	\end{quotation}
	Once problems range over a Gödel-coded domain, questions of this form
	propagate upward: each classification problem becomes an instance of the same
	representational machinery it attempts to survey. The diagonal fixed point for
	$\mathsf{P}=\mathsf{NP}$ is the arithmetized manifestation of this pattern.
\end{exposition}

\section{Structural Limits of Unbounded Problem Specifications}
\label{sec:illposed}

\subsection{From Uniform Witnessing to Structural Constraint}

\begin{observation}
	In a constructive setting, the validity of a problem is determined by the existence of a witness or method that realizes its solution. Under the \textsc{Brouwer--Heyting--Kolmogorov} (BHK) interpretation, see \citet{troelstra88,troelstra14}. A global assertion
	\begin{equation}
		\forall P \in \mathcal{C}\, \Phi(P)
	\end{equation}
	is justified only when a uniform construction transforms an arbitrary coded problem into a witness for its instance of $\Phi$. When the quantifier ranges over an unbounded representable domain, this uniformity requirement encounters a principled obstruction. A problem class is \emph{unbounded} when it is \textsc{Gödel-Representable} and its evaluator is arithmetically definable. Quantification over such a class necessarily ranges over its own representational codes. Consequently any global specification over $\mathcal{C}$ imports the diagonal phenomenon exhibited earlier: it contains a fixed point $P^\ast$  satisfying
	\begin{equation}
		P^\ast \leftrightarrow \Phi(\ulcorner P^\ast \urcorner).
	\end{equation}
	The constructive justification of the global statement therefore requires a witness for the instance $\Phi(\ulcorner P^\ast \urcorner)$, which in turn depends on evaluating the system’s procedure on its own code. The obstruction arises not from the content of $\Phi$ but from the architecture of unbounded quantification. A global assertion is \emph{constructively unstable} when the structure of its quantifiers prevents a uniform witnessing procedure. For unbounded representable domains, instability follows directly: the diagonal instance forces any purported witness to certify the behaviour of the evaluator on its own representation. No uniform construction can discharge this requirement. Hence, global claims of the form
	\begin{equation}
	\exists M\,\forall x\, \Phi(M,x)
	\end{equation}
	become unstable whenever their semantics entail an implicit universal quantifier over an unbounded coded problem space. The difficulty is structural. The syntax may appear modest, yet the constructive reading exposes an internal dependency on a self-referential case.
\end{observation}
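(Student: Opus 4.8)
The plan is to take the claim of the observation at face value as a formal impossibility statement — \emph{for every unbounded, \textsc{Gödel-Representable} class $\mathcal{C}$ (coding/decoding primitive recursive, evaluator arithmetically definable) and every $\Phi$, the assertion $\forall P\in\mathcal{C}\,\Phi(P)$ is constructively unstable} — and to prove it by unfolding the BHK reading and colliding it with the diagonal fixed point already supplied by Lemma~\ref{lemma:auto-fp}. First I would spell out what a uniform witness is: a BHK realizer for $\forall P\in\mathcal{C}\,\Phi(P)$ is a construction $f$ that, given a code $e$ with $\mathsf{decode}(e)=P_e\in\mathcal{C}$, returns a realizer of $\Phi(P_e)$; to count as a constructive object at all, $f$ must be recursive, and — since the ambient setting is a sound, recursively axiomatizable extension $\mathrm{T}$ of $\mathrm{I}\Sigma_1$ — representable in $\mathrm{T}$. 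Making this representability explicit is what turns ``there is a uniform witness'' into something $\mathrm{T}$ (or a conservative meta-extension) can reason about, and hence into something that can be refuted.

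Next I would feed $f$ its own diagonal input. By Lemma~\ref{lemma:repr-closure} the code $\ulcorner P^\ast\urcorner$ of the fixed point $P^\ast$ from Lemma~\ref{lemma:auto-fp} lies in the domain of class-quantification, so $f(\ulcorner P^\ast\urcorner)$ realizes $\Phi(\ulcorner P^\ast\urcorner)$; by the fixed-point biconditional $P^\ast\leftrightarrow\Phi(\ulcorner P^\ast\urcorner)$ this is equally a realizer of $P^\ast$ itself. The content of $P^\ast$, however, is precisely that $\Phi$ holds of its own code — i.e.\ that the evaluator of $\mathcal{C}$, applied to the representation of $P^\ast$, behaves as $\Phi$ demands. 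So the purported realizer $f(\ulcorner P^\ast\urcorner)$ certifies the behaviour of the system's own evaluation procedure on its own \textsc{Gödel} code. The decisive step is then to show that no recursive $f$ can meet this obligation: here I would invoke Theorem~\ref{thm:meta-reflection}, by which validating $\Phi(\ulcorner P^\ast\urcorner)$ inside $\mathrm{T}$ is equivalent to the corresponding instance of local reflection for $\mathrm{T}$'s $\Sigma^0_1$ evaluator over $\mathcal{C}$, and the observation following that theorem — part (iii) — states that such reflection instances cannot be discharged uniformly inside any consistent, recursively axiomatizable finitary theory, on pain of the usual Löb–Gödel obstruction. A genuine BHK realizer $f$ \emph{is} exactly such a uniform recursive discharge: its value at $\ulcorner P^\ast\urcorner$ yields, in $\mathrm{T}$ augmented by the totality of $f$, a proof of the forbidden reflection instance. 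Hence no such $f$ exists, and the quantifier architecture alone — unbounded range over a \textsc{Gödel}-coded domain — blocks uniform witnessing, which is the definition of constructive instability. The $\exists M\,\forall x\,\Phi(M,x)$ case then follows by the reduction of Proposition~\ref{prop:implicit}: whenever the semantics of such a claim entails an implicit universal quantifier over an unbounded coded problem space, the witnessing machine $M$ induces a uniform construction of exactly the prohibited kind (precompose $M$ with the standard reduction, as in the exposition after Proposition~\ref{prop:implicit}), so the same collapse applies verbatim.

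The hard part will not be the diagonalization — that is handed to us by Lemma~\ref{lemma:auto-fp} and Lemma~\ref{lemma:repr-closure} — but pinning down ``uniform construction'' tightly enough that its non-existence is a theorem rather than a restatement of the claim. Concretely, one must argue that any BHK realizer is recursive (indeed $\mathrm{T}$-representable) and that its value at the diagonal input genuinely yields, in a conservative meta-extension, the offending reflection instance; I expect the cleanest route is to identify ``$f$ exists and is total'' with a $\Pi^0_2$ hypothesis that $\mathrm{T}$ can internalize, so that the derivation of reflection becomes a bona fide derivation in $\mathrm{T}+\mathrm{Tot}(f)$ contradicting arithmetical soundness. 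A secondary subtlety is that for an arbitrary $\Phi$ the fixed point $P^\ast$ need only be \emph{reflection-equivalent} rather than literally $\Pi^0_2$-complete; accordingly the obstruction should be phrased at the level of ``certifying the evaluator on its own code'' (the Reflection Analogue of Thesis~\ref{thesis:ppc}) rather than via a fixed complexity bound, which keeps the conclusion independent of the content of $\Phi$. Once these two points are in place, the impossibility is immediate, and it is precisely an artefact of the architecture of unbounded quantification — exactly what the observation asserts.
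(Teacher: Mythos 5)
Your reconstruction follows the reading the paper intends (Lemma~\ref{lemma:auto-fp} for the fixed point, Theorem~\ref{thm:meta-reflection} for the reflection load, Proposition~\ref{prop:implicit} for the $\exists M\,\forall x$ case), but note first that the paper gives no proof of this Observation at all: it is informal exposition resting on the Section~\ref{sec:fixed} machinery. The theorem you set out to extract from it --- instability for \emph{every} $\Phi$ over every unbounded \textsc{G\"odel-Representable} class --- has a genuine gap, and is in fact false in that generality. Your decisive step, ``a BHK realizer $f$ is exactly a uniform recursive discharge of the forbidden reflection instance, hence no such $f$ exists,'' conflates two different things. Theorem~\ref{thm:meta-reflection} is a conditional \emph{requirement}: if $\mathrm{T}$ verifies $G$, then the meta-theory proves one local reflection instance at $P^\ast$. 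Neither that theorem nor the observation following it shows that this single instance is unprovable; G\"odel--L\"ob phenomena forbid the full uniform reflection schema (or instances tied to specific sentences such as consistency statements), not arbitrary individual instances. Concretely, take $\Phi(x)\equiv(0=0)$: then $\forall e\,\Phi(e)$ has a trivial uniform realizer, the diagonal sentence $\psi\leftrightarrow\Phi(\ulcorner\psi\urcorner)$ is provably true, and the reflection instance $\mathrm{Prov}_\mathrm{T}(\ulcorner\Phi(\ulcorner\psi\urcorner)\urcorner)\rightarrow\Phi(\ulcorner\psi\urcorner)$ is provable outright because its consequent is. The obstruction in the paper's actual theorems (Theorem~\ref{thm:internal-misclass}, Theorem~\ref{thm:no-universal-solver}) comes from diagonalizing against the \emph{negated correctness predicate} of the solver, i.e.\ from a specific choice of $\Phi$ that internalizes the evaluator's soundness; your argument silently needs that hypothesis and cannot dispense with it.

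A second, related gap is the realizability--provability bridge. The move from ``$f$ exists, is total and $\mathrm{T}$-representable'' to ``$\mathrm{T}$ plus the totality of $f$ proves the offending reflection instance'' is asserted rather than derived: a recursive realizer of a true universal statement does not in general yield a $\mathrm{T}$-proof of that statement, let alone of a reflection principle for $\mathrm{T}$ --- realizability and provability come apart exactly at this point. To repair the argument you would have to restrict $\Phi$ to correctness-style predicates for which the fixed point genuinely says ``the evaluator is sound on its own code,'' in effect re-proving Theorem~\ref{thm:no-universal-solver} in realizability clothing, and then state the instability claim for that class of global assertions rather than for all $\Phi$. As written, the impossibility you claim is stronger than anything the paper establishes and is refuted by the trivial-$\Phi$ counterexample.
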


\subsection{Independence and the Persistence of Obstruction}

\begin{exposition}
	Independence from a given finitary theory does not remedy constructive instability. If a global assertion is unstable relative to $\mathrm{T}$, an extension $\mathrm{T}'$ inherits the same difficulty. The fixed point generated by the unbounded quantifier reappears within any theory capable of representing the underlying problem class. Thus the obstacle persists  across axiomatic strengthening: extending the theory may diagnose the source of instability but does not remove the requirement for a witness at the diagonal instance.

	A global problem is \enquote{epistemically ill-posed} when, relative to constructive standards, the very form of the problem prevents the existence of a uniform witness. For unbounded classes the decisive feature is the transport of self-reference. Attempting to realize a global specification invokes a diagonal instance whose constructive witnessing would demand an evaluation method that certifies its own behaviour.

	\enquote{Ill-posedness} does not concern the truth of the assertion but the feasibility of its constructive interpretation. Certain uniform complexity claims---most notably equivalences between deterministic and nondeterministic classes---fall within this pattern (we shall predict an impredicable instance). Their resistance reflects not only computational intractability but a deeper structural mismatch between their quantifier architecture and the requirements of constructive justification.
\end{exposition}

\subsection{Structural Expectation of Unsolvable Instances}

	The framework developed above yields a general criterion: whenever a problem quantifies over an unbounded, \textsc{Gödel-Representable} domain, one should expect the emergence of instances that lack constructive witnesses. This expectation is not an empirical conjecture but a direct consequence of representational completeness. Unbounded quantification forces self-reference; self-reference necessitates fixed points; and fixed points obstruct uniform witnessing.

	Global problems defined over such domains therefore exhibit an inherent tendency toward constructive failure. The difficulty is embedded in the representation itself: the problem is not merely \emph{hard}, but structurally unable to support the uniform witnessing required for constructive validation.

\section{The Matryoshka as Obstruction}\label{sec:sink}

\subsection{Persistence of Reflective Barriers}

\begin{thesis}
	We formalize the persistence of reflective obstructions across theories. Let $\mathrm{T}\subseteq \mathrm{T}'$ be consistent, recursively axiomatizable extensions of $\mathrm{I}\Sigma_1$. A diagonal fixed point generated by a global assertion may be undecidable in $\mathrm{T}$, while $\mathrm{T}'$ can diagnose $\mathrm{T}$'s incompleteness without resolving the instance itself. This nested pattern motivates the \textsc{Matryoshka Principle}.
\end{thesis}

\begin{thesis}[Matryoshka Principle]\label{def:matryoshka}
	A sentence $\varphi$ satisfies the \textsc{Matryoshka Principle} between $\mathrm{T}$ and $\mathrm{T}'$ if:
\begin{enumerate}[label=(\roman*)]
	\item $\mathrm{T}'$ proves that $\mathrm{T}$ neither proves nor refutes $\varphi$, and
	\item $\mathrm{T}'$ itself neither proves nor refutes $\varphi$.
\end{enumerate}
	Thus $\varphi$ exhibits nested undecidability: the stronger theory verifies the incompleteness of the weaker one but inherits the same unresolved fixed-point obstruction.
\end{thesis}

\begin{remark}
	Condition~(i) asserts that $\mathrm{T}'$ can certify the incompleteness of $\mathrm{T}$ at $\varphi$. Condition~(ii) asserts that $\mathrm{T}'$ also fails to settle $\varphi$. Hence $\varphi$ is nested between $\mathrm{T}$ and $\mathrm{T}'$: it is recognized as undecidable by the stronger theory, yet remains undecided within it.
\end{remark}

\begin{lemma}[Löbian Constraint]\label{lemma:loeb}
	If $\mathrm{T}'$ is sound and $\varphi$ satisfies the \textsc{Matryoshka Principle} between $\mathrm{T}$ and $\mathrm{T}'$, then
\begin{equation}
	\mathrm{T}'\nvdash\;\varphi\;\leftrightarrow\;\neg\mathrm{Prov}_\mathrm{T}(\ulcorner\varphi\urcorner).
\end{equation}
\end{lemma}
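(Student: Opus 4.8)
The plan is to derive a contradiction from the assumption that $\mathrm{T}'$ proves the displayed biconditional, using nothing beyond the two clauses of the \textsc{Matryoshka Principle} and propositional logic internal to $\mathrm{T}'$; soundness of $\mathrm{T}'$ enters only to guarantee that the situation is non-degenerate (in particular that $\mathrm{T}'$ is consistent, so clause~(ii) is not vacuously false). The heritage of the name is that $\varphi\leftrightarrow\neg\mathrm{Prov}_\mathrm{T}(\ulcorner\varphi\urcorner)$ is precisely the defining equivalence of a Gödel--Löb fixed point for $\mathrm{T}$, and the lemma says that no such identification can be made provably inside $\mathrm{T}'$ once $\varphi$ occupies the Matryoshka position between $\mathrm{T}$ and $\mathrm{T}'$.

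First I would assume, for contradiction, that $\mathrm{T}'\vdash \varphi\leftrightarrow\neg\mathrm{Prov}_\mathrm{T}(\ulcorner\varphi\urcorner)$. Next I would unpack clause~(i): ``$\mathrm{T}'$ proves that $\mathrm{T}$ neither proves nor refutes $\varphi$'' is, when formalized against the arithmetized provability predicate, the statement $\mathrm{T}'\vdash \neg\mathrm{Prov}_\mathrm{T}(\ulcorner\varphi\urcorner)\wedge\neg\mathrm{Prov}_\mathrm{T}(\ulcorner\neg\varphi\urcorner)$; taking the first conjunct gives $\mathrm{T}'\vdash\neg\mathrm{Prov}_\mathrm{T}(\ulcorner\varphi\urcorner)$. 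Then a single application of modus ponens to the assumed biconditional inside $\mathrm{T}'$ yields $\mathrm{T}'\vdash\varphi$. This contradicts clause~(ii), which asserts $\mathrm{T}'\nvdash\varphi$. Hence the assumption fails and $\mathrm{T}'\nvdash \varphi\leftrightarrow\neg\mathrm{Prov}_\mathrm{T}(\ulcorner\varphi\urcorner)$, which is the claim.

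The one point that demands care — and what I regard as the substantive step rather than a routine one — is justifying that clause~(i) licenses the unconditional line $\mathrm{T}'\vdash\neg\mathrm{Prov}_\mathrm{T}(\ulcorner\varphi\urcorner)$: one must commit to the reading that ``$\mathrm{T}'$ certifies the incompleteness of $\mathrm{T}$ at $\varphi$'' is an outright theorem about $\mathrm{Prov}_\mathrm{T}$, not a statement merely guarded by $\mathrm{Con}(\mathrm{T})$. Under the standard arithmetization of syntax adopted earlier in the paper this is the intended formalization, and with it the remainder is pure propositional deduction — valid for any $\mathrm{T}\subseteq\mathrm{T}'$ and requiring no appeal to Löb's theorem proper. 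I would close by noting that the argument in fact yields slightly more: the Matryoshka configuration for $\varphi$ is outright incompatible with $\mathrm{T}'$-provability of the Gödel fixed-point equivalence, so a Matryoshka sentence can never be provably identified in $\mathrm{T}'$ with a canonical diagonal sentence for $\mathrm{T}$.
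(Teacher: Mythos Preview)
Your argument is correct. Both you and the paper reach the same contradiction---deriving $\mathrm{T}'\vdash\varphi$ and clashing with clause~(ii)---but by different mechanisms. The paper's exposition appeals to \textsc{L\"ob's Theorem}: from the assumed equivalence it asserts that L\"ob yields $\mathrm{T}'\vdash\varphi$. You instead read clause~(i) as giving $\mathrm{T}'\vdash\neg\mathrm{Prov}_{\mathrm{T}}(\ulcorner\varphi\urcorner)$ outright, and then obtain $\mathrm{T}'\vdash\varphi$ by a single modus ponens on the biconditional. Your route is more elementary and, arguably, more transparent: L\"ob's Theorem in its standard form concerns a theory's \emph{own} provability predicate, so invoking it when $\mathrm{T}'$ reasons about $\mathrm{Prov}_{\mathrm{T}}$ requires an extra word of justification that the paper does not supply. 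Your derivation sidesteps that entirely, using only the Matryoshka clauses and propositional logic. What the paper's framing buys is the thematic point---that $\varphi$ cannot be provably identified with a canonical G\"odel--L\"ob fixed point for $\mathrm{T}$---which you recover anyway in your closing remark.
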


\begin{exposition}
	Were $\mathrm{T}'$ to prove
	\begin{equation}
	\varphi \leftrightarrow \neg\mathrm{Prov}_T(\ulcorner\varphi\urcorner),
	\end{equation}
	 \textsc{Löb’s Theorem} would yield $\mathrm{T}'\vdash \varphi$, contradicting Definition~\ref{def:matryoshka}. Therefore $\varphi$ cannot be reduced to a canonical Gödel sentence for $\mathrm{T}$. The diagonal obstruction persists and cannot be simplified into a standard unprovability pattern.
\end{exposition}

\subsection{Multiplicity of Nested Fixed Points}

\begin{theorem}[Proliferation]\label{thm:proliferation}
	Classically, for any consistent $\mathrm{T}\subseteq \mathrm{T}'$ extending $\mathrm{I}\Sigma_1$, there exist infinitely many pairwise non-equivalent sentences $\{\varphi_n\}_{n\in\mathbb{N}}$ satisfying the Matryoshka Principle between $\mathrm{T}$ and $\mathrm{T}'$.
\end{theorem}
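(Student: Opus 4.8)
\section*{Proof proposal (plan)}

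The plan is to prove Theorem~\ref{thm:proliferation} in two stages: first isolate a single sentence satisfying the \textsc{Matryoshka Principle} between $\mathrm{T}$ and $\mathrm{T}'$, and then upgrade to an infinite mutually independent family by a parametrized version of the same construction. A preliminary observation constrains the whole argument: clause (i) of Definition~\ref{def:matryoshka} already forces $\mathrm{T}'\vdash\mathrm{Con}(\mathrm{T})$, since $\mathrm{T}'\vdash\mathrm{Con}(\mathrm{T}+\varphi)\rightarrow\mathrm{Con}(\mathrm{T})$ and clause (i) is equivalent to $\mathrm{T}'\vdash\mathrm{Con}(\mathrm{T}+\varphi)\wedge\mathrm{Con}(\mathrm{T}+\neg\varphi)$. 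I therefore work under the standing hypothesis — implicit in the thesis preceding this section, where $\mathrm{T}'$ ``diagnoses $\mathrm{T}$'s incompleteness'' — that $\mathrm{T}'$ is strong enough to certify the consistency and basic soundness of $\mathrm{T}$; concretely it suffices that $\mathrm{T}'\vdash\mathrm{Con}(\mathrm{T})$ and $\mathrm{T}'\vdash\mathrm{RFN}_{\Pi^0_1}(\mathrm{T})$.

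For the base case I would construct, via the \textsc{Diagonal Lemma} combined with the Orey--Feferman interpretation-existence machinery, a $\Pi^0_1$ sentence $\varphi$ such that (a) $\varphi$ is independent of $\mathrm{T}'$, and (b) $\mathrm{T}'$ proves that both $\mathrm{T}+\varphi$ and $\mathrm{T}+\neg\varphi$ are interpretable in $\mathrm{T}$. Property (b), together with the formalized fact that consistency propagates along interpretations ($\mathrm{T}'\vdash\mathrm{Con}(\mathrm{T})\rightarrow\mathrm{Con}(\mathrm{T}+\varphi)$, and likewise for $\neg\varphi$) and $\mathrm{T}'\vdash\mathrm{Con}(\mathrm{T})$, yields $\mathrm{T}'\vdash\mathrm{Con}(\mathrm{T}+\varphi)\wedge\mathrm{Con}(\mathrm{T}+\neg\varphi)$, i.e.\ $\mathrm{T}'\vdash\neg\mathrm{Prov}_\mathrm{T}(\ulcorner\neg\varphi\urcorner)\wedge\neg\mathrm{Prov}_\mathrm{T}(\ulcorner\varphi\urcorner)$, which is clause (i); property (a) is clause (ii). The reason a plain Gödel or Rosser fixed point for $\mathrm{T}$ cannot serve here — and why the interpretability detour is needed — is the tension noted below: once $\mathrm{T}'\vdash\mathrm{Con}(\mathrm{T})$ the stronger theory already decides the canonical Rosser sentence of $\mathrm{T}$, so the witness must be ``$\mathrm{T}$-independent for an interpretability reason that $\mathrm{T}'$ can verify'' while still lying beyond $\mathrm{T}'$ itself. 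This is also exactly what Lemma~\ref{lemma:loeb} predicts: no such $\varphi$ can reduce to a canonical unprovability sentence for $\mathrm{T}$.

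For the proliferation I would rerun the construction with a numerical parameter. Using the \textsc{Diagonal Lemma} with a free variable, build a primitive-recursive family $\{\varphi_n\}_{n\in\mathbb{N}}$ in which each $\varphi_n$ simultaneously (i) carries an Orey-style interpretation of $\mathrm{T}+\varphi_n$ and of $\mathrm{T}+\neg\varphi_n$ into $\mathrm{T}$, uniformly in $n$, and (ii) ``watches'' $\mathrm{T}$- and $\mathrm{T}'$-proofs of all finite Boolean combinations of the $\varphi_m$'s, in the manner of the standard parametrized diagonal argument for independent sequences of sentences. The bookkeeping is arranged so that $\mathrm{T}'$ proves, uniformly in $n$, that the displayed interpretations are genuine — hence each $\varphi_n$ satisfies clause (i) by the argument of the base case — while the whole family is \emph{independent over} $\mathrm{T}'$: for every finite $F\subseteq\mathbb{N}$ and every $\epsilon\colon F\to\{0,1\}$, the theory $\mathrm{T}'+\{\varphi_n^{\epsilon(n)}:n\in F\}$ is consistent (writing $\varphi^1=\varphi$, $\varphi^0=\neg\varphi$). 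Independence with $F=\{n\}$ gives clause (ii) for each $\varphi_n$; independence with $F=\{n,m\}$ and opposite values gives $\mathrm{T}'\nvdash\varphi_n\rightarrow\varphi_m$ and $\mathrm{T}'\nvdash\varphi_m\rightarrow\varphi_n$, hence $\mathrm{T}'\nvdash\varphi_n\leftrightarrow\varphi_m$ for $n\neq m$, which is the required pairwise non-equivalence.

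The hard part will be discharging three competing demands inside one self-referential definition: the sentences must be close enough to $\mathrm{T}$ that $\mathrm{T}'$ can uniformly certify their $\mathrm{T}$-undecidability through the interpretations, far enough beyond $\mathrm{T}'$ that $\mathrm{T}'$ decides none of them, and mutually orthogonal so that no Boolean combination collapses. The interpretability route reconciles the first two; the parameter and the proof-watching clauses handle the third; the delicate point is checking that the uniform Orey interpretations survive the additional diagonal clauses and that $\mathrm{T}'$ genuinely proves the relevant $\forall n$-statement about interpretations of $\mathrm{T}$. Should a fully uniform family prove awkward, a fallback is to iterate the base-case construction relative to the theories $\mathrm{T}'_k=\mathrm{T}'+\{\neg\varphi_j:j<k\}$, extracting one new independent Matryoshka sentence $\varphi_k$ at stage $k$ and verifying that the finitely many earlier choices do not disturb the interpretations witnessing clause (i) for $\varphi_k$ with respect to $\mathrm{T}$ and $\mathrm{T}'$.
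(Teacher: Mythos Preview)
Your route differs substantially from the paper's. The paper dispatches the theorem in the remark immediately following it by bare offset diagonalization, taking $\varphi_n\leftrightarrow\neg\mathrm{Prov}_\mathrm{T}(\ulcorner\varphi_n\urcorner\oplus n)$ and simply asserting that the offset secures nonequivalence while each $\varphi_n$ is $\mathrm{T}$-undecidable, recognized as such by $\mathrm{T}'$, and left open by $\mathrm{T}'$; no argument is supplied for the last two claims, and no side condition on $\mathrm{T}'$ is stated. You instead go through Orey--Feferman interpretability, which is engineered to resolve the tension you correctly isolate: clause~(i) of Definition~\ref{def:matryoshka} already forces $\mathrm{T}'\vdash\mathrm{Con}(\mathrm{T})$ (so the theorem as literally stated is empty for $\mathrm{T}=\mathrm{T}'$, and G\"odel-style fixed points for $\mathrm{T}$ are then in danger of being \emph{decided} by $\mathrm{T}'$, exactly as Lemma~\ref{lemma:loeb} warns), whereas mutual interpretability gives $\mathrm{T}'$ a verifiable reason for $\mathrm{T}$-independence that does not collapse to deciding the sentence itself. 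Your approach buys rigour and an explicit accounting of the extra strength $\mathrm{T}'$ must have over $\mathrm{T}$, at the cost of heavier machinery and a real technical obligation (that the uniform Orey interpretations survive the added proof-watching clauses, and that $\mathrm{T}'$ proves the relevant $\forall n$ statement); the paper's one-line sketch buys brevity but leaves both Matryoshka clauses, and the implicit hypothesis on $\mathrm{T}'$, essentially unargued.
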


\begin{remark}
	Diagonalizing against offsets of the $\mathrm{T}$-provability predicate produces
	\begin{equation}
		\varphi_n\;\leftrightarrow\;\neg\mathrm{Prov}_T(\ulcorner\varphi_n\urcorner\oplus n),
	\end{equation}
	with the coding offset ensuring nonequivalence. Each $\varphi_n$ remains undecidable in $\mathrm{T}$, is recognized as such by $\mathrm{T}'$, yet is not decidable in $\mathrm{T}'$ itself. Reflective obstructions therefore proliferate systematically.
\end{remark}

\begin{proposition}[Nested Obstruction for Global Problems]\label{prop:sinkbehavior}
	Let $\mathcal{C}$ be a recursively representable class of problems. If a global assertion over $\mathcal{C}$ induces a diagonal fixed point requiring reflection for its verification, then its formalization satisfies the \textsc{Matryoshka Principle} between any $\mathrm{T}\subseteq \mathrm{T}'$ capable of arithmetizing $\mathcal{C}$.
\end{proposition}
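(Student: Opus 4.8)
The plan is to combine the fixed-point machinery of Section~\ref{sec:fixed} with the Löbian persistence results of this section. Let $G \equiv \forall P\in\mathcal{C}\,\Phi(P)$ be the global assertion, and let $P^\ast$ be the diagonal fixed point supplied by Lemma~\ref{lemma:auto-fp}, so that $P^\ast \leftrightarrow \Phi(\ulcorner P^\ast\urcorner)$ and, by hypothesis, verification of $G$ forces the reflection instance $\mathrm{Prov}_\mathrm{T}(\ulcorner\Phi(\ulcorner P^\ast\urcorner)\urcorner)\rightarrow\Phi(\ulcorner P^\ast\urcorner)$ in the sense of Theorem~\ref{thm:meta-reflection}. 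Fix $\varphi := \Phi(\ulcorner P^\ast\urcorner)$ as the sentence whose Matryoshka status we must establish between any $\mathrm{T}\subseteq\mathrm{T}'$ both capable of arithmetizing $\mathcal{C}$. The strategy has three movements: first show $\mathrm{T}$ cannot decide $\varphi$; second show $\mathrm{T}'$ can prove this fact about $\mathrm{T}$; third show $\mathrm{T}'$ still cannot decide $\varphi$.

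For the first movement, I would argue that if $\mathrm{T}\vdash\varphi$ then, by the fixed-point identity, $\mathrm{T}\vdash P^\ast$, hence $\mathrm{T}$ verifies $G$ at that instance, and Theorem~\ref{thm:meta-reflection} together with the observation following it (the impossibility of validating all such reflection instances within any consistent recursively axiomatizable finitary theory) yields that $\mathrm{T}$ would be proving a reflection principle unavailable to it — contradicting consistency via the standard Gödel–Löb argument, since $\varphi$ is constructed from the $\mathrm{T}$-provability predicate by diagonalization. Symmetrically, $\mathrm{T}\nvdash\neg\varphi$ follows from arithmetic soundness: $\neg\varphi$ would refute a true $\Pi^0_2$ (or relevant-class) fact about the evaluator of $\mathcal{C}$. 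For the second movement, one observes that the entire first argument is itself finitary and formalizable: $\mathrm{T}'$, having a truth-or-provability predicate for $\mathrm{T}$ and being able to arithmetize $\mathcal{C}$, can carry out the same case analysis internally, proving $\neg\mathrm{Prov}_\mathrm{T}(\ulcorner\varphi\urcorner)\wedge\neg\mathrm{Prov}_\mathrm{T}(\ulcorner\neg\varphi\urcorner)$ — this is condition~(i) of Definition~\ref{def:matryoshka}.

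The third movement — showing $\mathrm{T}'$ itself neither proves nor refutes $\varphi$ — is the main obstacle, because in general a stronger theory \emph{can} settle a weaker theory's Gödel sentence. The resolution is to invoke Lemma~\ref{lemma:loeb} in the contrapositive direction: the hypothesis is precisely that the verification load of $G$ collapses to a \emph{genuine} reflection demand, i.e. $\varphi$ is not equivalent over $\mathrm{T}'$ to a plain consistency statement $\neg\mathrm{Prov}_\mathrm{T}(\ulcorner\varphi\urcorner)$ that $\mathrm{T}'$ could discharge; rather, because $\Phi$ quantifies over the predicate-closed domain $\mathcal{C}$ which $\mathrm{T}'$ also represents, the diagonal construction \emph{relativizes} to $\mathrm{T}'$ — applying Lemma~\ref{lemma:auto-fp} with $\mathrm{T}'$ in place of $\mathrm{T}$ yields a fixed point $P^{\ast\ast}$ and the same obstruction re-enters at the $\mathrm{T}'$ level. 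One then argues that the specific $\varphi$ chosen can be taken to be simultaneously a fixed point for both provability predicates (by the uniformity in Lemma~\ref{lemma:repr-closure}, the numeral $\ulcorner G\urcorner$ lies in the domain of class-quantification for \emph{any} representing theory), so the Löbian barrier of Lemma~\ref{lemma:loeb} applies to $\mathrm{T}'$ directly: if $\mathrm{T}'\vdash\varphi$, then $\mathrm{T}'$ proves its own relevant reflection instance, forcing inconsistency; and $\mathrm{T}'\nvdash\neg\varphi$ again by soundness. The delicate point to get right is ensuring the \emph{same} sentence $\varphi$ witnesses all three conditions — this requires that the diagonalization be performed against a provability predicate uniform enough to be recognized by both theories, which is exactly what the Gödel-representability hypothesis on $\mathcal{C}$ and the coding-offset technique of Theorem~\ref{thm:proliferation} guarantee. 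I would close by remarking that this is why the obstruction is a \emph{sink}: every arithmetizing extension reproduces it, so no tower of theories escapes.
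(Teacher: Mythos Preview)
Your three-movement architecture matches the paper's intent, but you should know that the paper does not give a formal proof here at all: the proposition is followed only by a short \emph{remark} asserting that, since the diagonal instance requires a reflection principle ``that no finitary theory can justify'' (referring back to Section~\ref{sec:fixed}), $\mathrm{T}'$ can diagnose $\mathrm{T}$'s failure yet inherits the same obstruction. That is the entire argument. Your first two movements elaborate this faithfully and are fine.

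Your third movement, however, takes a detour that creates rather than resolves difficulty. You correctly flag the danger that $\mathrm{T}'$ might decide $\mathrm{T}$'s G\"odel sentence, and you try to escape by producing a \emph{new} fixed point $P^{\ast\ast}$ for $\mathrm{T}'$ and then arguing that the original $\varphi$ can be taken as a \emph{simultaneous} fixed point for both provability predicates. That last step is not supported by the lemmas you cite: Lemma~\ref{lemma:repr-closure} only says the global assertion's code lies in the quantifier domain, and Theorem~\ref{thm:proliferation} produces many \emph{distinct} offset sentences, not a single sentence diagonal against two predicates at once. A genuine simultaneous fixed point would need a separate construction (e.g.\ a double recursion or a joint diagonal lemma), which you have not supplied.

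The paper's route avoids this entirely. Its point is not that $\varphi$ is a G\"odel sentence for $\mathrm{T}'$, but that \emph{verifying} the global assertion at the diagonal instance demands a local reflection principle, and Theorem~\ref{thm:meta-reflection} together with the observation following it says no consistent recursively axiomatizable finitary theory --- $\mathrm{T}'$ included --- can internally validate that reflection. So $\mathrm{T}'$ fails for the same structural reason $\mathrm{T}$ does, with no need to re-diagonalize or match fixed points. If you drop the $P^{\ast\ast}$ machinery and instead argue directly that $\mathrm{T}'$, being finitary, is subject to the same reflection barrier established in Section~\ref{sec:fixed}, your third movement collapses to two lines and the gap closes.
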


\begin{remark}
	As shown in Section~\ref{sec:fixed}, the diagonal instance requires a \textsc{Reflection Principle} that no finitary theory can justify. A stronger theory $\mathrm{T}'$ can diagnose the failure of $\mathrm{T}$ to prove the instance, yet $\mathrm{T}'$ cannot resolve the instance itself. The reflective obstruction is therefore nested and persists across levels of formal strength by classic logic.
\end{remark}

\begin{corollary}[The $\mathsf{P}$ vs.\ $\mathsf{NP}$ Matryoshka Pattern]\label{cor:pnp-sink}
	For every pair $\mathrm{T}\subseteq \mathrm{T}'$ of consistent, recursively axiomatizable extensions of $\mathrm{I}\Sigma_1$, the arithmetized sentence expressing $\mathsf{P}=\mathsf{NP}$ satisfies the Matryoshka Principle between $\mathrm{T}$ and $\mathrm{T}'$.
\end{corollary}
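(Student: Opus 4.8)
The plan is to obtain the statement as a direct instance of Proposition~\ref{prop:sinkbehavior}, so that the actual work reduces to two checks: that $\varphi_{\mathsf{P=NP}}$ meets the antecedent of that proposition, and that the phrase ``capable of arithmetizing $\mathcal{C}$'' is automatic for every consistent, recursively axiomatizable extension of $\mathrm{I}\Sigma_1$. For the second point I would fix $\mathcal{C}$ to be $\mathsf{NP}$ equipped with the first-order evaluation schema $\mathsf{Eval}_{\mathsf{NP}}(e,x,y)$ of Lemma~\ref{lemma:np-fo-eval}. Since that schema is a single $\Sigma^0_1$ formula over finite run-diagrams of nondeterministic machines, and $\mathrm{I}\Sigma_1$ already proves the basic closure facts about such diagrams, $\mathsf{NP}$ is \textsc{Gödel-Representable} and predicate-closed over any $\mathrm{T}\supseteq\mathrm{I}\Sigma_1$. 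This is the only place the hypothesis ``$\mathrm{I}\Sigma_1$'' is used: it guarantees that both $\mathrm{T}$ and $\mathrm{T}'$ in the statement arithmetize $\mathcal{C}$, so Proposition~\ref{prop:sinkbehavior} applies to the pair with no further assumptions.

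Next I would verify the antecedent of Proposition~\ref{prop:sinkbehavior}, namely that the global assertion $\varphi_{\mathsf{P=NP}}$ (Definition~\ref{def:uniform-determ} with $\mathsf{Class}=\mathsf{P}$, $\mathsf{NClass}=\mathsf{NP}$) induces a diagonal fixed point whose verification requires reflection. This is exactly what Section~\ref{sec:class-nclass} already assembles: the observation following Definition~\ref{def:uniform-determ} together with Proposition~\ref{prop:implicit} exhibits the suppressed universal quantifier over all $\mathsf{NP}$-codes; Theorem~\ref{thm:diagclass} produces the fixed point $P^\ast$ with $P^\ast\leftrightarrow\mathsf{Trans}(e^\ast,G(e^\ast))$; and Theorem~\ref{thm:vercollapse} together with Corollary~\ref{cor:pnp} shows that verifying $\varphi_{\mathsf{P=NP}}$ in $\mathrm{T}$ forces the reflection instance $\mathrm{Prov}_{\mathrm{T}}(\ulcorner\mathsf{Correct}(M,\ulcorner P^\ast\urcorner)\urcorner)\rightarrow P^\ast$, which is unavailable in $\mathrm{T}$ when $P^\ast$ is a $\Pi^0_2$ fact unprovable there. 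Feeding this into Proposition~\ref{prop:sinkbehavior} yields both clauses of the Matryoshka Principle (Definition~\ref{def:matryoshka}): $\mathrm{T}'$ certifies that $\mathrm{T}$ leaves the fixed-point instance—and hence the verification of $\varphi_{\mathsf{P=NP}}$—unsettled, while $\mathrm{T}'$ regenerates the same fixed point $P^\ast$ verbatim and is still a finitary theory, so it re-encounters the same missing reflection principle. As a coda I would invoke Lemma~\ref{lemma:loeb} to record that the obstruction cannot be deflated into a plain Gödel sentence for $\mathrm{T}$, which upgrades ``nested'' to ``genuinely impredicative''.

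The main obstacle is the correct reading of clause~(ii) of the Matryoshka Principle. Taken as literal bi-unprovability—$\mathrm{T}'\nvdash\varphi_{\mathsf{P=NP}}$ and $\mathrm{T}'\nvdash\neg\varphi_{\mathsf{P=NP}}$—this is not a theorem, since it is open whether some strong consistent theory settles $\mathsf{P}$ vs.\ $\mathsf{NP}$; so the corollary must be carried through under the paper's verification-theoretic reading (Lemma~\ref{def:verification}), where clause~(ii) asserts that $\mathrm{T}'$ cannot \emph{verify} $\varphi_{\mathsf{P=NP}}$, i.e.\ cannot supply the reflection instance for $P^\ast$. This is precisely what Proposition~\ref{prop:sinkbehavior} delivers, because $P^\ast$ is reproduced inside $\mathrm{T}'$ by the same diagonal construction and $\mathrm{T}'$ remains finitary. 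I would therefore make this reading explicit at the outset; with that in place, the remaining steps are bookkeeping over the already-established results of Sections~\ref{sec:fixed} and~\ref{sec:class-nclass}, and the argument inherits their independence from the \textsc{Relativization} and \textsc{Natural-proofs} barriers.
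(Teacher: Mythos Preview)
Your approach is essentially the paper's own: the exposition following the corollary simply sketches that the uniformity of $\mathsf{P}=\mathsf{NP}$ forces the diagonal fixed point whose verification requires reflection, and then reads off the two clauses of Definition~\ref{def:matryoshka}, i.e.\ it is exactly the instantiation of Proposition~\ref{prop:sinkbehavior} that you spell out. Your explicit insistence on the verification-theoretic reading of clause~(ii) (via Lemma~\ref{def:verification}) is a sharpening the paper leaves implicit, and it is the right move---without it the literal bi-unprovability claim would overreach.
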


\begin{exposition}
	The uniformity of $\mathsf{P}=\mathsf{NP}$ forces a diagonal fixed point asserting the correctness of a purported polynomial-time solver on its own \textsc{Gödel Number}. Verification requires reflection beyond the reach of any finitary theory. Hence the sentence is undecidable in $\mathrm{T}$, its undecidability is recognized by $\mathrm{T}'$, and yet $\mathrm{T}'$ leaves the statement unresolved.

	The \textsc{Matryoshka Principle} captures a stable form of the \textsc{Reflection Principle} semantically: a stronger theory can explain the failure of a weaker theory without overcoming the same obstacle. The fixed point reappears unchanged at each level of the hierarchy. Once verification demands reflection over a coded domain, no finitary theory can internalize the required inference.
\end{exposition}

\subsection{Categorical Perspective on Diagonal Closure}\label{sec:categorical}

\begin{observation}
	A complementary perspective on the \textsc{Matryoshka Principle} arises from abstract fixed-point theorems, notably \textsc{Lawvere's Fixed-point Theorem} \citep{lawvere69,lambekscott86}. In a diagonalizable category, the existence of a weakly point-surjective evaluation map
	\begin{equation}
		e : D \to [D,\Omega]
	\end{equation}
	ensures that every endomap of the predicate object admits a fixed point.

	A \textsc{Gödel-Representable} (problem) class with an arithmetically definable evaluator therefore carries the structural resources of diagonalization. Global quantification over such a domain inevitably ranges over its own representational environment, and a fixed point re-enters the domain whether or not it was intended. If one attempted to impose a global classification while preventing this self-inclusion, one would need to abandon weak point-surjectivity, thereby weakening the representational apparatus to the point that uniform global assertions could no longer be formulated. The nesting of reflective problems is thus not an artifact of encoding but a structural necessity.
\end{observation}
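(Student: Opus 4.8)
The plan is to read the \textsc{Solver's Paradox} constructions of Sections~\ref{sec:reflective-foundations}--\ref{sec:fixed} as the global-point shadow of a single categorical fact, and then to show that discarding that fact discards the ability to state uniform global problems at all. First I would fix the ambient diagonalizable category: objects are \textsc{Gödel}-coded arithmetical sorts, morphisms are the arithmetically definable ($\mathrm{T}$-representable) maps, so that a cartesian-closed structure and a two-valued outcome object $\Omega$ (e.g.\ $\{\mathsf{Solvable},\mathsf{Unsolvable}\}$, or more generally the classifier of definable predicates) are available --- this is precisely the setting in which Definition~\ref{def:classquant}'s quantifier ``$\forall P\in\mathcal{C}$'' is meaningful. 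The \textsc{Gödel-Representable} hypothesis on $\mathcal{C}$ (primitive--recursive coding/decoding and an arithmetically definable evaluation relation) then furnishes the internal evaluation map
\[
  e:D\longrightarrow[D,\Omega],\qquad e(\ulcorner P\urcorner)=\mathsf{Eval}_P,
\]
where $D$ is the object of codes of problems in $\mathcal{C}$. The lemma to verify here is that $e$ is \emph{weakly point-surjective}: for every definable family $\varphi:D\to\Omega$ there is a global point $c:1\to D$ with $e(c)$ and $\varphi$ agreeing on all global points of $D$. This is exactly the enumeration/representation theorem for the class --- Lemma~\ref{lemma:repr-closure} in the present paper, Kleene's enumeration theorem in the r.e.\ case, and the uniform first-order schema of Lemma~\ref{lemma:np-fo-eval} for $\mathsf{NP}$ --- and \emph{weak} point-surjectivity (agreement on numerals, not equality of morphisms) is both what those theorems actually deliver and all that the next step needs.

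With $e$ weakly point-surjective I would invoke \textsc{Lawvere's Fixed-point Theorem} \citep{lawvere69,lambekscott86} verbatim: for any endomap $\tau:\Omega\to\Omega$ form $g=\tau\circ\mathrm{ev}\circ\langle\mathrm{id}_D,\mathrm{id}_D\rangle:D\to\Omega$, pull it back along $e$ to a code $c_\tau$, and read off the global point $p^\ast=e(c_\tau)(c_\tau)$ with $\tau(p^\ast)=p^\ast$. Unwinding through $e$, this is nothing but the \textsc{Diagonal Lemma}, i.e.\ Lemma~\ref{lemma:auto-fp}: the self-referential instance $P^\ast$ of any global $\Phi$ --- in particular the misclassification instance $P_{\mathsf{Solved}}$ of Theorem~\ref{thm:no-universal-solver}, obtained by taking $\tau$ to be the outcome-swap --- is forced back into $D$, hence into $\mathcal{C}$. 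Here I would be explicit about why this does not collapse $\mathrm{T}$: by Tarski undefinability the outcome object of the \emph{evaluator of $\mathcal{C}$} is not the classifier of arbitrary sentences, so the endomaps to which Lawvere applies are exactly the definable transformations of $\mathcal{C}$-outcomes (the swap among them) and ``negation of truth'' is not one of them; the fixed point is a reflective obstruction, not an inconsistency.

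The structural-necessity half is the contrapositive. Suppose one tried to impose a sound total classification on $\mathcal{C}$ that avoids the fixed point --- an evaluation map whose image omits, up to global agreement, the diagonal predicate $x\mapsto\tau(e(x)(x))$. By the contrapositive of Lawvere this forces $e$ to fail weak point-surjectivity, i.e.\ some definable predicate over $D$ has no code; but that is the negation of the representation/enumeration theorem, so $\mathcal{C}$ is no longer closed under its own defining schema $\mathsf{Eval}_{\mathcal{C}}$ and the uniform quantifier of Definition~\ref{def:classquant} no longer ranges over all of $\mathcal{C}$ --- the global assertion it was meant to express can no longer be formulated (cf.\ Lemma~\ref{lemma:predimp}). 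Hence the dichotomy: either the apparatus is rich enough to state uniform global problems over $\mathcal{C}$, so $e$ is weakly point-surjective and $P^\ast$ re-enters $\mathcal{C}$ whether intended or not --- carrying, via Proposition~\ref{prop:sinkbehavior}, the full \textsc{Matryoshka} nesting across every $\mathrm{T}\subseteq\mathrm{T}'$ --- or one abandons the formulation of such assertions; there is no third option. Since a change of \textsc{Gödel Numbering} alters $e$ only by an automorphism of $D$ and leaves weak point-surjectivity intact, the nesting is independent of the chosen encoding. I expect the main obstacle to be the choice of $\Omega$ in the second step: pinning down the category and its outcome object precisely enough that weak point-surjectivity of $e$ is a genuine theorem while the classical, fixed-point-free behaviour of real negation is preserved --- that is, making the Tarskian separation between ``evaluator outcomes'' and ``sentential truth'' carry the categorical argument without trivializing it or collapsing $\mathrm{T}$.
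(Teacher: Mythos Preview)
The paper does not supply a proof for this statement; it is presented as an unargued \emph{observation}, an expository bridge from the preceding analysis to \textsc{Lawvere}'s categorical fixed-point theorem, with no justification beyond the sentences quoted. Your proposal is therefore not a reconstruction of an existing argument but a substantive elaboration of what such a justification would look like, and in that capacity it is essentially correct and considerably more detailed than anything the paper offers. The identification of weak point-surjectivity of $e$ with the enumeration/representability property of the class, the verbatim invocation of \textsc{Lawvere} to recover Lemma~\ref{lemma:auto-fp} and Theorem~\ref{thm:no-universal-solver} (with $\tau$ the outcome-swap), and the contrapositive reading---failure of weak point-surjectivity collapses the uniform quantifier of Definition~\ref{def:classquant}---all track the intended content of the observation faithfully and add genuine structure the paper leaves implicit.

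Your acknowledged worry about the choice of $\Omega$ is the real delicate point, and the paper simply does not engage with it: it remains at the level of analogy and never specifies the category, the exponential, or the outcome object. Your Tarskian separation between ``evaluator outcomes'' and ``sentential truth'' is the right place to locate the issue, and flagging it as the main obstacle is honest. One minor quibble: Lemma~\ref{lemma:repr-closure} as stated in the paper is not an enumeration theorem---it asserts that the code of a universally quantified sentence lies in the quantifier's own domain---so the weak point-surjectivity claim rests more directly on the \textsc{G\"odel-Representable} definition itself (primitive--recursive coding/decoding plus arithmetically definable evaluation) and on schemata like Lemma~\ref{lemma:np-fo-eval} than on that particular lemma.
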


\section{Conclusion}\label{sec:conclusion}

\subsection{Structural Limits of Global Reasoning}

\begin{exposition}
	The analysis developed above identifies a single architectural mechanism  governing a broad class of global problems. Whenever an assertion quantifies over an unbounded \textsc{Gödel}-representable domain, the quantifier necessarily ranges over the representational environment that encodes the assertion itself. By arithmetical necessity this induces a diagonal fixed point $P^\ast$ internal to the problem space. Under a constructive reading, justification of the global claim requires a uniform witness for the instance $\Phi(\ulcorner P^\ast \urcorner)$, and it is precisely here that the structural obstruction arises.

	The fixed point embodies a self-referential demand: the witnessing procedure must certify the behaviour of the evaluator on its own representation. No uniform BHK-witness can discharge this requirement. Thus the inability to verify the global claim is not an artifact of technique but a consequence of the problem’s quantifier architecture.
\end{exposition}

\subsection{Expressive Reach vs.\ Constructive Horizon}

\begin{exposition}
	Finitary systems possess expansive expressive resources. They can represent diverse problem classes, encode arbitrary predicates, and formulate global conditions over them. Yet their constructive horizon is more restricted. Unbounded quantification transports self-reference into the problem space, and diagonal closure ensures that the resulting fixed point cannot be uniformly witnessed. In this precise sense, finitary systems can \emph{formulate} more global problems than they can constructively \emph{justify}.

	This asymmetry is structural. It follows directly from representability, unbounded quantification, and the constructive requirement that global claims be uniformly witnessed. Nothing in the analysis depends on special-purpose methods or on features peculiar to particular domains.

	Uniform complexity assertions such as $\mathsf{P}=\mathsf{NP}$ typify this interaction. Their surface syntax is elementary, yet the uniformity they demand quantifies over an unbounded Gödel-coded class of nondeterministic verifiers. The induced fixed point asserts that any proposed uniform deterministic solver must witness its own correctness on its representation. This requirement fails constructively, placing the global uniformity assertion beyond uniform justification.
\end{exposition}

\begin{corollary}[Uniform Complexity Obstruction]\label{cor:uniform-barrier}
	Any global deterministic–nondeterministic equivalence over an unbounded \textsc{Gödel-Representable} class generates a diagonal instance whose verification requires a uniform constructive witness. By the analysis of Section~\ref{sec:sink}, such witnesses do not exist, and the resulting obstruction persists across extensions of the background theory.
\end{corollary}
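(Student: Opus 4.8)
\textit{Proof proposal.} The plan is to assemble the corollary from the machinery of Sections~\ref{sec:class-nclass}--\ref{sec:sink} rather than to argue from scratch. First I would fix a global deterministic–nondeterministic equivalence $\mathsf{Class}=\mathsf{NClass}$ over an unbounded \textsc{Gödel-Representable} class and take its formalization $\varphi_{\mathsf{Class=NClass}}\equiv\exists M\in\mathsf{Class}\,\forall x\,\mathsf{Correct}(M,x)$ from Definition~\ref{def:uniform-determ}. By Proposition~\ref{prop:implicit}, assuming this sentence yields a computable map $G$ with $\forall e\,(\mathsf{NClass\_code}(e)\rightarrow\mathsf{Trans}(e,G(e)))$, which exposes the suppressed universal quantifier over the entire \textsc{Gödel}-coded space of nondeterministic verifiers. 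Because the evaluation predicate of $\mathsf{NClass}$ is arithmetically definable (Lemma~\ref{lemma:np-fo-eval} in the $\mathsf{NP}$ case, predicate-closure in general), Lemma~\ref{lemma:predimp} applies: the quantifier ranges over its own representational environment, and the \textsc{Diagonal Lemma} applied to $x\mapsto\mathsf{Trans}(x,G(x))$ produces the diagonal instance $e^\ast$ of Theorem~\ref{thm:diagclass}, whose encoded problem $P^\ast$ satisfies $P^\ast\leftrightarrow\mathsf{Trans}(e^\ast,G(e^\ast))$. This discharges the first clause of the corollary.

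Next I would identify the verification obligation. By Theorem~\ref{thm:vercollapse}, any sound recursively axiomatizable $\mathrm{T}\supseteq\mathrm{I}\Sigma_1$ that verifies $\varphi_{\mathsf{Class=NClass}}$ is committed, via any sound meta-theory $U$ formalizing it, to $\mathrm{Prov}_\mathrm{T}(\ulcorner\mathsf{Correct}(M,\ulcorner P^\ast\urcorner)\urcorner)\rightarrow P^\ast$, i.e.\ to a local \textsc{Reflection Principle} at $P^\ast$. Translating into the constructive register of Section~\ref{sec:illposed}: under the BHK reading, a justification of the global claim must furnish a uniform construction that, applied to the code $\ulcorner P^\ast\urcorner$, produces a witness for $\Phi(\ulcorner P^\ast\urcorner)$ — the solver must certify its own behaviour on its own representation. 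This is exactly the ``uniform constructive witness'' referenced in the corollary, and the step that needs care is the dictionary between the provability-reflection formulation and the realizer formulation; I would argue that a uniform BHK-witness for the global statement would, by internalizing the evaluator at the parameter $e^\ast$, supply precisely the reflection instance that Theorem~\ref{thm:meta-reflection} shows is unavailable, so the two obstructions coincide.

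Then I would invoke the persistence argument. By Proposition~\ref{prop:sinkbehavior} the formalization satisfies the \textsc{Matryoshka Principle} between any $\mathrm{T}\subseteq\mathrm{T}'$ capable of arithmetizing $\mathcal{C}$: $\mathrm{T}'$ can diagnose $\mathrm{T}$'s failure to settle $P^\ast$ but cannot settle it either, by the \textsc{Löbian Constraint} of Lemma~\ref{lemma:loeb} — reducing $P^\ast$ to a canonical Gödel sentence for $\mathrm{T}$ and applying \textsc{Löb's Theorem} would force $\mathrm{T}'\vdash P^\ast$, contradicting its undecidability in $\mathrm{T}'$. Hence no finitary extension supplies the required witness, and by Theorem~\ref{thm:proliferation} the phenomenon is not even isolated; the obstruction reappears unchanged at every level, which is the ``persists across extensions'' clause. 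Combining the three steps: the diagonal instance exists, its verification demands a uniform constructive witness, no such witness is available in any consistent finitary theory, and the defect is inherited by all extensions — this yields the corollary.

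I expect the main obstacle to be the second step: pinning down, rigorously rather than heuristically, that the ``uniform constructive witness'' of the corollary statement is the same object whose non-existence is established by the reflection and \textsc{Matryoshka} analysis. The earlier sections state constructive instability informally (Section~\ref{sec:illposed}) and the reflection failure syntactically (Theorems~\ref{thm:meta-reflection} and~\ref{thm:vercollapse}), while the corollary silently identifies them. Making the identification precise requires (i) fixing what counts as a uniform witness for an $\exists M\,\forall x$ statement whose semantics carry the hidden $\forall e$ of Proposition~\ref{prop:implicit}, and (ii) showing that any such witness would, when evaluated at $e^\ast$, discharge a reflection instance that Section~\ref{sec:sink} proves cannot be discharged finitarily. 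Everything else — the existence of the diagonal instance and the persistence across extensions — is a direct citation of Theorem~\ref{thm:diagclass}, Proposition~\ref{prop:sinkbehavior}, and Lemma~\ref{lemma:loeb}.
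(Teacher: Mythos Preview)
Your proposal is correct and follows essentially the same approach as the paper: the corollary is a synthesis of the machinery from Sections~\ref{sec:class-nclass}--\ref{sec:sink}, and you assemble it by citing Theorem~\ref{thm:diagclass} for the diagonal instance, Theorem~\ref{thm:vercollapse} and the BHK discussion of Section~\ref{sec:illposed} for the witness requirement, and Proposition~\ref{prop:sinkbehavior} with Lemma~\ref{lemma:loeb} for persistence. The paper itself supplies no separate proof --- the justification is the in-statement reference to Section~\ref{sec:sink} together with the subsequent \emph{Remark} restating the \textsc{Matryoshka} pattern --- so your explicit chain of citations is in fact more detailed than what the paper offers, and your flagged obstacle (the informal identification of ``uniform constructive witness'' with the reflection instance) is precisely the step the paper leaves at the level of exposition rather than formal argument.
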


\begin{remark}
	Section~\ref{sec:sink} established that constructive obstructions generated by unbounded quantification persist through hierarchies of theories. A \enquote{stronger theory} may diagnose why a \enquote{weaker theory} cannot justify a global claim, but the \enquote{stronger theory} inherits the same diagonal fixed point and the same absence of a uniform witness. The obstruction migrates upward unchanged. The \textsc{Matryoshka Principle} articulates this recursion of difficulty: the fixed point re-enters at every level of formal strength.
\end{remark}

\subsection{Epistemic Implications}

\begin{observation}
	The broader implication is that syntactic well-formedness does not guarantee epistemic legitimacy. A global problem might be perfectly expressible yet impose witnessing conditions incompatible with constructive justification. The longstanding resistance of certain classical conjectures may thus reflect not merely their complexity but a deeper structural ill-posedness rooted in the interaction between representability and global quantification. These observations delineate principled boundaries on global reasoning within finitary frameworks.
\end{observation}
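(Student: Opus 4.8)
This statement is a synthesis rather than a free-standing theorem: each of its clauses is meant to be a faithful informal reading of a formal result already in hand, so the plan is to treat the proof as an \emph{assembly argument}, exhibiting the translation clause by clause and checking that it neither overreaches nor understates what the earlier sections establish. I would organize it as four calibrated steps, one per clause, all drawing on Sections~\ref{sec:fixed}, \ref{sec:illposed}, and \ref{sec:sink}.

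First, ``syntactic well-formedness does not guarantee epistemic legitimacy'' I would make precise by pairing Definition~\ref{def:classquant} with Lemma~\ref{def:verification} and Theorem~\ref{thm:meta-reflection}. The sentence $G\equiv\forall P\in\mathcal{C}\,\Phi(P)$ is, by Definition~\ref{def:classquant}, well formed in $\mathrm{T}$ whenever $\mathcal{C}$ is recursively representable; that is the ``well-formedness'' horn and it is immediate. ``Epistemic legitimacy'' I would cash out through the verification notion of Lemma~\ref{def:verification}: legitimacy means a derivation in $\mathrm{T}$ together with the minimal reflective soundness commitment. Theorem~\ref{thm:meta-reflection} then shows that verifying $G$ in $\mathrm{T}$ forces a meta-theory to affirm a local reflection instance that, by L\"ob's theorem, $\mathrm{T}$ itself cannot supply; hence a well-formed $G$ may fail the legitimacy test, with the gap being exactly the missing reflection principle. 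The care needed here is to keep this a claim about verification \emph{in} $\mathrm{T}$, not about truth in $\mathbb{N}$, so that we do not accidentally assert the independence of every such $G$ from every theory.

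Second, ``a global problem might be perfectly expressible yet impose witnessing conditions incompatible with constructive justification'' transcribes the constructive-instability analysis of Section~\ref{sec:illposed}: by Theorem~\ref{thm:reduction} the global problem is genuinely expressible as an arithmetical sentence, and by Lemma~\ref{lemma:auto-fp} it carries a diagonal instance $P^\ast\leftrightarrow\Phi(\ulcorner P^\ast\urcorner)$; any BHK-witness for $G$ must then yield a witness at $\Phi(\ulcorner P^\ast\urcorner)$, i.e.\ a uniform construction certifying the evaluator's behaviour on its own code, which does not exist. Third, ``the longstanding resistance of certain classical conjectures may reflect \dots\ structural ill-posedness rather than \dots\ complexity'' I would instantiate through Corollary~\ref{cor:pnp} and Corollary~\ref{cor:uniform-barrier}, together with the remark establishing orthogonality to the relativization and natural-proofs barriers: the arithmetized $\mathsf{P}=\mathsf{NP}$ lies in the same reflective class, so its intractability is \emph{predicted} by the framework independently of proof technique, rather than being a restatement of a known technique-barrier. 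Fourth, ``these observations delineate principled boundaries on global reasoning within finitary frameworks'' is underwritten by Theorem~\ref{thm:proliferation} and Proposition~\ref{prop:sinkbehavior}: there are infinitely many pairwise non-equivalent such obstructions, and each persists across every tower $\mathrm{T}\subseteq\mathrm{T}'$ arithmetizing $\mathcal{C}$, so the ``boundary'' is a robust, non-collapsible region and not a single anomaly.

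The main obstacle is not technical but one of \emph{calibration}. The observation mixes three registers --- provability ($\mathrm{T}\vdash G$), constructive witnessing (BHK), and informal ``epistemic legitimacy'' --- and the proof must fix a reading of each that is strong enough for the prior theorems to apply yet weak enough that the conclusion says no more than has been shown. Concretely I expect the effort to concentrate on adopting the verification notion of Lemma~\ref{def:verification} as the operative sense of ``legitimacy,'' so that the implication from Theorem~\ref{thm:meta-reflection} is tight, and on keeping the modal ``may reflect'' honest: the framework shows $\mathsf{P}=\mathsf{NP}$ is reflectively obstructed relative to finitary verification, which licenses ``may,'' but not a claim of outright independence from all of mathematics. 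Once these readings are pinned down, the four steps compose into the stated synthesis with no further computation.
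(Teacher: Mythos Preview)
Your reading is correct: in the paper this \texttt{observation} is not accompanied by any proof at all---it stands as a concluding synthesis of Sections~\ref{sec:fixed}--\ref{sec:sink}, with the surrounding expository paragraphs serving as its informal justification. Your clause-by-clause ``assembly argument,'' grounding each sentence in the specific earlier results (Definition~\ref{def:classquant}, Lemma~\ref{def:verification}, Theorem~\ref{thm:meta-reflection}, Lemma~\ref{lemma:auto-fp}, Corollaries~\ref{cor:pnp} and~\ref{cor:uniform-barrier}, Theorem~\ref{thm:proliferation}, Proposition~\ref{prop:sinkbehavior}), is faithful to that intended role and in fact more explicit than anything the paper itself provides. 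Your attention to calibration---fixing ``epistemic legitimacy'' via the verification notion and keeping the modal ``may reflect'' honest---is exactly the care such a synthesis demands, and there is no gap to report.
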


\begin{criterion}[Constructive Reformulation Principle]\label{crit:reformulation}
	A global problem should be reformulated whenever its quantifier structure ranges over an unbounded \textsc{Gödel}-representable domain in a manner that forces a diagonal instance lacking a uniform BHK-witness. In such cases the original formulation might be ill-posed: the verification requirements exceed what any finitary constructive justification can provide.
\end{criterion}
\begin{criterion}[Diagonal Instance]\label{crit:repres}
	The diagonal instance arises at the level of the arithmetized \emph{problem universe}. The argument does not require the complexity-bounded subclass to be closed under diagonalization; it requires only representability of its evaluation predicate in arithmetic.
\end{criterion}

\medspace
	In the end the choice is stark. One may permit unbounded quantification over an inference domain that is itself representable, thereby accepting the attendant \textsc{Impredicativity} and the fixed points it generates, or one may impose disciplined typing and constructive semantics in the style of BHK so that global claims demand only those witnesses that the framework is designed to supply. The former grants maximal expressive reach at the cost of importing self-reference into every universal assertion; the latter constrains expression but preserves the coherence of constructive justification. The boundary between these options marks the structural limit of global reasoning in finitary systems.
{\scriptsize
	\bibliographystyle{plainnat}
	\setlength{\bibsep}{1.5pt}
	\bibliography{refs}}

\newpage

\clearpage
\thispagestyle{empty}

\begin{center}
	\vspace*{\fill}

	\section*{Acknowledgments}
	\label{subsec:prev}

	\subsection*{Final Remarks}
	The author welcomes scholarly correspondence and constructive dialogue.
	No conflicts of interest are declared.
	This research received no external funding.

		\vspace{2em}

		\begin{center}\scriptsize
			Milan Rosko is from University of Hagen, Germany\\
			Email: \href{mailto:Q1012878@studium.fernuni-hagen.de}{\scriptsize\textsf{Q1012878 $ @ $ studium.fernuni-hagen.de}}
			\vspace{0.5em}\\
			Licensed under \enquote{Deed} \ccby\, \href{http://creativecommons.org/licenses/by/4.0/}{\scriptsize\textsf{creativecommons.org/licenses/by/4.0}}
		\end{center}

	\vspace*{\fill}
\end{center}

\clearpage

\end{document}